\newif\ifPDF
\newtheorem{theorem}{Theorem}[section]
\newtheorem{lemma}[theorem]{Lemma}
\newtheorem{proposition}[theorem]{Proposition} 
\newtheorem{remark}[theorem]{Remark} 
\newtheorem{corollary}[theorem]{Corollary}
\numberwithin{equation}{section}
\newcommand{\aver}[1]{\langle {#1} \rangle}
\newcommand{\diag}{{\rm diag}}
\newcommand{\dsum}{\displaystyle\sum}
\newcommand{\dint}{\displaystyle\int}
\newcommand{\bxi}{\boldsymbol \xi}
\newcommand{\bbR}{\mathbb R} \newcommand{\bbS}{\mathbb S}
 \newcommand{\bn}{\mathbf n}
 \newcommand{\bv}{\mathbf v} 
 \newcommand{\bx}{\mathbf x} 
\newcommand{\by}{\mathbf y}
 \newcommand{\bH}{\mathbf H}
\newcommand{\cA}{\mathcal A} 
\newcommand{\cC}{\mathcal C} \newcommand{\cD}{\mathcal D} 
\newcommand{\cE}{\mathcal E} 
 \newcommand{\cH}{\mathcal H}
\newcommand{\cI}{\mathcal I} 
\newcommand{\cK}{\mathcal K} \newcommand{\cL}{\mathcal L}
\newcommand{\cM}{\mathcal M}
\newenvironment{keywords}
{\noindent{\bf Key words.}\small}{\par\vspace{1ex}}
\title{Quantitative photoacoustic imaging in radiative transport regime}
\author{
	Alexander V. Mamonov\thanks{
		Institute for Computational Engineering and Sciences, 
		University of Texas, 
		Austin, TX 78712;
		\href{mailto:mamonov@ices.utexas.edu}{mamonov@ices.utexas.edu}
	}\and
	Kui Ren\thanks{
		Department of Mathematics, 
		University of Texas, 
		Austin, TX 78712;
		\href{mailto:ren@math.utexas.edu}{ren@math.utexas.edu}
	}
}
\begin{document}

\maketitle



\begin{abstract}
The objective of quantitative photoacoustic tomography (QPAT) is to reconstruct optical and thermodynamic properties of heterogeneous media from data of absorbed energy distribution inside the media. There have been extensive theoretical and computational studies on the inverse problem in QPAT, however, mostly in the diffusive regime. We present in this work some numerical reconstruction algorithms for multi-source QPAT in the radiative transport regime with energy data collected at either single or multiple wavelengths. We show that when the medium to be probed is non-scattering, explicit reconstruction schemes can be derived to reconstruct the absorption and the Gr\"uneisen coefficients. When data at multiple wavelengths are utilized, we can reconstruct simultaneously the absorption, scattering and Gr\"uneisen coefficients. We show by numerical simulations that the reconstructions are stable.
\end{abstract}


\begin{keywords}
	Quantitative photoacoustic tomography (QPAT), sectional photoacoustic tomography, radiative transport equation, inverse transport problem, interior data,  Born approximation, iterative reconstruction.
\end{keywords}




\section{Introduction}
\label{SEC:intro}

In photoacoustic tomography (PAT) experiment, we send near infra-red (NIR) light into a biological tissue. The tissue absorbs part of the incoming light and heats up due to the absorbed energy. The heating then results in expansions of the tissue and the expansion generates compressive (acoustic) waves. We then measure the time-dependent acoustic signal that arrives on the surface of the tissue. From the knowledge of these acoustic measurements, we are interested in reconstructing the absorption and scattering properties of the tissue, as well as the thermodynamic Gr\"uneisen parameter which measures the photoacoustic efficiency of the tissue. We refer interested readers to~\cite{Ammari-Book08,Bal-IO12,Beard-IF11,CoLaArBe-JBO12,CoLaBe-SPIE09,Kuchment-MLLE12,KuKu-HMMI10,LiWa-PMB09,PaSc-IP07,Wang-DM04,Wang-IEEE08,XuWa-RSI06} for overviews of the field of photoacoustic imaging.

The propagation of NIR light in biological tissues is accurately modeled by the radiative transport equation which describes the distribution of photons in the phase space. To be precise, let $\Omega\in\bbR^d$ ($d\ge 2$) be the domain of interest with smooth boundary $\partial\Omega$ and $\bbS^{d-1}$ be the unit sphere in $\bbR^d$. We denote by $X=\Omega\times \bbS^{d-1}$ the phase space and $\Gamma_\pm=\{(\bx,\bv): (\bx,\bv)\in\partial\Omega\times \bbS^{d-1}\ \text{s.t.}\ \pm\bn(\bx)\cdot \bv >0\}$ its incoming and outgoing boundaries, $\bn(\bx)$ being the unit outer normal vector at $\bx\in\partial\Omega$. The radiative transport equation for photon density $u(\bx,\bv)$ can then be written as~\cite{Arridge-IP99,Bal-IP09,Ren-CiCP10}:
\begin{equation}\label{EQ:ERT}
	\begin{array}{rcll}
  	\bv\cdot\nabla u(\bx,\bv) + 
  	\sigma_a(\bx)u(\bx,\bv)
  	&=& \sigma_s(\bx)K(u)(\bx,\bv)
  	&\mbox{ in }\ X\\
       u(\bx,\bv) &=& g(\bx,\bv) &\mbox{ on }\ \Gamma_{-},
	\end{array}
\end{equation}
Here the positive functions $\sigma_a(\bx)$ and $\sigma_s(\bx)$ are the absorption and the scattering coefficients, respectively. The function $g(\bx,\bv)$ is the incoming illumination source. The scattering operator $K$ is defined as 
\[
	K(u)(\bx,\bv)=
		\dint_{\bbS^{d-1}}\cK(\bv, \bv')u(\bx,\bv')d\bv' - u(\bx,\bv)
\]
where $d\bv$ is the normalized measure on $\bbS^{d-1}$ in the sense that $\int_{\bbS^{d-1}}d\bv=1$, and the kernel $\cK(\bv, \bv')$ describes the way that photons traveling in direction $\bv'$ getting scattered into direction $\bv$, and satisfies the normalization condition $\int_{\bbS^{d-1}}\cK(\bv, \bv')d \bv'=1,\  \forall\ \bv\in \bbS^{d-1}$. In practical applications in biomedical optics, $\cK$ is often taken to be the Henyey-Greenstein phase function~\cite{Arridge-IP99,HeGr-AJ41,WeVa-Book95} which depends only on the product $\bv\cdot\bv'$; see equation~\eqref{EQ:HG} in Section~\ref{SEC:Num}.

The photon energy that is absorbed at location $\bx\in\Omega$ per unit volume, $E(\bx)$, is the product of the absorption coefficient and the fluence distribution:
\begin{equation}\label{EQ:Energy}
	E(\bx) = \int_{\bbS^{d-1}} \sigma_a(\bx) u(\bx,\bv)d \bv.
\end{equation}
The heating due to this absorbed energy generates an initial pressure field, denoted by $H$, in the tissue that depends on the thermodynamic property of tissue and is proportional to $E$:
\begin{equation}\label{EQ:Data}
	H(\bx) =\Upsilon(\bx) E(\bx)\equiv \Upsilon(\bx) \int_{\bbS^{d-1}} \sigma_a(\bx) u(\bx,\bv)d \bv
\end{equation}
where the positive function $\Upsilon(\bx)$ is the nondimensional Gr\"uneisen coefficient which in the current formulation measures the photoacoustic efficiency of the tissue. To simplify the presentation, we use the short notation $\aver{f}_\bv$ to denote the integral of $f$ over the $\bv$ variable in the rest of this work.

The initial pressure field $H$ then propagates according to the acoustic wave equation, with the wave speed $c(\bx)$,
\begin{equation}\label{EQ:Acous}
	\begin{array}{cl}
  	\dfrac{1}{c^2(\bx)}\dfrac{\partial ^2 p}{\partial t^2}-\Delta p = 0, & \mbox{ in }\ \bbR_+\times \bbR^d\\
	p(0,\bx)= H(\bx), \quad \dfrac{\partial p}{\partial t}(0,\bx) =0 &\mbox{ in }\ \bbR^d.
	\end{array}
\end{equation}
The time-dependent pressure signal $p(t,\bx)$ is then measured on the surface of the tissue for long enough time, say $t_\infty$, and the objective is to reconstruct the coefficients $\sigma_a$, $\sigma_s$ and the Gr\"uneisen coefficient $\Upsilon$ from this measurement. Note that the reason that we can write the transport equation in stationary case while using the wave equation in time-dependent case, is that the two phenomena occur on two significantly different time scales~\cite{BaJoJu-IP10}.

The reconstruction problem in photoacoustic tomography can be split into two steps. In the first step, we need to reconstruct the initial pressure field $H(\bx)$ from measured acoustic signal on the boundary, $\left. p(t,\bx)\right|_{(0,T)\times\partial\Omega}$. This is a well known inverse problem for the acoustic wave equation that has been thoroughly studied in the past a few years under various scenarios; see for instance~\cite{AgKuKu-PIS09,AgQu-JFA96,BuMaHaPa-PRE07,CoArBe-IP07,FiHaRa-SIAM07,FiPaRa-SIAM04,FiRa-IP07,GoHiKu-Prep11,Haltmeier-SIAM11,Haltmeier-SIAM11B,HaScBuPa-IP04,HaScSc-M2AS05,KuKu-EJAM08,Kunyansky-IP08,Kunyansky-IP11,Kunyansky-IP07,Kunyansky-IP07B,Natterer-Prep11,Nguyen-IPI09,PaSc-IP07,RoRaNt-IEEE10,XuWa-PRE05} for analytical reconstruction formulas with constant wave speed, ~\cite{Hristova-IP09,HrKuNg-IP08,QiStUhZh-SIAM11,StUh-IP09,Steinhauer-Prep09B,Steinhauer-Prep09A,WuTaLi-JAP11} for reconstruction under variable wave speed, and ~\cite{AmBrGaWa-CM11,AmBrJuWa-LNM11,DeRaNt-PMB11,KoSc-Prep10,TrLaZhNoLyBeCo-PPUIS11,TrZhCo-IP10, AmGoLe-IP10,BuRoJeDiRaNt-MP11,CoBe-JOSA09,DeRaNt-PMB11,PaNuHaBu-IP07,PaNuBu-PMB09,QiStUhZh-SIAM11,StUh-TAMS12,StUh-IP11,TaLi-OE10,XuWaAmKu-MP04} for reconstructions under even more complicated environments. 

This work is concerned with the second step of photoacoustic tomography, call quantitative photoacoustic tomography (QPAT). The objective is to reconstruct the absorption and the diffusion coefficients, $\sigma_a$ and $\sigma_s$, in the transport equation~\eqref{EQ:ERT} and the Gr\"uneisen coefficient $\Upsilon$ from the result of the first step, the data $H$ in~\eqref{EQ:Data}. This step has recently attracted significant attention from both mathematical~\cite{AmBoJuKa-SIAM10,AmBoJuKa-SIAM11,BaJoJu-IP10,BaNoErWaCu-JBO11,BaRe-IP11,BaRe-CM11,BaRe-IP12,BaUh-IP10,BaUh-Prep11}, computational~\cite{BaBaVaRo-JOSA08,CoArBe-JOSA09,CoArKoBe-AO06,CoTaAr-CM11,GaZhOs-Prep10,GaOsZh-LNM12,LaCoZhBe-AO10,ReGaZh-SIAM12,RiNt-PRE05,ShCoZe-AO11,YuWaJi-OE07,Zemp-AO10} and modeling and experimental~\cite{BaSc-PRL10,RaNt-MP07} perspectives. Most of existing work on this step, however, is done in the diffusive regime~\cite{BaRe-IP11,BaRe-IP12,BaUh-IP10}, i.e., based on the diffusion approximation to the transport equation~\eqref{EQ:ERT}. Transport-based QPAT is only studied in~\cite{BaJoJu-IP10,CoTaAr-CM11,YaSuJi-PMB10} where the Gr\"uneisen coefficient $\Upsilon$ has been assumed to be a constant and known. 

To setup the problem appropriately, for the rest of the paper, we assume that all the coefficients that we are interested in are positive and bounded functions. More precisely, we assume: 
\begin{itemize}
\item[{\bf (A1)}] the coefficients $0<c_0\le \Upsilon(\bx), \sigma_a(\bx),\sigma_s(\bx) \in L^1(\Omega)\cap L^\infty(\Omega)$ for some $c_0>0$;
\item[{\bf (A2)}] the scattering kernel $\cK(\bv, \bv')\in L^1(\bbS^{d-1}\times \bbS^{d-1})$;
\item[{\bf (A3)}] the illumination, modeled by the boundary condition $g$, is in $L^1(\Gamma_-,d\bxi)$, with measure $d\bxi=|\bv\cdot\bn(\bx)|d\mathfrak m(\bx)d\bv$, $d\frak m(\bx)$ being the usual Lebesgue measure on $\partial\Omega$.
\end{itemize}
 With these assumptions, it is well-known that the radiative transport problem~\eqref{EQ:ERT} is well-posed and thus admits a unique solution $u\in L^1(X)$~\cite{DaLi-Book93-6}. This means that the data $H$ in~\eqref{EQ:Data} is a well-defined function in $L^1(\Omega)$. In practical applications, the objects to be imaged are often embedded into phantoms of similar optical and acoustic properties to get regularly shaped imaging domains. We thus assume will assume that the domain $\Omega$ is convex. This assumption will simplify some of the presentation but is not essential for the results obtained.

We conclude this section with the two remarks. First, if both the absorption coefficient $\sigma_a(\bx)$ and the scattering coefficient $\sigma_s(\bx)$ are known and only the Gr\"uneisen coefficient has to be reconstructed, we can simply solve the transport equation~\eqref{EQ:ERT} and compute the energy $E(\bx)$. Then $\Upsilon(\bx)$ is reconstructed as $\Upsilon=\frac{H}{E}$. Thus we need only one interior data set and one transport solver to solve the inverse problem. This is a trivial case. We will not discuss this case in the rest of the paper. 

Second, in practical application of PAT in biomedical imaging, the absorption and the scattering coefficients $\sigma_a$ and $\sigma_s$ are often isotropic, i.e. independent of the angular variable $\bv$. We thus restrict ourselves to the case of isotropic coefficients in this work. Mathematically this assumption is essential, as we can see from the following result that it is not possible to reconstruct uniquely anisotropic coefficients.
\begin{proposition}
	Let $(\Upsilon,\sigma_a,\sigma_s)$ and $(\tilde\Upsilon,\tilde\sigma_a,\sigma_s)$ be two sets of coefficients, and $H$ and $\tilde H$ the corresponding data sets. Let $z(\bx)\in\cC^1(\bar\Omega)$ be an arbitrary positive function with boundary value $\left. z \right|_{\partial\Omega}=1$. 
Then
\begin{equation}\label{EQ:Invariance}
	\tilde\sigma_a = (\sigma_a -\bv\cdot\nabla \ln z) z \quad \mbox{and}\quad \tilde\Upsilon\tilde\sigma_a =\Upsilon \sigma_a
\end{equation}
implies $\tilde H = H$.
\end{proposition}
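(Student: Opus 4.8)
The natural route is a gauge transformation: I would try to exhibit a solution $\tilde u$ of the second transport problem that is an explicit pointwise rescaling of the first solution $u$ by a function of $\bx$, and then check that this rescaling leaves the interior datum unchanged. The role of the hypothesis $z|_{\partial\Omega}=1$ is immediate here: multiplying the photon density by $z$ (or $1/z$) does not disturb the prescribed incoming illumination $g$ on $\Gamma_-$, so $\tilde u$ automatically inherits the same boundary condition and no new source is introduced. This is why I expect the two problems to be driven by identical data and to differ only in their coefficients.

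Concretely, I would set $\tilde u = z\,u$ and substitute into the advection operator. The key structural fact is that $z=z(\bx)$ is independent of $\bv$, so it commutes with the scattering operator, $K(z u)=z\,K(u)$, and leaves the isotropic coefficients $\sigma_a,\sigma_s$ alone; the only genuinely new contribution comes from the transport term through the product rule, $\bv\cdot\nabla(z u)=(\bv\cdot\nabla z)\,u+z\,\bv\cdot\nabla u$. Substituting the equation satisfied by $u$ and dividing out the common factor of $z$ should produce a transport equation for $\tilde u$ in which the scattering remains $\sigma_s$ and the absorption is shifted by an anisotropic term proportional to $\bv\cdot\nabla\ln z$, matching $\tilde\sigma_a=(\sigma_a-\bv\cdot\nabla\ln z)z$. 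I expect this algebraic verification — reproducing exactly the stated $\tilde\sigma_a$, including the outer factor $z$, while confirming that $\sigma_s$ is untouched — to be the most delicate and error-prone step, precisely because the advection term does not scale homogeneously under the substitution, and the bookkeeping of the $z$-factor against the gradient term must be done carefully.

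Once $\tilde u$ is known to solve the second problem, I would compute $\tilde H=\tilde\Upsilon\,\aver{\tilde\sigma_a\,\tilde u}_\bv$ and invoke the second hypothesis $\tilde\Upsilon\tilde\sigma_a=\Upsilon\sigma_a$ to replace the integrand, collapsing $\tilde H$ to $\aver{\Upsilon\sigma_a\,\tilde u}_\bv$ and reducing the whole claim to comparing the weighted angular integrals of $\tilde u$ and $u$. The main obstacle is thus to show that the gauge rescaling preserves precisely the quantity entering the datum: after pulling the $\bx$-dependent factor out of the angular integral, a current-type cross term proportional to $\nabla z\cdot\aver{\bv u}_\bv$ appears, and the crux is that the relation $\tilde\Upsilon\tilde\sigma_a=\Upsilon\sigma_a$ together with the conservation identity $\aver{K(u)}_\bv=0$ (which yields $\aver{\sigma_a u}_\bv=-\nabla\cdot\aver{\bv u}_\bv$) is exactly what is needed to absorb the extra factor and force $\tilde H=H$. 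I would therefore organize the final step around this cancellation, treating the verification that the gauge factor and the Gr\"uneisen relation compensate each other as the heart of the argument.
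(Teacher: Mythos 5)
Your overall strategy --- the gauge transformation $\tilde u = zu$, the observation that $K(zu)=zK(u)$ because $z$ is independent of $\bv$, and the role of $z|_{\partial\Omega}=1$ in preserving the incoming boundary data --- is exactly the paper's route. But both of the computations you yourself flag as delicate come out wrong. First, substituting $\tilde u=zu$ into the advection term and using the equation for $u$ gives
\[
\bv\cdot\nabla(zu) + \bigl(\sigma_a - \bv\cdot\nabla\ln z\bigr)(zu) = \sigma_s K(zu),
\]
so the gauged absorption coefficient is $\sigma_a - \bv\cdot\nabla\ln z$ \emph{without} the outer factor $z$; since the advection operator is first order and homogeneous in the unknown, no pointwise rescaling by a function of $\bx$ can produce a multiplicative factor on the absorption. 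Your claim that the substitution reproduces $\tilde\sigma_a=(\sigma_a-\bv\cdot\nabla\ln z)z$ ``including the outer factor $z$'' is therefore false as stated.

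Second, and more seriously, your closing step does not close. You propose $\tilde H=\aver{\tilde\Upsilon\tilde\sigma_a\tilde u}_\bv=\aver{\Upsilon\sigma_a\tilde u}_\bv$ and then hope to remove the leftover factor carried by $\tilde u=zu$ via a current identity. But $\aver{\Upsilon\sigma_a\,zu}_\bv=z\,H$, and no conservation law ($\aver{K(u)}_\bv=0$ or $\sigma_a\aver{u}_\bv=-\nabla\cdot\aver{\bv u}_\bv$) can turn $zH$ into $H$. The mechanism that actually works is purely algebraic and pointwise: the datum of the gauged medium is $\aver{\tilde\Upsilon\,(\sigma_a-\bv\cdot\nabla\ln z)\,(zu)}_\bv$, and the factor $z$ carried by $\tilde u$ is absorbed into the \emph{definition} $\tilde\sigma_a=(\sigma_a-\bv\cdot\nabla\ln z)z$ appearing in \eqref{EQ:Invariance}, so that the integrand regroups as $\tilde\Upsilon\tilde\sigma_a\,u=\Upsilon\sigma_a\,u$ and $\tilde H=H$ with no residual factor and no divergence identity. (The statement is admittedly easy to misread here: the quantity $\tilde\sigma_a$ in \eqref{EQ:Invariance} differs by exactly the factor $z$ from the absorption coefficient for which $zu$ is the transport solution, and the paper's proof makes this identification silently when it writes $\tilde\Upsilon\int(\sigma_a-\bv\cdot\nabla\ln z)uz\,d\bv=\Upsilon\int\sigma_a u\,d\bv$.) As written, your argument establishes $\tilde H=zH$, which is the claim only when $z\equiv 1$.
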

\begin{proof}
	Let  $u$ be the solution of the transport equation for coefficients $(\sigma_a,\sigma_s)$ with boundary value $g$. It is straightforward to verify that $uz$ is the solution of the transport equation for coefficients $(\sigma_a-\bv\cdot\nabla \ln z,\sigma_s)$ with the same boundary value $g$ (because $\left. z \right|_{\partial\Omega}=1$). Now, clearly 
$\tilde\Upsilon\dint_{\bbS^{d-1}}(\sigma_a-\bv\cdot\nabla\ln z)uz\ d\bv=\Upsilon\dint_{\bbS^{d-1}}\sigma_a u\ d\bv$ if 
\eqref{EQ:Invariance} holds.
\end{proof}

The rest of the paper is structured as follows. We first study in Section~\ref{SEC:Non-scattering} the inverse transport problem for non-scattering media for applications in quantitative sectional QPAT. We present some analytical reconstruction strategies in this setting. We then study in Section~\ref{SEC:Scattering} the same inverse problem for scattering media. We linearize the nonlinear inverse problem using the tool of Born approximation and present some numerical procedures to solve the linear and nonlinear inverse problems. In Section~\ref{SEC:MultiSpec} we consider the QPAT problem in the case when illuminations with multiple wavelengths are available.  To validate the reconstruction strategies, we provide in Section~\ref{SEC:Num} several numerical simulations with synthetic data generated under different scenarios. We conclude the paper with some additional remarks in Section~\ref{SEC:Concl}.

\section{QPAT of non-scattering media}
\label{SEC:Non-scattering}

We start by considering the QPAT problem for non-scattering media. In this case, photons propagate in the media along straight trajectories, without changing directions of propagation. The scattering mechanism in the transport equation~\eqref{EQ:ERT} is thus dropped by setting the scattering coefficient $\sigma_s=0$. We have the following transport model for light propagation:
\begin{equation}\label{EQ:ERT Free}
	\begin{array}{rcll}
  	\bv\cdot\nabla u(\bx,\bv) + 
  	\sigma_a(\bx)u(\bx,\bv) &=& 0
  	&\mbox{ in }\ X\\
       u(\bx,\bv) &=& g(\bx,\bv) &\mbox{ on }\ \Gamma_{-} .
	\end{array}
\end{equation}
The fact that photons travel in straight lines allows us to illuminate only part of the domain, for instance a plane cut through a three-dimensional medium. This is the fundamental principle of sectional photoacoustic tomography~\cite{ZhSe-OE11,ElScSc-arXiv11}.

To simplify the presentation, for any point $\bx\in\Omega$ and direction $\bv\in\bbS^{d-1}$, let us define 
\[
	\tau_\pm(\bx,\bv)=\inf\{s\in\bbR_+|\bx\pm s\bv\notin \Omega\}, 
\]
and $\tau(\bx,\bv)=\tau_+(\bx,\bv)+\tau_-(\bx,\bv)$. It is easy to see that $\tau_+(\bx,\bv)$ (resp. $\tau_-(\bx,\bv)$) is the distance it takes a particle at $\bx$ traveling in direction $\bv$ (resp. $-\bv$) to reach the boundary of the domain. In the same spirit, for any point $(\bx,\bv)$ on the incoming boundary $\Gamma_-$, we define
\[
	\tau_+(\bx,\bv)=\sup\{s\in\bbR_+|\bx\pm s\bv \in \Omega\},
\]
and set $\tau_-(\bx,\bv)=0$. Thus $\tau_+(\bx,\bv)$ is the distance for a photon coming into the domain at $\bx$ in direction $\bv$ to exit the domain.

It is straightforward to show the following representation of the interior data.
\begin{lemma}
	The interior data $H(\bx)$ generated with source $g$ and coefficients $(\Upsilon,\sigma_a)$ can be written as
\begin{equation}\label{EQ:H Expr}
	H(\bx)=\Upsilon(\bx)\sigma_a(\bx)\int_{\bbS^{d-1}}g(\bx-\tau_-(\bx,\bv)\bv,\bv)e^{-\int_0^{\tau_-(\bx,\bv)}\sigma_a(\bx-\tau_-(\bx,\bv)\bv+s\bv)ds}d\bv. 
\end{equation}	
\end{lemma}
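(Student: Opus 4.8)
The plan is to solve the transport equation~\eqref{EQ:ERT Free} by the method of characteristics. For a fixed direction $\bv\in\bbS^{d-1}$, the operator $\bv\cdot\nabla$ is just the directional derivative along the straight line $s\mapsto \bx+s\bv$, so the partial differential equation collapses to a one-dimensional linear ordinary differential equation along each such ray. First I would fix an interior point $\bx\in\Omega$ and trace its characteristic backward in direction $\bv$ until it meets the boundary: by the definition of $\tau_-$, the entry point is $\bx_0:=\bx-\tau_-(\bx,\bv)\bv\in\partial\Omega$. Convexity of $\Omega$ guarantees that this ray intersects the boundary in a single point on the incoming side, so that $(\bx_0,\bv)\in\Gamma_-$ and the boundary datum $g(\bx_0,\bv)$ is exactly the value to impose.

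Next I would set $\phi(s):=u(\bx_0+s\bv,\bv)$ for $s\in[0,\tau(\bx,\bv)]$. Along the ray, equation~\eqref{EQ:ERT Free} becomes $\phi'(s)=\bv\cdot\nabla u(\bx_0+s\bv,\bv)=-\sigma_a(\bx_0+s\bv)\,\phi(s)$, a linear first-order ODE whose solution via the integrating factor is
\begin{equation*}
	\phi(s)=\phi(0)\,e^{-\int_0^s\sigma_a(\bx_0+r\bv)\,dr}.
\end{equation*}
The boundary condition on $\Gamma_-$ gives $\phi(0)=u(\bx_0,\bv)=g(\bx_0,\bv)$. Evaluating at $s=\tau_-(\bx,\bv)$, where $\bx_0+\tau_-(\bx,\bv)\bv=\bx$, yields the pointwise expression
\begin{equation*}
	u(\bx,\bv)=g\big(\bx-\tau_-(\bx,\bv)\bv,\bv\big)\,e^{-\int_0^{\tau_-(\bx,\bv)}\sigma_a(\bx-\tau_-(\bx,\bv)\bv+r\bv)\,dr}.
\end{equation*}
Substituting this into the definition $H(\bx)=\Upsilon(\bx)\int_{\bbS^{d-1}}\sigma_a(\bx)u(\bx,\bv)\,d\bv$ from~\eqref{EQ:Data} and pulling the factor $\Upsilon(\bx)\sigma_a(\bx)$ outside the angular integral produces~\eqref{EQ:H Expr} directly.

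The ODE integration is the routine part; the step that deserves care is justifying that this characteristic formula genuinely represents the $L^1$ solution $u$ guaranteed by assumptions \textbf{(A1)}--\textbf{(A3)}. Since $u$ is only an $L^1$ function, the characteristic computation is valid for almost every ray rather than pointwise, so I would argue that the candidate function defined by the displayed formula solves~\eqref{EQ:ERT Free} in the weak (distributional) sense and then invoke the well-posedness and uniqueness result cited from~\cite{DaLi-Book93-6} to identify it with $u$. The boundedness of $\sigma_a$ from \textbf{(A1)} ensures the exponential attenuation factor is well defined and bounded below and above, while the integrability of $g$ from \textbf{(A3)} together with Fubini's theorem justifies interchanging the angular integration with the characteristic representation and confirms that $H\in L^1(\Omega)$, as required.
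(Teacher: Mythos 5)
Your proposal is correct and follows essentially the same route as the paper: integrating the free transport equation along characteristics to obtain the explicit solution $u(\bx,\bv)=g(\bx-\tau_-(\bx,\bv)\bv,\bv)e^{-\int_0^{\tau_-(\bx,\bv)}\sigma_a(\bx-\tau_-(\bx,\bv)\bv+s\bv)ds}$ and then substituting it into the definition of $H$ in~\eqref{EQ:Data}. The additional remarks on identifying the characteristic formula with the $L^1$ solution and on Fubini are sound elaborations of what the paper leaves implicit.
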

\begin{proof}
The transport equation~\eqref{EQ:ERT Free} can be integrated along direction $\bv$ as an ODE to obtain the solution:
\begin{equation}\label{EQ:Solu}
	u(\bx,\bv)=g(\bx-\tau_-(\bx,\bv)\bv,\bv)e^{-\int_0^{\tau_-(\bx,\bv)}\sigma_a(\bx-\tau_-(\bx,\bv)\bv+s\bv)ds} .
\end{equation}
The result then follows from using this solution in data $H$ introduced in~\eqref{EQ:Data}.
\end{proof}

In fact, this simple representation of the transport solution allows us to obtain explicit procedure to reconstruct the two unknown coefficients $\Upsilon$ and $\sigma_a$ if we have the luxury of using the right illumination source $g(\bx,\bv)$.

\begin{figure}[ht]
\centering
\includegraphics[angle=0,width=0.3\textwidth]{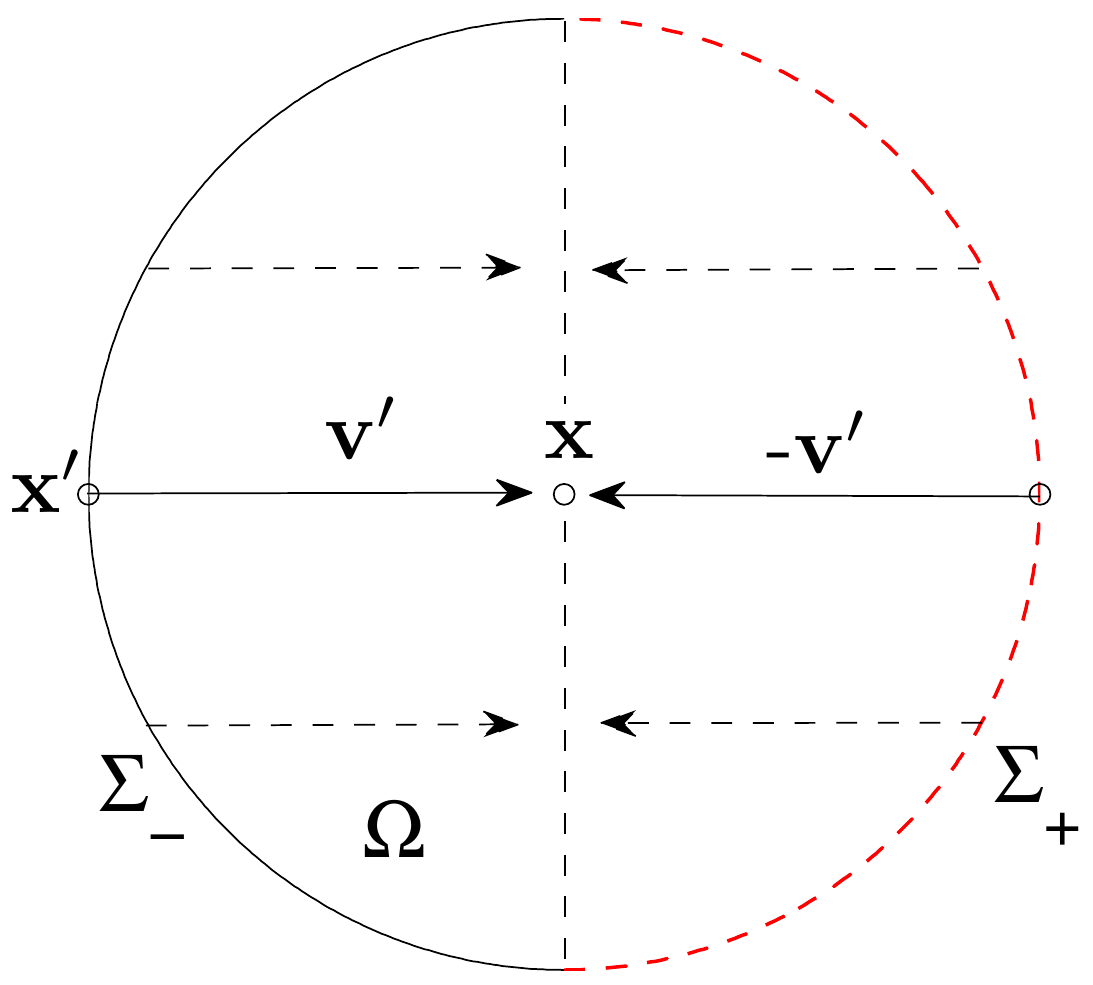}\hskip 0.1\textwidth
\includegraphics[angle=0,width=0.3\textwidth]{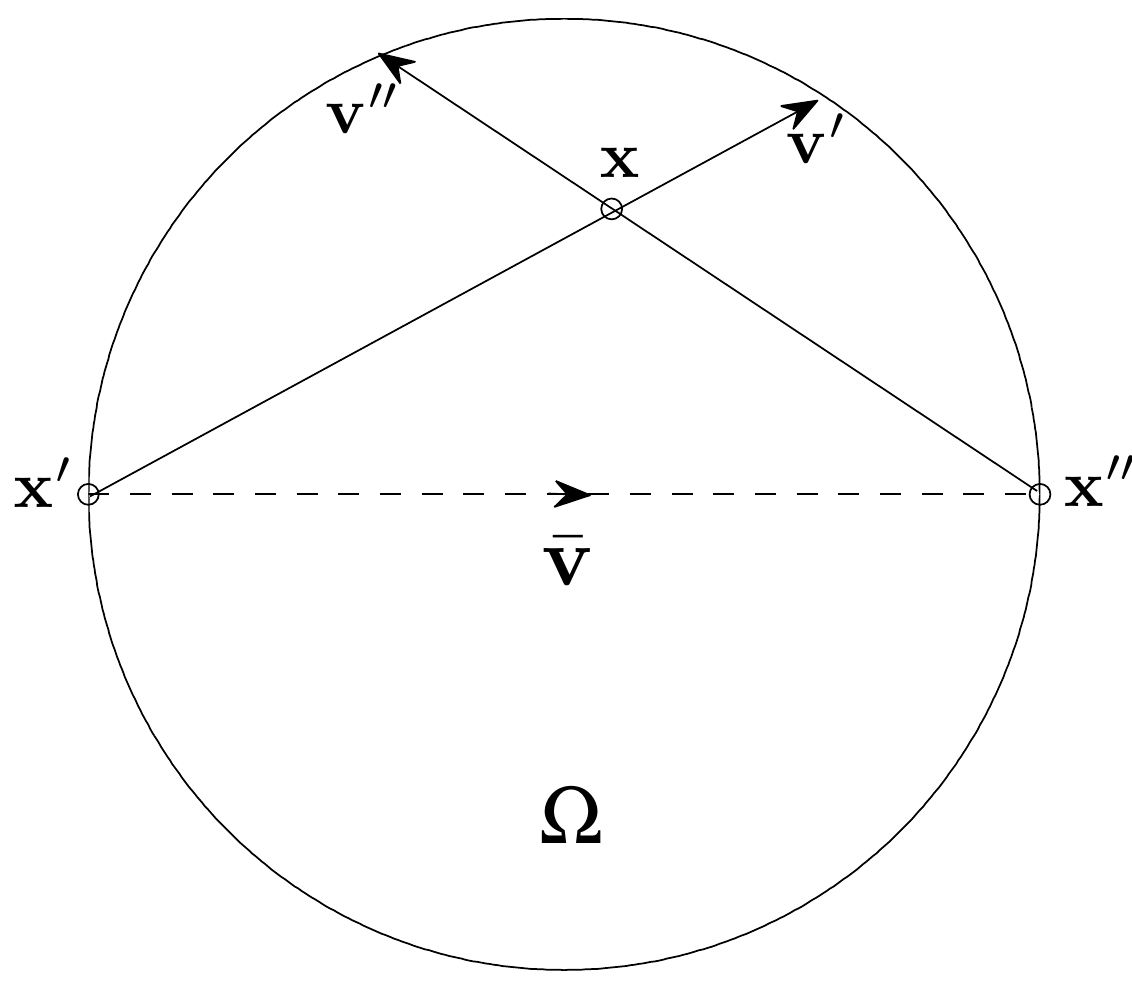}
\caption{Two illumination schemes that allow the simultaneous reconstruction of $\Upsilon$ and $\sigma_a$ using two data sets in a circular domain. Left: two collimated sources supported on $\Sigma_-$ (solid part of $\partial\Omega$) and $\Sigma_+$ (dashed part of $\partial\Omega$) respectively; Right: two point sources located at $\bx'$ and $\bx''$ respectively.}
\label{FIG:Sec Setup}
\end{figure}

\subsection{Reconstruction with collimated sources}
\label{SUBSEC:rec-collimated}

The first case where we can derive an analytical reconstruction formula is when collimated sources, i.e., sources focused on a specific direction, are used. Let $\bv'$ be the direction that the source points to, and define $\Sigma_\pm(\bv')\equiv\{\bx\in\partial\Omega| \pm\bn(\bx)\cdot \bv'>0\}$, $\Sigma_-(\bv')$ (resp. $\Sigma_+(\bv')$) being the part of the boundary  $\partial\Omega$ where lines in direction $\bv'$ enters (resp. leaves) the domain $\Omega$. As mentioned in the Introduction, we assume that the domain $\Omega$ is convex. In this case, each point $\bx\in\Sigma_-(\bv')$ is uniquely mapped to a point $\by\in\Sigma_+(\bv')$ as: $\by=\bx+\tau_+(\bx,\bv')\bv'$ and vise versa: $\bx=\by+\tau_+(\by,-\bv')(-\bv')$.

The collimated illumination source that we choose has the form
\begin{equation}\label{EQ:Collimated S}
	g(\bx,\bv)=\frak g(\bx)\delta(\bv-\bv'),\ \ \bx\in\Sigma_-(\bv') .
\end{equation}
This leads to the following expression for the interior data:
\[
	H(\bx)=\frak g(\bx-\tau_-(\bx,\bv')\bv')\Big(\Upsilon(\bx)\sigma_a(\bx)e^{-\int_0^{\tau_-(\bx,\bv')}\sigma_a(\bx-\tau_-(\bx,\bv')\bv'+s\bv')ds}\Big) .
\]
Let $\bx'\in\partial\Omega$ be the track back of $\bx$ to the boundary in $-\bv$ direction, i.e., $\bx'=\bx-\tau_-(\bx,\bv')\bv'$, then $H$ can be written as
\begin{equation}\label{EQ:A}
	\dfrac{H(\bx'+\tau_-(\bx,\bv')\bv')}{\frak g(\bx')\Upsilon(\bx'+\tau_-(\bx,\bv')\bv')}=\sigma_a(\bx'+\tau_-(\bx,\bv')\bv')e^{-\int_0^{\tau_-(\bx,\bv')}\sigma_a(\bx'+s\bv')ds} .
\end{equation}

\noindent{\bf (i) Reconstruction of $\sigma_a$.} If $\Upsilon$ is known, then we can take the log of the above formula and differentiate with respect to $\tau_-$ to obtain the following equation for $\sigma_a$:
\begin{equation}
	\begin{array}{rcll}
  	\bv'\cdot\nabla \sigma_a(\bx) -\sigma_a^2(\bx) - 
        \left( \bv'\cdot\nabla \ln \dfrac{H}{\Upsilon \frak g} \right) \sigma_a(\bx) &=& 0,
  	&\mbox{in}\ \ \Omega \\
       \sigma_a(\bx) &=& \dfrac{H(\bx)}{\Upsilon(\bx) \frak g(\bx)}, &\mbox{on}\ \ \Sigma_-
	\end{array}
\end{equation}
where $\frak g$ in the term $\bv'\cdot\nabla \ln \dfrac{H}{\Upsilon \frak g}$ is evaluated at $\bx-\tau_-(\bx,\bv')\bv'$. This is an initial value problem for a first order ordinary differential equation (because the direction $\bv'$ is fixed). It can be solved uniquely~\cite{Friedrichs-Book85} to reconstruct the absorption coefficient $\sigma_a$ along the lines in direction $\bv'$.

There is an equivalent reconstruction procedure for $\sigma_a$. When a collimated source is used, photons travel only along the direction that the source is pointing to, say $\bv'$. Thus, the solution $u$ to the transport equation is only non-zero in this direction. In other words, $\aver{u}_\bv\equiv\int_{\bbS^{d-1}} u(\bx,\bv)d\bv = u(\bx,\bv')$, and hence the data can be expressed as $H=\Upsilon \sigma_a(\bx) u(\bx,\bv')$. We can thus replace the absorption term $\sigma_a(\bx) u(\bx,\bv)$ in the transport equation with the term $H/\Upsilon$ to obtain the following transport equation for the photon density $u$:
\begin{equation}
	\begin{array}{rcll}
  	\bv\cdot\nabla u(\bx,\bv) + \dfrac{H}{\Upsilon} &=& 0
  	&\mbox{ in }\ X\\
       u(\bx,\bv) &=& \frak g(\bx)\delta(\bv-\bv') &\mbox{on}\ \ \Gamma_{-} .
	\end{array}
\end{equation}
This transport equation can be solved uniquely for $u$~\cite{DaLi-Book93-6}:
\begin{equation}
	u(\bx'+t\bv,\bv)=\left\{
	\begin{array}{ll}
		\frak g(\bx') - \dint_0^t\dfrac{H}{\Upsilon}(\bx'+s\bv)ds, & \bx'\in\Sigma_-\ \mbox{and} \ \bv=\bv'\\
		0, & \bx'\in\Sigma_-\ \mbox{and} \ \bv\neq \bv' .
	\end{array}\right .
\end{equation}
Once $u$ is obtained, we can reconstruct the absorption  $\sigma_a=H/(\Upsilon u)$, provided that we select the boundary condition $\frak g$ such that the denominator does not vanish.

\medskip
\noindent{\bf (ii) Reconstruction of ($\Upsilon$, $\sigma_a$).} In fact, we can use two data sets generated with collimated sources to determine the Gr\"uneisen coefficient and the absorption coefficients simultaneously. The configuration of the two sources is illustrated in Fig.~\ref{FIG:Sec Setup} (left). Let $g_1(\bx,\bv)=\frak g(\bx)\delta(\bv-\bv')$ be the first collimated source supported on $\Sigma_-$, and $H_1$ be the corresponding interior data which takes the form of $H$ in ~\eqref{EQ:A}. Then the second source we use is supported on $\Sigma_+$ with identical intensity distribution but pointing in the opposite direction, i.e., $-\bv'$. More precisely,
\begin{equation}
	g_2(\bx,\bv)=\frak g(\bx-\tau_+(\bx,\bv')\bv')\delta(\bv+\bv'), \ \ \bx\in\Sigma_+(\bv') . 
\end{equation}
If we use this source in the expression for the data $H$ in~\eqref{EQ:H Expr}, we obtain the following expression for the interior data $H_2$:
\begin{equation}\label{EQ:B}
	\dfrac{H_2(\bx'+\tau_-(\bx,\bv')\bv')}{\frak g(\bx')\Upsilon(\bx'+\tau_-(\bx,\bv')\bv')}=\sigma_a(\bx'+\tau_-(\bx,\bv')\bv')e^{-\int_0^{\tau_+(\bx,\bv')}\sigma_a(\bx'+\tau(\bx,\bv') v'-s\bv')ds} .
\end{equation}

We now take the logarithm of the ratio of ~\eqref{EQ:A} and ~\eqref{EQ:B} and use the relation $\tau(\bx,\bv)=\tau_+(\bx,\bv)+\tau_-(\bx,\bv)$ to obtain
\begin{equation}
	\ln \dfrac{H_2}{H_1}(\bx'+\tau_-(\bx,\bv')\bv')=-\int_0^{\tau_+(\bx',\bv')}\sigma_a(\bx'+s\bv')ds + 2\int_0^{\tau_-(\bx,\bv')}\sigma_a(\bx'+s\bv')ds.
\end{equation}
Differentiation of this result with respect to $\tau_-(\bx,\bv')$ (equivalent to the directional differentiation $\bv'\cdot\nabla_\bx$) will allow us to reconstruct the quantity $2\sigma_a(\bx'+\tau_-(\bx,\bv')\bv')=2\sigma_a(\bx)$ a.e. . The coefficient $\Upsilon$ can then be reconstructed by solving the transport equation with the reconstructed $\sigma_a$ and $g_1$ (resp. $g_2$), and computing $\Upsilon=H_1/(\sigma_a\aver{u_1}_\bv)$ (resp. $\Upsilon=H_2/(\sigma_a\aver{u_2}_\bv)$).

\subsection{Reconstruction with point sources}
\label{SUBSEC:rec-point}

The second case where we can derive an analytical reconstruction method is when the illumination source is a point source in the spatial variable:
\begin{equation}\label{EQ:Point S}
	g(\bx,\bv)=\frak g(\bv)\delta(\bx-\bx').
\end{equation}
This is the most commonly used type of source in optical imaging, such as optical tomography~\cite{Arridge-IP99}. For any point $\bx\in\Omega$, let us define $\bv'=\dfrac{\bx-\bx'}{|\bx-\bx'|}$ as the unit vector pointing from $\bx'$ to $\bx$. Then the interior data $H$ simplifies to
\[
	H(\bx)=\frak g(\bv')\big(\Upsilon(\bx)\sigma_a(\bx)e^{-\int_0^{\tau_-(\bx,\bv')}\sigma_a(\bx-\tau_-(\bx,\bv')\bv'+s\bv')ds}\big) .
\]
Following the presentation in the previous section, we can rewrite this as
\begin{equation}\label{EQ:C}
	\dfrac{H(\bx'+\tau_-(\bx,\bv')\bv')}{\frak g(\bv')\Upsilon(\bx'+\tau_-(\bx,\bv')\bv')}=\sigma_a(\bx'+\tau_-(\bx,\bv')\bv')e^{-\int_0^{\tau_-(\bx,\bv')}\sigma_a(\bx'+s\bv')ds} .
\end{equation}
This is the same type of formula as ~\eqref{EQ:A} except that the unit direction vectors $\bv'$ are different at different points $\bx\in\Omega$.

\medskip
\noindent{\bf (i) Reconstruction of $\sigma_a$.} If $\Upsilon$ is known, we can differentiate~\eqref{EQ:C} as before to obtain an equation for the absorption coefficient $\sigma_a$. For each $\bv'\in\bbS^{d-1}_-(\bx')\equiv\{\bv\in\bbS^{d-1}|\bn(\bx')\cdot\bv<0\}$, define $\Omega_{\bv'} \equiv \left\{ \bx\in\Omega \, \left| \, \frac{\bx-\bx'}{|\bx-\bx'|} =\bv' \right. \right\}$, then
\begin{equation}
	\begin{array}{rcll}
  	\bv'\cdot\nabla \sigma_a(\bx) -\sigma_a^2(\bx) - 
        \left( \bv'\cdot\nabla \ln \dfrac{H(\bx)}{\Upsilon(\bx) \frak g(\bv')} \right) \sigma_a(\bx) &=& 0,
  	&\mbox{in}\ \ \Omega_{\bv'}  \\
       \sigma_a(\bx) &=& \sigma_a(\bx'), &\mbox{at}\ \ \bx' .
	\end{array}
\end{equation}
For a fixed $\bv'$, we can solve this first order ODE to find $\sigma_a(\bx)$ along the line segment $\Omega_{\bv'}$ if we know the value of $\sigma_a(\bx)$ at $\bx'$. Due to the singularity of the source at $\bx'$, we can not reconstruct this value from data $H$ as before. 

Point sources emit photons that travel in straight lines away from the source location. Thus the transport solution at each spatial position $\bx$ is only non-zero in direction $\bv'$. Hence $H=\Upsilon(\bx) \sigma_a(\bx) u(\bx,\bv')$. The transport equation can again be simplified to:
\begin{equation}
	\begin{array}{rcll}
  	\bv\cdot\nabla u(\bx,\bv) + \dfrac{H}{\Upsilon} &=& 0
  	&\mbox{ in }\ X\\
       u(\bx,\bv) &=& \frak g(\bv)\delta(\bx-\bx') &\mbox{on}\ \ \Gamma_{-} .
	\end{array}
\end{equation}
This transport equation can be conveniently solved in the polar coordinates with the origin at $\bx'$. The absorption coefficient can then be reconstructed as $\sigma_a=H/(\Upsilon u)$ away from $\bx'$.

\medskip
\noindent{\bf (ii) Reconstruction of ($\Upsilon$, $\sigma_a$).} If both coefficients are unknown, we can use two data sets generated with point sources to reconstruct them. The setup is depicted in Fig.~\ref{FIG:Sec Setup} (right). Let $g_1(\bx,\bv)=\frak g(\bv)\delta(\bx-\bx')$ and $g_2(\bx,\bv)=\frak g(\bv)\delta(\bx-\bx'')$ be the two point sources used to produce the interior data. We denote by $d(\bx',\bx'')=|\bx'-\bx''|$ the distance between the two points and by $\bar\bv=\dfrac{\bx''-\bx'}{|\bx''-\bx'|}$ the unit vector pointing from $\bx'$ to $\bx''$. For any point $\bx\in\Omega$, we define $\bv''=\dfrac{\bx-\bx''}{|\bx-\bx''|}$ as the unit vector pointing from $\bx''$ to $\bx$. It is then straightforward to verify that
\begin{equation}\label{EQ:Triangle}
	\tau_-(\bx,\bv')\bv'\cdot\bar\bv+\tau_-(\bx,\bv'')\bv''\cdot(-\bar\bv)=d(\bx',\bx'') .
\end{equation}
 
The first source $g_1$ produces data $H_1$ that is given in~\eqref{EQ:C}, while the second source $g_2$ produces the data $H_2$ that is given by
\[
	H_2(\bx)=\frak g(\bv'')\big(\Upsilon(\bx)\sigma_a(\bx)e^{-\int_0^{\tau_-(\bx,\bv'')}\sigma_a(\bx-\tau_-(\bx,\bv'')\bv''+s\bv'')ds}\big) .
\]
We can rewrite $H_2$ into
\begin{equation}\label{EQ:D}
	\dfrac{H_2(\bx''+\tau_-(\bx,\bv'')\bv'')}{\frak g(\bv'')\Upsilon(\bx''+\tau_-(\bx,\bv'')\bv'')}=\sigma_a(\bx''+\tau_-(\bx,\bv'')\bv'')e^{-\int_0^{\tau_-(\bx,\bv'')}\sigma_a(\bx''+s\bv'')ds}.
\end{equation}
Taking the ratio of~\eqref{EQ:D} and ~\eqref{EQ:C}, we obtain, after taking logarithm on both sides,
\begin{multline}
	\ln \left( \dfrac{H_2(\bx''+\tau_-(\bx,\bv'')\bv'')\frak g(\bv')\Upsilon(\bx'+\tau_-(\bx,\bv')\bv')}{H_1(\bx'+\tau_-(\bx,\bv')\bv')\frak g(\bv'')\Upsilon(\bx''+\tau_-(\bx,\bv'')\bv'')} \right)=\\
	\int_0^{\tau_-(\bx,\bv')}\sigma_a(\bx'+s\bv')ds -\int_0^{\tau_-(\bx,\bv'')} \sigma_a(\bx''+s\bv'')ds .
\end{multline}
We can now differentiate this result with respect to $\tau_-(\bx,\bv')$ (equivalent to the directional differentiation $\bv'\cdot\nabla_\bx$) and use relation~\eqref{EQ:Triangle} to obtain the quantity $\left( 1-\dfrac{\bv'\cdot\bar \bv}{\bv''\cdot\bar \bv} \right) \sigma_a(\bx)$. Note that for any point $\bx\in\Omega$ located on the line determined by $\bx'$ and $\bx''$, $\frac{\bv'\cdot\bar \bv}{\bv''\cdot\bar \bv}=-1$ (because $\bx$ locates between $\bx'$ and $\bx''$). The result in this case reduces to that in the case of two collimated sources: $(1-\frac{\bv'\cdot\bar \bv}{\bv''\cdot\bar \bv})\sigma_a(\bx) = 2\sigma_a(\bx)$. 

\subsection{Reconstruction with cone-limited sources}
\label{SUBSEC:rec-cone}

The reconstruction procedures in the previous sections work only when the illumination function $g$ takes the required special forms. For more general sources, we do not have similar explicit reconstruction procedures. In fact, it is not clear whether or not data from an arbitrary source would uniquely determine the absorption coefficient. We consider here a reconstruction method for a source that is slightly more general than the sources used in Sections \ref{SUBSEC:rec-collimated} and \ref{SUBSEC:rec-point} but still possess certain causality properties of those sources.

Let $\bv^0$ be a selected direction pointing inside the domain $\Omega$. We intend to construct a source such that the transport equation~\eqref{EQ:ERT Free} with the source is casual along direction $\bv^0$. Such causality would allow us to derive a direct layer peeling method that solves the inverse problem in one pass in a non-iterative manner.

\begin{figure}[ht]
\centering
\includegraphics[width=0.55\textwidth]{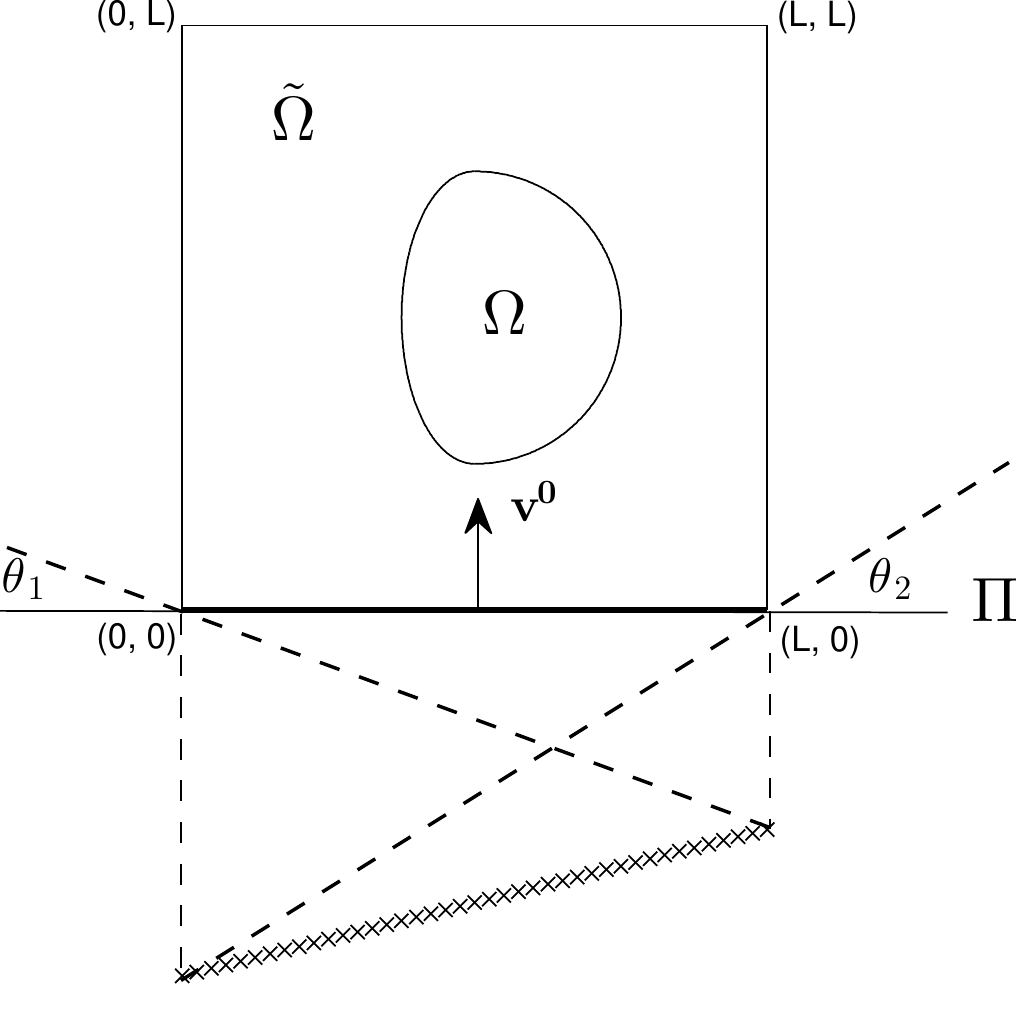}\\
\caption{Domain $\Omega$ and its extension to a hyper-cube $\tilde\Omega$. The physical sources located on a line marked with $\times$ are separated from $\Omega$ by a hyper-plane $\Pi$. The boundary condition is non-zero only on the 
bottom side of $\tilde\Omega$ (thick black part) that is orthogonal to $\bv^0$. The half-aperture $\theta^0$ of the cone $V(\bv^0, \theta^0)$ 
is given by $\theta^0 = \pi/2 - \min \{ \theta_1, \theta_2 \}$.}
\label{FIG:cone-square}
\end{figure}
We require the source $g(\bx,\bv)$ satisfy the following condition:
\begin{equation}\label{EQN:cone-source}
	g(\bx,\bv) = 0 \quad \mbox{for any} \quad \bv \notin V(\bv^0, \theta^0) = 
\{ \bv \in \mathbb{S}^{d-1} \;|\; \bv \cdot \bv^0 \geq \cos \theta^0 \},
\end{equation}
where the cone $V(\bv^0, \theta^0)$ is determined by the selected direction $\bv^0 \in \mathbb{S}^{d-1}$ and the half-aperture $0 < \theta^0 < \pi/2$. A practically important case that satisfies the assumption~\eqref{EQN:cone-source} is when the physical source can be separated from the object occupying the domain $\Omega$ by a hyper-plane. This limits the angle at which $\Omega$ is visible from the source, thus the rays entering $\Omega$ are limited to some cone of directions $V(\bv^0, \theta^0)$ with $\bv^0$ orthogonal to the separating hyper-plane. Then the resulting effective boundary conditions satisfy~\eqref{EQN:cone-source}. In fact, in such case we can extend $\Omega$ to a $d$-dimensional hyper-cube $\tilde\Omega$ such that $g(\bx,\bv)$ is non-zero only on one side of the cube which is orthogonal to $\bv^0$, as shown in Fig. \ref{FIG:cone-square}.

To simplify the presentation of the method we consider here the case $d=2$, although the algorithm remains essentially
the same for $d=3$. We assume that $\tilde\Omega$ is the square $[0,L]^2$ with $\sigma_a (\bx) = 0$ 
for $\bx \in \tilde\Omega \setminus \Omega$. We also assume that the rays enter the domain from the bottom side 
$\{(x,y) \; | \; x \in (0,L), y=0\}$, so $\bv^0 = (0,1)$. The source $g(\bx,\bv)$ is only non-zero on the bottom side. 
Thus the equation~\eqref{EQ:ERT Free} is causal in the $y$ direction; see Fig. \ref{FIG:cone-square} for the details of the setup. We use the parametrization $\bv = (\cos \theta, \sin \theta)$ so that $\bv \in V(\bv^0, \theta^0)$ becomes 
$\theta \in [\pi/2 - \theta^0, \pi/2 + \theta^0]$. This implies that we can divide~\eqref{EQ:ERT Free} by 
$\sin \theta$ $(> 0)$ so that the free transport equation together with its boundary condition can be written as
\begin{equation}
\begin{array}{rcl}\label{EQN:ERT-cone}
u_y (x,y,\theta) & = & - \cot\theta \; u_x (x,y,\theta) 
- \csc\theta \; \sigma_a(x,y) \; u(x,y,\theta),\\
u(x, 0, \theta) & = & g(x, \theta), \quad x\in(0,L),\ \theta \in \left[ \frac{\pi}{2}-\theta^0, \frac{\pi}{2}+\theta^0 \right] \\
u(0, y, \theta) & = & 0, \quad y\in(0,L),\  \theta \in \left[ \frac{\pi}{2}-\theta^0, \frac{\pi}{2} \right) \\
u(L, y, \theta) & = & 0, \quad y\in(0,L),\  \theta \in \left( \frac{\pi}{2}, \frac{\pi}{2}+\theta^0 \right]
\end{array}
\end{equation}
where we observe that no boundary condition is needed on the top side of the domain $\tilde\Omega$, $\{(x,y) \; | \; x\in(0,L),y = L \}$. The interior data $H(x,y)$ now takes the form
\[
	H(x,y) = \Upsilon(x,y) \sigma_a(x,y) \int\limits_{\pi/2-\theta^0}^{\pi/2+\theta^0} u(x,y,\theta) d \theta \equiv \Upsilon(x,y) \sigma_a(x,y) \aver{u(x,y,\theta)}_{\theta^0}.
\]

The role of the temporal variable in (\ref{EQN:ERT-cone}) is played by $y$ (the system is causal in $y$) so we can 
apply a first order time stepping procedure to obtain a semi-discrete inversion scheme. Let us discretize (\ref{EQN:ERT-cone}) in $y$ on a grid with nodes $y_j$, $j=0,\ldots,N_y$, where $y_0 = 0$, $y_{N_y}=L$ and the grid steps are $h_k = y_k - y_{k-1}$, $k=1,\ldots,N_y$. For the semi-discrete quantities we use notation
\begin{equation}
u^{(j)} (x, \theta) \approx u(x, y_j, \theta), 
\quad \sigma_a^{(j)}(x) \approx \sigma_a (x, y_j),
\quad j=0,\ldots,N_y.
\label{EQN:semidiscrete}
\end{equation}
To reconstruct $\sigma_a$ with $\Upsilon$ known, we apply a forward Euler time stepping scheme to (\ref{EQN:ERT-cone}) to obtain the following explicit reconstruction procedure.
\begin{enumerate}
\item Initialize $u^{(0)}(x, \theta) = g(x, \theta)$, and 
\[
\sigma_a^{(0)}(x) = \frac{H(x,0)}{\Upsilon(x,0) \left< g(x,\theta) \right>_{\theta^0}}, \quad x \in [0,L].
\]
\item For $j=1,\ldots,N_y$ compute 
\begin{equation}
u^{(j)}(x, \theta) = \left( \cI - h_j \cot\theta\; \partial_x - 
h_j \csc\theta \; \sigma_a^{(j-1)}(x)\right) u^{(j-1)}(x, \theta)
\label{EQN:fwd-euler}
\end{equation}
for $x \in (0,L)$ and $\theta \in \left[ \pi/2-\theta^0, \pi/2+\theta^0 \right]$. At the boundary $x \in \{0,L\}$ set
\begin{equation} 
\begin{array}{rcl}
u^{(j)}(0,\theta) & = & 0, \quad \theta \in \left[ \frac{\pi}{2}-\theta^0, \frac{\pi}{2} \right) \\
u^{(j)}(L,\theta) & = & 0, \quad \theta \in \left( \frac{\pi}{2}, \frac{\pi}{2}+\theta^0 \right]
\end{array}
\label{EQN:uxbc}
\end{equation}
and compute the reconstruction
\begin{equation}
\sigma_a^{(j)}(x) = \frac{H(x,y_j)}{ \Upsilon(x,y_j) \left< u^{(j)}(x,\theta) \right>_{\theta^0} }, \quad x \in [0,L].
\label{EQN:1coeff-sigma}
\end{equation}
\end{enumerate}

The above method is quite stable in practice as demonstrated in the numerical examples in Section~\ref{SUBSEC:num-nonscatter}. However, the discretization of $\partial_x$ in ~\eqref{EQN:fwd-euler} should be compatible with the grid refinement in $y$ 
to retain stability of the forward Euler. If a coarser discretization in $y$ compared to that in $x$ is desired, then~\eqref{EQN:fwd-euler} can be replaced by a semi-implicit scheme
\[
\left( \cI + h_j \cot\theta \; \partial_x \right) u^{(j)}(x, \theta) =  
\left( \cI - h_j \csc\theta \; \sigma_a^{(j-1)}(x)\right) u^{(j-1)}(x, \theta),
\]
which requires solving a first order differential (difference if discretized) equation in $x$ at every step 
$j$ with boundary conditions (\ref{EQN:uxbc}). Generalizing this method to reconstruct simultaneously two coefficients $\Upsilon$ and $\sigma_a$ remains a topic of future study.

\medskip
\medskip
We conclude Section~\ref{SEC:Non-scattering} by summarizing the reconstruction results for non-scattering media that we have introduced into the following uniqueness and stability theorem.
\begin{theorem}
	Let $(\Upsilon,\sigma_a)$ and $(\tilde\Upsilon,\tilde\sigma_a)$ be two sets of coefficients in~\eqref{EQ:Data} and ~\eqref{EQ:ERT Free} satisfying the assumptions in {\bf (A1)}, and $g_1(\bx,\bv)$ and $g_2(\bx,\bv)$ be two source functions of the form ~\eqref{EQ:Collimated S} or ~\eqref{EQ:Point S}. Let $(H_1, H_2)$ and $(\tilde H_1, \tilde H_2)$ be the corresponding interior data. Then (a) $H_1=\tilde H_1$ implies $\sigma_a(\bx)=\tilde\sigma_a(\bx)$ if $\Upsilon=\tilde\Upsilon$; and (b) $(H_1,H_2)=(\tilde H_1,\tilde H_2)$ implies $(\Upsilon,\sigma_a)=(\tilde\Upsilon,\tilde\sigma_a)$. Moreover, the following stability result holds:
\begin{multline}\label{EQ:Stab1}	\|\Upsilon(\bx)\sigma_a(\bx)e^{-\int_0^{\tau_-(\bx,\bv')}\sigma_a(\bx-\tau_-(\bx,\bv')\bv'+s\bv')ds} 
	- \tilde\Upsilon(\bx)\tilde\sigma_a(\bx)e^{-\int_0^{\tau_-(\bx,\bv')}\tilde\sigma_a(\bx-\tau_-(\bx,\bv')\bv'+s\bv')ds}\|_{L^\infty(\Omega)}\\
	\le C_l \|H_l-\tilde H_l\|_{L^\infty(\Omega)},\ \ l=1,2
\end{multline}
$C_l$ being a constant that depends on $g_l$ but is independent of the data $H_l$ and $\tilde H_l$.	
\end{theorem}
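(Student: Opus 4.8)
The plan is to reduce all three assertions to the representation~\eqref{EQ:H Expr} and its two specializations~\eqref{EQ:A} and~\eqref{EQ:C}. For both admissible source types the data take the common form
\[
	H(\bx)=\beta(\bx)\,\Upsilon(\bx)\sigma_a(\bx)e^{-\int_0^{\tau_-(\bx,\bv')}\sigma_a(\bx-\tau_-(\bx,\bv')\bv'+s\bv')ds},
\]
where the \emph{known} source factor is $\beta(\bx)=\frak g(\bx-\tau_-(\bx,\bv')\bv')$ for the collimated source~\eqref{EQ:Collimated S} and $\beta(\bx)=\frak g(\bv')$ for the point source~\eqref{EQ:Point S}. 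Writing $\Phi[\Upsilon,\sigma_a]$ for the expression inside the norm on the left of~\eqref{EQ:Stab1}, this identity reads $\Phi[\Upsilon,\sigma_a]=H/\beta$. The stability estimate~\eqref{EQ:Stab1} is then immediate: a fixed source $g_l$ yields the same factor $\beta$ for both coefficient sets, so $\Phi[\Upsilon,\sigma_a]-\Phi[\tilde\Upsilon,\tilde\sigma_a]=(H_l-\tilde H_l)/\beta$, whence $\|\Phi[\Upsilon,\sigma_a]-\Phi[\tilde\Upsilon,\tilde\sigma_a]\|_{L^\infty(\Omega)}\le C_l\|H_l-\tilde H_l\|_{L^\infty(\Omega)}$ with $C_l=1/\inf_\bx|\beta(\bx)|$. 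This constant is finite once $\frak g$ is bounded away from zero, depends only on $g_l$, and is independent of the data.

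For part (a), with $\Upsilon=\tilde\Upsilon$ known and $H_1=\tilde H_1$, dividing the representation by $\beta(\bx)\Upsilon(\bx)>0$ shows that $F(\bx):=\sigma_a(\bx)e^{-G(\bx)}$ agrees with its tilded counterpart, where $G(\bx)=\int_0^{\tau_-(\bx,\bv')}\sigma_a(\bx'+s\bv')ds$ is the attenuation along the ray from its origin $\bx'$ to $\bx$. I would then reconstruct $\sigma_a$ ray by ray: parametrizing a ray by $t=\tau_-$ and using $\tfrac{d}{dt}G=\sigma_a$, one has $F=-\tfrac{d}{dt}e^{-G}$; integrating from the origin where $G=0$ gives $e^{-G(t)}=1-\int_0^{t}F\,dt'$, hence the explicit formula $\sigma_a=F/\bigl(1-\int_0^{t}F\,dt'\bigr)$. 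The denominator equals $e^{-G}>0$ and never vanishes, and the right-hand side is built only from $F$; therefore $F=\tilde F$ forces $\sigma_a=\tilde\sigma_a$ along every ray. Since the rays fill $\Omega$ (all of it for collimated sources, and $\Omega\setminus\{\bx'\}$ for a point source), this gives $\sigma_a=\tilde\sigma_a$ almost everywhere, which is claim (a). The same computation is the integrated form of the Riccati initial value problem derived after~\eqref{EQ:A}, whose solutions are unique.

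For part (b), I would remove $\Upsilon$ by forming the log-ratio of the two data sets, following Sections~\ref{SUBSEC:rec-collimated} and~\ref{SUBSEC:rec-point}. The crucial point is that $H_1$ and $H_2$ are both evaluated at the \emph{same} point $\bx$ and hence carry the identical factor $\Upsilon(\bx)$, which cancels in $H_2/H_1$; what remains is a difference of line integrals of $\sigma_a$. For the two opposing collimated sources a directional derivative $\bv'\cdot\nabla$ of this quantity yields $2\sigma_a(\bx)$; for the two point sources the same differentiation, together with the triangle relation~\eqref{EQ:Triangle}, yields $\bigl(1-\tfrac{\bv'\cdot\bar\bv}{\bv''\cdot\bar\bv}\bigr)\sigma_a(\bx)$. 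Equal data thus forces $\sigma_a=\tilde\sigma_a$ at every $\bx$ where the scalar prefactor is nonzero, and then $\Upsilon=\tilde\Upsilon$ follows from $\Upsilon=H_1/(\sigma_a\aver{u_1}_\bv)$ after solving~\eqref{EQ:ERT Free} for $u_1$, which depends only on the now common $\sigma_a$ and $g_1$. I expect the only real obstacle to be the point source case: one must check that the geometric prefactor $1-\tfrac{\bv'\cdot\bar\bv}{\bv''\cdot\bar\bv}$ vanishes only on a set of measure zero---a codimension-one condition on $\bx$ for fixed source locations $\bx',\bx''$---and that $\sigma_a\aver{u_1}_\bv$ is nonzero off a null set, so that the recovery of both coefficients holds almost everywhere, which is the appropriate notion of uniqueness for the $L^1\cap L^\infty$ class of assumption~\textbf{(A1)}.
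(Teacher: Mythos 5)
Your proposal is correct and follows essentially the same route as the paper: the theorem there is stated as a summary of the explicit reconstruction procedures of Sections~\ref{SUBSEC:rec-collimated} and~\ref{SUBSEC:rec-point}, and your argument simply reassembles them --- the stability bound is read off from the representation $H=\beta\,\Phi[\Upsilon,\sigma_a]$ with the known source factor $\beta$, part (a) is the paper's ``equivalent reconstruction procedure'' in integrated form ($\sigma_a=F/(1-\int_0^t F)$ is exactly $\sigma_a=H/(\Upsilon u)$), and part (b) is the log-ratio cancellation of $\Upsilon$ followed by directional differentiation. Your worry about the geometric prefactor in the point-source case is even milder than you suggest: $1-\tfrac{\bv'\cdot\bar\bv}{\bv''\cdot\bar\bv}$ vanishes only where the triangle $\bx'\bx''\bx$ degenerates with $\bx$ outside the segment $[\bx',\bx'']$, which by convexity of $\Omega$ does not occur in the interior.
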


\section{QPAT of scattering media}
\label{SEC:Scattering}

We now consider the QPAT problem for scattering media. When the scattering effect is very weak, the results obtained in the previous section could be used to obtain a good approximation of the reconstruction. We thus assume that the scattering effects are sufficiently strong so that neglecting them would deteriorate significantly the quality of the reconstructions. On the other hand, we assume that the scattering effect is not strong enough for us to model the light propagation process with the diffusion model for which explicit reconstruction strategies have been proposed~\cite{BaRe-IP11,BaRe-IP12,BaUh-IP10}.

\subsection{Stability and uniqueness results}
\label{SUBSEC:Uniq}

Unlike the non-scattering regime where we have unique and stable reconstruction using only a small number  of interior data sets, in scattering media, we have only stability and uniqueness results from information of the full albedo operator:
\begin{equation}\label{EQ:Albedo}
	\cA:
	\begin{array}{lcl} 
		L^1(\Gamma_-,d\bxi) &\rightarrow& L^1(\Omega)\\
		g(\bx,\bv) & \mapsto & \cA g=H(\bx)
	\end{array} .
\end{equation}
It is the clear that $\cA$ depends on all three coefficients $\Upsilon$, $\sigma_a$ and $\sigma_s$. We denote by $\|\cA\|_{\cL(L^1(\Gamma_-,d\bxi); L^1(\Omega))}$ the norm of $\cA$ as a linear operator from $L^1(\Gamma_-,d\bxi)$ to $L^1(\Omega)$.

\medskip
We then have the following stability estimate following the analysis in~\cite[Theorem 2.3]{BaJoJu-IP10}.
\begin{theorem}\label{Thm:Stab} 
	Let $(\Upsilon,\sigma_a,\sigma_s)$ and $(\tilde\Upsilon,\tilde\sigma_a,\tilde\sigma_s)$ be two set of coefficients satisfying the regularity assumptions in {\bf (A1)}, and $\cA$, $\tilde\cA$ the corresponding albedo operators. Denote by $\sigma=\sigma_a+\sigma_s$ and $\tilde\sigma=\tilde\sigma_a+\tilde\sigma_s$. Then the following result holds
	\begin{multline}\label{EQ:Stab}
		\int_0^{\tau_+(\bx',\bv')}|\Upsilon(\bx'+t\bv')\sigma_a(\bx'+t\bv')e^{-\int_0^t\sigma(\bx'+s\bv')ds} 
				- \tilde\Upsilon(\bx'+t\bv')\tilde\sigma_a(\bx'+t\bv')e^{-\int_0^t\tilde\sigma(\bx'+s\bv')ds}| dt \\ \le \|\cA-\tilde\cA\|_{\cL(L^1(\Gamma_-,d\bxi); L^1(\Omega))}
	\end{multline}	
	for any point $(\bx',\bv')\in\Gamma_-$.
\end{theorem}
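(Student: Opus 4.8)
The plan is to follow the singular-decomposition strategy of \cite[Theorem 2.3]{BaJoJu-IP10}, adapted to the interior absorbed-energy datum $H=\Upsilon\aver{\sigma_a u}_\bv$ in place of the outgoing boundary flux. The idea is to probe the albedo operator with boundary sources that concentrate at a fixed $(\bx',\bv')\in\Gamma_-$ and to isolate the \emph{ballistic} (uncollided) contribution to $H$, which lives on the ray $\{\bx'+t\bv' : 0\le t\le\tau_+(\bx',\bv')\}$ and carries exactly the weight $\beta(\bx'+t\bv'):=\Upsilon(\bx'+t\bv')\sigma_a(\bx'+t\bv')\,e^{-\int_0^t\sigma(\bx'+s\bv')\,ds}$ appearing on the left-hand side of \eqref{EQ:Stab}.

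First I would expand the transport solution in its collision (Neumann) series $u=\sum_{k\ge 0}u_k$, where $u_0$ is the ballistic term, given by the analogue of \eqref{EQ:Solu} but with $\sigma_a$ replaced by the total extinction $\sigma=\sigma_a+\sigma_s$ (uncollided photons leave the beam through both absorption and scattering), and $u_k$ is obtained from $u_{k-1}$ by one application of $\sigma_s K$ followed by integration along characteristics. This induces a splitting $\cA=\sum_k\cA_k$ and, writing $H=H_0+H_s$ with $H_s=\sum_{k\ge 1}H_k$, a decomposition of the datum into a ballistic part and a scattered part. Next, for fixed $(\bx',\bv')\in\Gamma_-$ I would choose nonnegative sources $g_\eps$, normalized so that $\|g_\eps\|_{L^1(\Gamma_-,d\bxi)}=1$, approximating the Dirac mass at $(\bx',\bv')$. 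Using the change of variables $\bx=\by+t\bv$ (with entry point $\by\in\Sigma_-(\bv)$, depth $t$, and Jacobian $|\bv\cdot\bn(\by)|$), a direct computation shows that the ballistic datum $H_{0,\eps}$ concentrates, as $\eps\to 0$, on an $\eps$-tube about the ray through $(\bx',\bv')$ with a transverse profile of unit mass. Since this ray, and hence the tube and transverse profile, depend only on $(\bx',\bv')$ and are common to both sets of coefficients, it follows that
\[
\|H_{0,\eps}-\tilde H_{0,\eps}\|_{L^1(\Omega)}\;\longrightarrow\;\int_0^{\tau_+(\bx',\bv')}\big|\beta(\bx'+t\bv')-\tilde\beta(\bx'+t\bv')\big|\,dt,
\]
which is precisely the quantity to be bounded.

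To conclude I would use that $\|(\cA-\tilde\cA)g_\eps\|_{L^1(\Omega)}\le\|\cA-\tilde\cA\|_{\cL(L^1(\Gamma_-,d\bxi);L^1(\Omega))}$ for every $\eps$. Restricting the $L^1(\Omega)$ norm of $(\cA-\tilde\cA)g_\eps=H_\eps-\tilde H_\eps$ to the tube and applying the reverse triangle inequality against the scattered parts gives
\[
\|\cA-\tilde\cA\|\;\ge\;\|H_{0,\eps}-\tilde H_{0,\eps}\|_{L^1(\mathrm{tube})}-\|H_{s,\eps}-\tilde H_{s,\eps}\|_{L^1(\mathrm{tube})},
\]
so that passing to the limit yields \eqref{EQ:Stab}, \emph{provided} the scattered contributions do not survive on the tube.

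The main obstacle, and the technical core inherited from \cite{BaJoJu-IP10}, is exactly this last point: I must show that $H_{s,\eps}$ and $\tilde H_{s,\eps}$ are strictly less singular than the ballistic term, so that $\|H_{s,\eps}-\tilde H_{s,\eps}\|_{L^1(\mathrm{tube})}\to 0$ while the ballistic limit above retains the full line integral. The mechanism is that each application of the collision operator $\sigma_s K$ integrates over the sphere $\bbS^{d-1}$ of directions and thereby smooths the concentrating angular mass: the once- and multiply-scattered interior data remain bounded on the tube (with only a weak, integrable singularity along the ray when $d=2$), whereas the ballistic part concentrates as a genuine measure on the ray, whose tube-measure tends to zero. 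Establishing these uniform regularity and smallness bounds for the $H_k$, $k\ge 1$ -- i.e.\ the separation of singularities in the collision expansion -- is where the assumptions {\bf (A1)}--{\bf (A3)} and the mapping properties of $K$ enter, and is the step requiring the most care; once it is in place, collecting the ballistic limit and the vanishing of the scattered contribution on the tube gives the estimate for every $(\bx',\bv')\in\Gamma_-$.
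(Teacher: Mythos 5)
Your proposal follows exactly the route the paper intends: the paper gives no self-contained proof but defers to the singular-decomposition analysis of the albedo operator in \cite[Theorem 2.3]{BaJoJu-IP10}, noting that the only modification is to carry the factor $\Upsilon$ through the final step, and your sketch correctly reconstructs that argument --- collision expansion, approximate delta sources at $(\bx',\bv')$, concentration of the ballistic datum with weight $\Upsilon\sigma_a e^{-\int_0^t\sigma}$ on the ray, and the separation-of-singularities estimates showing the scattered parts do not survive on the tube. You also correctly identify that the uncollided term decays with the total extinction $\sigma=\sigma_a+\sigma_s$ and that the hard technical content is the regularity of the multiply-scattered contributions, which is precisely what is inherited from the cited reference.
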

The result is a slight modification of the result in~\cite{BaJoJu-IP10} where the singular part of the Schwartz kernel of the albedo operator is analyzed in detail. The only change that we have to make is to include the Gr\"uneisen coefficient $\Upsilon$ in the final step of the analysis. We thus would not repeat the lengthy analysis here but refer to~\cite{BaJoJu-IP10}.

It turns out that the stability result~\eqref{EQ:Stab} leads to the unique reconstruction of the coefficients in some simplified situations. More precisely, if one of the three coefficients is known, we can reconstruct the other two coefficients uniquely as summarized in the following corollary.
\begin{corollary}\label{THM:Uniq}
	Let $(\Upsilon,\sigma_a,\sigma_s)$ and $(\tilde\Upsilon,\tilde\sigma_a,\tilde\sigma_s)$ be two set of coefficients satisfying the regularity assumptions in {\bf (A1)}. Then the following statements hold: (a) If $\Upsilon=\tilde\Upsilon$, then $\cA=\tilde\cA$ implies $(\sigma_a,\sigma_s)=(\tilde\sigma_a,\tilde\sigma_s)$; (b) If $\sigma_s=\tilde\sigma_s$, then $\cA=\tilde\cA$ implies $(\Upsilon, \sigma_a)= (\tilde\Upsilon, \tilde\sigma_a)$; (c) If $\sigma_a=\tilde\sigma_a$, then $\cA=\tilde\cA$ implies $(\Upsilon, \sigma_s)= (\tilde\Upsilon, \tilde\sigma_s)$.
\end{corollary}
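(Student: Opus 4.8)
The plan is to feed the degenerate case $\cA=\tilde\cA$ into the stability estimate of Theorem~\ref{Thm:Stab} and then extract each coefficient by an elementary direction-reversal argument. When $\cA=\tilde\cA$ the right-hand side $\norm{\cA-\tilde\cA}_{\cL(L^1(\Gamma_-,d\bxi);L^1(\Omega))}$ of~\eqref{EQ:Stab} vanishes. Since the integrand there is nonnegative, the vanishing of the integral for every $(\bx',\bv')\in\Gamma_-$ forces the integrand to vanish for a.e.\ $t\in(0,\tau_+(\bx',\bv'))$ along a.e.\ line. Writing $\bx=\bx'+t\bv'$ and $\bx'=\bx-\tau_-(\bx,\bv')\bv'$, this gives the master pointwise identity
\begin{equation}\label{EQ:master}
	\Upsilon(\bx)\sigma_a(\bx)\,e^{-\int_0^{\tau_-(\bx,\bv')}\sigma(\bx-\tau_-(\bx,\bv')\bv'+s\bv')ds}
	= \tilde\Upsilon(\bx)\tilde\sigma_a(\bx)\,e^{-\int_0^{\tau_-(\bx,\bv')}\tilde\sigma(\bx-\tau_-(\bx,\bv')\bv'+s\bv')ds}
\end{equation}
for a.e.\ $\bx\in\Omega$ and a.e.\ direction $\bv'\in\bbS^{d-1}$.

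The next step is to reduce all three cases to one statement. Because \textbf{(A1)} gives $\Upsilon,\sigma_a\ge c_0>0$, every prefactor in~\eqref{EQ:master} is strictly positive, so I may divide and take logarithms. In case (a) the factor $\Upsilon=\tilde\Upsilon$ cancels; in case (c) the common factor $\sigma_a=\tilde\sigma_a$ cancels; in case (b) nothing cancels but $\sigma=\sigma_a+\sigma_s$ and $\tilde\sigma=\tilde\sigma_a+\sigma_s$ share $\sigma_s$. In each case I obtain an identity of the common form
\[
	\phi(\bx)=\int_0^{\tau_-(\bx,\bv')}\beta(\bx-\tau_-(\bx,\bv')\bv'+s\bv')\,ds,
\]
where $\beta:=\sigma-\tilde\sigma$ is \emph{independent of the direction} $\bv'$ and $\phi$ is the logarithm of the ratio of the position-dependent prefactors, namely $\phi=\ln(\sigma_a/\tilde\sigma_a)$ in (a), $\phi=\ln(\Upsilon\sigma_a/\tilde\Upsilon\tilde\sigma_a)$ in (b), and $\phi=\ln(\Upsilon/\tilde\Upsilon)$ in (c). Correspondingly $\beta=(\sigma_a-\tilde\sigma_a)+(\sigma_s-\tilde\sigma_s)$ in (a), $\beta=\sigma_a-\tilde\sigma_a$ in (b), and $\beta=\sigma_s-\tilde\sigma_s$ in (c).

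The heart of the argument is then a chord argument showing that $\phi=\int\beta$ forces $\phi\equiv 0$ and $\beta=0$. Evaluating the identity along $\bv'$ expresses $\phi(\bx)$ as the integral of $\beta$ over the half-chord from the entry point to $\bx$; evaluating it along $-\bv'$ (whose entry point is $\bx+\tau_+(\bx,\bv')\bv'$, admissible in $\Gamma_-$ for direction $-\bv'$ since $\Omega$ is convex) expresses $\phi(\bx)$ as the integral of $\beta$ over the complementary half-chord. Adding the two yields $2\phi(\bx)=\int_{\text{chord}}\beta$, the integral of $\beta$ over the whole chord of $\Omega$ in direction $\bv'$. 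This value is the same for every point on that chord, so $\phi$ is constant along each chord, and by convexity any two points of $\Omega$ lie on a common chord, so $\phi$ is globally constant. Letting $t\downarrow 0$ in $\phi(\bx'+t\bv')=\int_0^t\beta\,ds$ shows the constant equals $0$, whence $\beta=0$ a.e.\ as well. Reading this back: case (a) gives $\sigma_a=\tilde\sigma_a$ and $\sigma=\tilde\sigma$, hence $\sigma_s=\tilde\sigma_s$; case (c) gives $\Upsilon=\tilde\Upsilon$ and $\sigma_s=\tilde\sigma_s$; case (b) gives $\sigma_a=\tilde\sigma_a$ together with $\Upsilon\sigma_a=\tilde\Upsilon\tilde\sigma_a$, hence $\Upsilon=\tilde\Upsilon$ by positivity of $\sigma_a$.

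The main obstacle I anticipate is not the algebra but the low regularity under \textbf{(A1)}: the coefficients are merely $L^1\cap L^\infty$, so all ``pointwise'' identities must be read a.e., and the passage from ``the line integral vanishes for every $(\bx',\bv')$'' to ``$\phi=\int\beta$ for a.e.\ line'' requires a Fubini-type justification for the change of variables $(\bx',\bv',t)\mapsto(\bx,\bv')$. I deliberately phrase the chord argument through integrals of $\beta$ rather than through directional differentiation precisely so that no smoothness of the coefficients is needed; the positivity furnished by \textbf{(A1)} is what legitimizes taking logarithms throughout.
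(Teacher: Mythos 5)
Your proof is correct and takes essentially the same route as the paper's: both feed $\cA=\tilde\cA$ into the stability estimate of Theorem~\ref{Thm:Stab} to equate the quantities $\Upsilon\sigma_a e^{-\int_0^t\sigma}$ along lines, and both rely on the same direction-reversal trick (pairing the line integral from $(\bx',\bv')$ with the one from the opposite endpoint in direction $-\bv'$) to decouple the prefactor $\Upsilon\sigma_a$ from the attenuation $\sigma=\sigma_a+\sigma_s$, after which the three cases follow by the identical elementary bookkeeping. The only divergence is in the final extraction step: the paper differentiates the log-ratio in $t$ to read off $2\sigma(\bx'+t\bv')$, whereas you sum the two half-chord integrals to show the log-ratio is constant on chords and hence zero, a variation that avoids differentiation under the low regularity of \textbf{(A1)} but is otherwise equivalent.
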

\begin{proof}
	Theorem~\ref{THM:Uniq} implies that we can reconstruct uniquely
	\begin{equation}
		\cM(\bx',\bv',t)=\Upsilon(\bx'+t\bv')\sigma_a(\bx'+t\bv')e^{-\int_0^t\sigma(\bx'+s\bv')ds}, 
			\quad (\bx',\bv')\in\Gamma_- .
	\end{equation}
	We can reconstruct the same quantity for the a different point on $(\bx'+\tau_+(\bx',\bv')\bv',-\bv')$ on $\Gamma_-$ with $t$ replaced by $\tau_+(\bx',\bv')-t$: $\cM(\bx'+\tau_+(\bx',\bv')\bv',-\bv',\tau_+(\bx',\bv')-t)$. Taking the log of the ratio of the two quantites, we obtain
	\begin{multline}
		\ln \frac{\cM(\bx'+\tau_+(\bx',\bv')\bv',-\bv',\tau_+(\bx',\bv')-t)}{\cM(\bx',\bv',t)}=\\
		-\int_0^{\tau_+(\bx',\bv')-t}\sigma(\bx'+\tau_+(\bx',\bv')-s\bv')ds+\int_0^t\sigma(\bx'+s\bv')ds. 
	\end{multline}
	 Taking the derivative of the above quantity gives us $2\sigma(\bx'+t\bv')$ for a.e. $(\bx',\bv')\in\Gamma_-$ and $t>0$. This in turn gives us $\sigma(\bx)$ for a.e. $\bx\in\Omega$. Once $\sigma(\bx)$ is uniquely determined, we reconstruct $\Upsilon\sigma_a\equiv \mu$ uniquely from $\cM(\bx',\bv',t)$. (a) If $\Upsilon$ is known already, then $\sigma_a$ is known from $\mu$, and thus $\sigma-\sigma_a$ would give us $\sigma_s$; (b) If $\sigma_s$ is known, $\sigma-\sigma_s$ would give us $\sigma_a$. $\Upsilon$ is then known from $\mu$; (c) If $\sigma_a$ is known, then $\mu$ would give us $\Upsilon$ and $\sigma$ would give us $\sigma_s$. This completes the proof.
\end{proof}
\begin{remark}\rm
	Case (a) was presented in~\cite{BaJoJu-IP10}. The construction of $\cM(\bx',\bv',t)$ and $\cM(\bx'+\tau_+(\bx',\bv')\bv',-\bv',\tau_+(\bx',\bv')-t)$ is in the same spirit as the construction of ~\eqref{EQ:A} and ~\eqref{EQ:B} (or ~\eqref{EQ:C} and ~\eqref{EQ:D}) using collimated (or point) sources in the previous section.
\end{remark}

The above stability and uniqueness results, however, do not guarantee the unique reconstruction of all three coefficients simultaneously. In fact, theory based on the diffusion model in~\cite{BaRe-IP11} states that one can not simultaneously reconstruct all three coefficient uniquely unless additional information is available~\cite{BaRe-IP12}. Moreover, unlike the situation in the non-scattering regime, no explicit reconstruction method can be derived in the scattering media. We thus have to rely mainly on other computational methods for the reconstructions. We now derive a numerical reconstruction method based on the Born approximation of the nonlinear inverse problem.

\subsection{Linearized reconstruction with Born approximation}
\label{SUBSEC:Born}

We linearize around some \emph{known}, not necessarily constant, background optical properties $\sigma_{a0}(\bx)$ and $\sigma_{s0}(\bx)$. To be more precise, we assume
\begin{equation}\label{EQ:Pert Sigma}
	\sigma_a(\bx)=\sigma_{a0}(\bx)+\tilde\sigma_a(\bx), \qquad 
	\sigma_s(\bx)=\sigma_{s0}(\bx)+\tilde\sigma_s(\bx),
\end{equation}
where the perturbations are small in the sense that $\left \| \dfrac{\tilde\sigma_a(\bx)}{\sigma_{a0}} \right \|_{L^\infty(\Omega)}\ll 1$ and $\left\| \dfrac{\tilde\sigma_s(\bx)}{\sigma_{s0}} \right\|_{L^\infty(\Omega)}\ll 1$. Note again that $\Upsilon(\bx)$ is not assumed in perturbative form since the inverse problem of reconstructing $\Upsilon$ is linear (because the data $H$ is linearly proportional to the unknown $\Upsilon$).

The solution of the radiative transport problem with coefficients $\sigma_a$ and $\sigma_s$ can then be written as
\begin{equation}\label{EQ:Pert U}
	u(\bx,\bv)=U(\bx,\bv)+\tilde u(\bx,\bv),
\end{equation}
where $U(\bx,\bv)$ is the solution of the transport equation~\eqref{EQ:ERT} with the known background coefficients $\sigma_{a0}$ and $\sigma_{s0}$, and $\tilde u(\bx,\bv)$ is the perturbation in the solution caused by the perturbations $\tilde\sigma_a(\bx)$ and $\tilde\sigma_s(\bx)$ in the coefficient. The equation satisfied by the perturbation $\tilde u(\bx,\bv)$, to the first order, is 
\begin{equation}\label{EQ:ERT Pert}
  \begin{array}{rcll}
  	\bv\cdot\nabla \tilde u(\bx,\bv) 
  	+ \sigma_{a0}(\bx)\tilde u &=&
  	\sigma_{s0}(\bx)K(\tilde u) -\tilde\sigma_a(\bx) U + \tilde\sigma_s(\bx)K(U) &\mbox{ in }\ X\\
    \tilde u(\bx,\bv) &=& 0 &\mbox{ on }\ \Gamma_{-}.
  \end{array}
\end{equation}
It is can be shown, using the fact that $u$ is Fr\'echet differentiable with respective to $\sigma_a$ and $\sigma_s$~\cite{Dorn-IP98,Dorn-OE00,ReBaHi-SIAM06}, that the terms omitted are indeed high order terms.

We now introduce the (adjoint) Green's function $G(\bx,\bv;\by)$ for the transport problem with background optical properties, the solution of the following adjoint transport equation 
\begin{equation}\label{EQ:Green}
  \begin{array}{rcll}
  	-\bv\cdot\nabla G(\bx,\bv;\by) + 
  	\sigma_{a0}(\bx) G(\bx,\bv;\by)&=&
  	\sigma_{s0}(\bx)K(G)(\bx,\bv;\by) - \delta(\bx-\by)&\mbox{ in }\ X\\
    G(\bx,\bv;\by) &=& 0 &\mbox{ on }\ \Gamma_{+},
  \end{array}
\end{equation}
where $\delta$ is the usual Dirac delta function. Note that since the transport operator is not self-adjoint, the boundary condition is now put on $\Gamma_{+}$. Rigorously speaking, this equation only holds in the weak sense. We need to multiply it with a test function that is $\cC_+$. That requires that the coefficients $\sigma_{a0}$ and $\sigma_{s0}$ are at least continuous in $\bar\Omega$.

We can now multiply~\eqref{EQ:ERT Pert} by $\Upsilon G$ and integrate over $X$, and multiply~\eqref{EQ:Green} by $\Upsilon \tilde u$ and integrate over $X$, and subtract the results to show that
\begin{multline}\label{EQ:LIN}
	\Upsilon(\bx)\aver{\tilde u(\bx,\bv)}_\bv 
	=\int_\Omega \Upsilon(\by)\tilde\sigma_a(\by)
	\aver{U(\by,\bv) G(\by,\bv;\bx)}_\bv d\by \\
	-\int_\Omega \Upsilon(\by)\tilde\sigma_s(\by) \aver{K(U)(\by,\bv) G(\by,\bv;\bx)}_\bv d\by.
\end{multline}

The perturbations in~\eqref{EQ:Pert Sigma} and~\eqref{EQ:Pert U} imply that the interior data $H$ is now given, to the first order, by 
\begin{equation}\label{EQ:Data Pert}
	H=\Upsilon\sigma_{a0}\aver{U}_\bv + \Upsilon \tilde\sigma_a \aver{U}_\bv 
		+\Upsilon \sigma_{a0} \aver{\tilde u}_\bv .
\end{equation}
Combining~\eqref{EQ:Data Pert} and~\eqref{EQ:LIN}, we can show that 
\begin{equation}\label{EQ:LIN INTGRL}
	\dfrac{H}{\sigma_{a0}\aver{U}_\bv}(\bx)=\cI(\Upsilon)+\cL^a(\Upsilon\tilde\sigma_a)+\cL^s(\Upsilon\tilde\sigma_s),
\end{equation}
where $\cI$ is the identity operator and the operators $\cL^a$ and $\cL^s$ are defined, respectively, as
\begin{eqnarray}
	\label{EQ:La}
	\cL^a(\Upsilon\tilde\sigma_a) &=& \dfrac{\Upsilon \tilde\sigma_a}{\sigma_{a0}}  
	      +\int_\Omega \Upsilon(\by)\tilde\sigma_a(\by)
		\dfrac{\aver{U G}_\bv}{\aver{U}_\bv}(\by;\bx) d\by\\
	\label{EQ:Ls}
	\cL^s(\Upsilon\tilde\sigma_s) &=& -\int_\Omega \Upsilon(\by)\tilde\sigma_s(\by) 
		\dfrac{\aver{K(U)G}_\bv}{\aver{U}_\bv}(\by;\bx) d\by .	 
\end{eqnarray}
Here we have normalized the data by $\sigma_{a0}\aver{U}_\bv$. This can be done when $\aver{U}_\bv(\bx)\neq 0$ which can be guaranteed by selecting appropriate illumination source $g(\bx,\bv)$. The normalization makes the sizes of the kernels of the integral operators $\cL^a$ and $\cL^s$ on the same order at all locations. This is important due to the fact that the strength of optical signals usually varies over several orders of magnitude across the domain of interest. The normalization results in well-balanced matrix elements when the integral equation~\eqref{EQ:LIN INTGRL} is discretized for numerical solution.

Equation~\eqref{EQ:LIN INTGRL} is a linear integral equation for the three variables $\Upsilon$, $\Upsilon\tilde\sigma_a$ and $\Upsilon\tilde\sigma_s$. The kernels for the operator $\cL^a$ and that for $\cL^s$, $\aver{U G}_\bv/\aver{U}_\bv$ and $\aver{K(U)G}_\bv/\aver{U}_\bv$ are known since they only involve the solutions of the forward and adjoint transport equations with background optical properties $\sigma_{a0}$ and $\sigma_{s0}$. It remains to solve~\eqref{EQ:LIN INTGRL} to reconstruct the unknowns ($\Upsilon$, $\Upsilon\tilde\sigma_a$ and $\Upsilon\tilde\sigma_s$), and then the real coefficients ($\Upsilon$, $\tilde\sigma_a$ and $\tilde\sigma_s$).

In practice, we do not have the full albedo data, but only a finite number of data sets generated from different sources. Let us assume that there are $N_g$ data sets collected from $N_g$ illuminations. The data sets are denoted by $\{H_i\}_{i=1}^{N_g}$ and the corresponding illuminations are denoted by $\{g_i\}_{i=1}^{N_g}$. The system of linear integral equation with ${N_g}$ data sets can be written as
\begin{equation}\label{EQ:LIN INTGRL Sys}
	\cL
	\left(\begin{array}{c}
					\Upsilon\\
					\Upsilon\tilde\sigma_a\\
					\Upsilon\tilde\sigma_s
				\end{array}
	\right)
	\equiv
	\left(\begin{array}{ccc}
					\cI & \cL^a_1 & \cL^s_1\\
					\vdots  & \vdots & \vdots\\
					\cI & \cL^a_i & \cL^s_i\\
					\vdots  & \vdots & \vdots\\
					\cI & \cL^a_{N_g} & \cL^s_{N_g}
				\end{array}
	\right)
	\left(\begin{array}{c}
					\Upsilon\\
					\Upsilon\tilde\sigma_a\\
					\Upsilon\tilde\sigma_s
				\end{array}
	\right)
	=\dfrac{1}{\sigma_{a0}}
	\left(\begin{array}{c}
					H_1/\aver{U_1}_\bv\\
					\vdots\\
					H_i/\aver{U_i}_\bv\\
					\vdots\\
					H_{N_g}/\aver{U_{N_g}}_\bv
				\end{array}
	\right)
	\equiv
	\left(\begin{array}{c}
					h_1\\
					\vdots\\
					h_i\\
					\vdots\\
					h_{N_g}
				\end{array}
	\right)
\end{equation}
where $\cL^a_i$ and $\cL^s_i$ are the same operators defined in~\eqref{EQ:La} and ~\eqref{EQ:Ls} respectively, with $U$ replaced with $U_i$. 

The integral formulation can be discretized to get a linear system of equations. Let us assume that we have a numerical procedure, say a quadrature rule, to discretize the integral equation~\eqref{EQ:LIN}, and assume that we discretize $\tilde\sigma_a(\bx)$ on a mesh of $N_\Omega$ nodes, $\{\by_k\}_{k=1}^{N_\Omega}$. Then $\cL$ will be matrix consists of ${N_g}\times 3$ blocks, each having size $N_\Omega\times N_\Omega$. The $lk$ elements of the matrices are given by
\begin{equation}
	(\cL^a_i)_{lk} = \dfrac{\delta_{lk}}{\sigma_{a0}(\bx_l)}+\xi_k\dfrac{\aver{U_i(\by_k,\bv) G(\by_k,\bv;\bx_l)}_\bv}{\aver{U_i(\bx_l,\bv)}_\bv},\quad
	(\cL^s_i)_{lk} = -\xi_k\dfrac{\aver{K(U_i)(\by_k,\bv) G(\by_k,\bv;\bx_l)}_\bv}{\aver{U_i(\bx_l,\bv)}_\bv},
\end{equation}
where $\xi_k$ ($1\le k\le N_\Omega$) 
is the weight of the quadrature on the $k$-th element. For simplicity of notation, we will not differentiate from now on integral operators and their discrete equivalences, the matrices.

The system~\eqref{EQ:LIN INTGRL Sys} can be over- or under-determined and so is often solved in regularized least-square sense. More specifically, we solve the problem by solve the following minimization problem
\begin{equation}\label{EQ:Min}
\min_{\Upsilon,\Upsilon\tilde\sigma_a,\Upsilon\tilde\sigma_s}
\|\cL
\begin{pmatrix}
\Upsilon\\
\Upsilon\tilde\sigma_a\\
\Upsilon\tilde\sigma_s
\end{pmatrix}				
- \begin{pmatrix}
h_1\\ \vdots\\ h_{N_g}
\end{pmatrix} \|_{l^2}^2 
+ \rho\|\cD
\begin{pmatrix}
\Upsilon\\
\Upsilon\tilde\sigma_a\\
\Upsilon\tilde\sigma_s
\end{pmatrix} \|_{l^2}^2
\end{equation}
where the first term is the data fidelity term and the second term is a Tikhonov regularization term with the strength of regularization given by $\rho$. The operator $\cD$ is the discrete differentiation. The regularization term is needed due to the presence of noise in practice even though the inverse problem here is very stable compared to similar inverse transport problems in diffuse optical tomography~\cite{Arridge-IP99,Bal-IP09,Ren-CiCP10}.

\begin{remark}\rm
The $l^2$ least squares formulation and Tikhonov regularization strategy~\eqref{EQ:Min} are adopted here mainly for their computational efficiency. One can employ different data fidelity and regularization terms such as those based on total variation (TV) and $l^1$ norms~\cite{GaZh-OE10A,GaZh-OE10B}. Different selection can be effective for different problems. We refer interested reader to~\cite{GaZh-OE10A,GaZh-OE10B} for detail numerical analysis and comparison of performances of different approaches. The primary focus here is the properties of the inverse transport problems in QPAT, not the details of the numerical implementation.
\end{remark}

It is important to notice that the norm of the operator $\cL^s$ is in general small compared to that of $\cL^a$ and the identity operator $\cI$, especially when the transport process is diffusive and the scattering is very isotropic. In those cases, $U(\bx,\bv)$ is roughly independent of $\bv$ so that $\|K(U)\|_{L^1(X)}\ll 1$. Thus $\aver{K(U)G}_\bv$ is very small. In this case, when noise is present in the measured data, the reconstruction of the scattering coefficient $\sigma_s$ is easily corrupted by noise. That was observed in the past in optical tomography~\cite{Arridge-IP99,ArSc-IP09,Ren-CiCP10}. We thus separate the reconstruction into the following two cases.

\subsubsection{Reconstructing ($\Upsilon$, $\sigma_a$)}

To reconstruct $\tilde\sigma_a$ and $\Upsilon$ assuming $\sigma_s$ is known, we solve the simplified least-squares problem
\begin{equation}\label{EQ:Min GammaMua}
	\min_{\Upsilon,\Upsilon\tilde\sigma_a}
	\|\left(\begin{array}{cc}
					\cI & \cL_1^a \\
					\vdots & \vdots \\
					\cI & \cL_{N_g}^{a}
				\end{array}
	\right)
	\left(\begin{array}{c}
					\Upsilon\\
					\Upsilon\tilde\sigma_a
				\end{array}
	\right)
	-
	\left(\begin{array}{c}
					h_1\\
					\vdots\\
					h_{N_g}
				\end{array}
	\right) \|_{l^2}^2 +\rho\|\cD\left(\begin{array}{c}
					\Upsilon\\
					\Upsilon\tilde\sigma_a
				\end{array}
	\right)\|_{l^2}^2 .
\end{equation}
We recover from the solution the coefficients $\Upsilon$ and $\tilde\sigma_a$. The solution of the least-squares problem is simply
\begin{equation}\label{EQ:UpsilonMua Sol}
	\left(\begin{array}{c}
					\Upsilon\\
					\Upsilon\tilde\sigma_a
				\end{array}
	\right)
	=
	\Big[\left(\begin{array}{cc}
					{N_g}\cI & \sum_{i=1}^{N_g}\cL_i^a \\
					\sum_{i=1}^{N_g}\cL_i^{a*} & \sum_{i=1}^{N_g}\cL_i^{a*}\cL_i^a
				\end{array}
	\right)+\rho\cD^*\cD\Big]^{-1}
	\left(\begin{array}{ccc}
					\cI & \ldots & \cI \\
					\cL_1^{a*} & \ldots & \cL_{N_g}^{a*}
				\end{array}
	\right)
	\left(\begin{array}{c}
					h_1\\
					\vdots\\
					h_{N_g}
				\end{array}
	\right), 
\end{equation}
where the superscript $*$ is used to denote the adjoint of an operator, and $\cI^*=\cI$. The inversion of the regularized normal operator is achieved by a linear solver, not direct inversion.

\subsubsection{Reconstructing ($\sigma_a$, $\sigma_s$) or ($\Upsilon$, $\sigma_s$)}

To reconstruct either ($\sigma_a$, $\sigma_s$) or ($\Upsilon$, $\sigma_s$), we need to deal with an unbalanced system of equations caused by the smallness of the scattering component $\cL^s$. We consider here only the case of reconstructing ($\sigma_a$, $\sigma_s$). To obtain a similar algorithm to reconstruct ($\Upsilon$, $\sigma_s$), we only need to replace the operator $\cL^a$ in the algorithm by $\cI$.
We solve the simplified least-squares problem
\begin{equation}\label{EQ:Min MuaMus}
	\min_{\Upsilon\tilde\sigma_a,\Upsilon\tilde\sigma_s}
	\|\left(\begin{array}{cc}
					\cL_1^a & \cL_1^s \\
					\vdots & \vdots \\
					\cL_{N_g}^a & \cL_{N_g}^s
				\end{array}
	\right)
	\left(\begin{array}{c}
					\Upsilon\tilde\sigma_a\\
					\Upsilon\tilde\sigma_s
				\end{array}
	\right)
	-
	\left(\begin{array}{c}
					h_1\\
					\vdots\\
					h_{N_g}
				\end{array}
	\right) \|_{l^2}^2 +\rho\|\cD\left(\begin{array}{c}
					\Upsilon\tilde\sigma_a\\
					\Upsilon\tilde\sigma_s
				\end{array}
	\right)\|_{l^2}^2 .
\end{equation}
The normal equation for this minimization problem is
\begin{equation}
	\left(\begin{array}{cc}
					\sum_{i=1}^{N_g}\cL_i^{a*}\cL_i^a +\rho\cD_a^*\cD_a & \sum_{i=1}^{N_g}\cL_i^{a*}\cL_i^s \\
					\sum_{i=1}^{N_g}\cL_i^{s*}\cL_i^a & \sum_{i=1}^{N_g}\cL_i^{s*}\cL_i^s+\rho\cD_s^*\cD_s 
				\end{array}
	\right)
	\left(\begin{array}{c}
					\Upsilon\tilde\sigma_a\\
					\Upsilon\tilde\sigma_s
				\end{array}
	\right)
	=
	\left(\begin{array}{c}
					\sum_{i=1}^{N_g}\cL_i^{a*}h_i\\
					\sum_{i=1}^{N_g}\cL_i^{s*}h_i
				\end{array}
	\right), 
\end{equation}
where $\cD_a$ and $\cD_s$ are such that $\cD=\diag (\cD_a,\cD_s)$. Instead of solving this normal equation directly to get a solution similar to~\eqref{EQ:UpsilonMua Sol}, we perform Gauss elimination of the system to obtain
\begin{equation}
	\left(\begin{array}{cc}
					\sum_{i=1}^{N_g}\cL_i^{a*}\cL_i^a +\rho\cD_a^*\cD_a & \sum_{i=1}^{N_g}\cL_i^{a*}\cL_i^s \\
					0 & \cL^{red} 
				\end{array}
	\right)
	\left(\begin{array}{c}
					\Upsilon\tilde\sigma_a\\
					\Upsilon\tilde\sigma_s
				\end{array}
	\right)
	=
	\left(\begin{array}{c}
					\sum_{i=1}^{N_g}\cL_i^{a*}h_i\\
					h^{red}
				\end{array}
	\right), 
\end{equation}
where
\begin{equation*}
	\cL^{red}=\sum_{i=1}^{N_g}\cL_i^{s*}\cL_i^s+\rho\cD_s^*\cD_s -\sum_{i=1}^{N_g}\cL_i^{s*}\cL_i^a\big(\sum_{i=1}^{N_g}\cL_i^{a*}\cL_i^a +\rho\cD_a^*\cD_a\big)^{-1}\sum_{i=1}^{N_g}\cL_i^{a*}\cL_i^s,
\end{equation*}
and
\begin{equation*}
	h^{red}=\sum_{i=1}^{N_g}\cL_i^{s*}h_i -\Big[\sum_{i=1}^{N_g}\cL_i^{s*}\cL_i^a\big(\sum_{i=1}^{N_g}\cL_i^{a*}\cL_i^a +\rho\cD_a^*\cD_a\big)^{-1}\sum_{i=1}^{N_g}\cL_i^{a*}\cL_i^s\Big]\sum_{i=1}^{N_g}\cL_i^{a*}h_i.
\end{equation*}

This motivates the following two-step reconstruction procedure. 

\paragraph{Step I.} We reconstruct the coefficient $\Upsilon\tilde\sigma_s$ by solving the reduced linear system
\begin{equation}
	\cL^{red}\Upsilon\tilde\sigma_s =h^{red}.
\end{equation}
In the construction of the operator $\cL^{red}$, we need to apply the inverse of the operator $\sum_{i=1}^{N_g}\cL_i^{a*}\cL_i^a +\rho\cD_a^*\cD_a$. We construct this inverse as accurate as possible. For large problems, we avoid constructing the inverse operator directly. Instead, we construct it implicitly. More precisely, to apply the inverse operator to any object $z$ to get $\hat z=(\sum_{i=1}^{N_g}\cL_i^{a*}\cL_i^a +\rho\cD_a^*\cD_a)^{-1}z$, we solve the linear system $\big(\sum_{i=1}^{N_g}\cL_i^{a*}\cL_i^a +\rho\cD_a^*\cD_a\big)\hat z=z$ to maximum accuracy.

\paragraph{Step II.} Once the coefficient $\Upsilon\tilde\sigma_s$ is reconstructed, we can eliminate it from the system to obtain a linear system for the reconstruction of $\Upsilon\tilde\sigma_a$: $\big(\sum_{i=1}^{N_g}\cL_i^{a*}\cL_i^a +\rho\cD_a^*\cD_a\big)\Upsilon\tilde\sigma_a=\sum_{i=1}^{N_g}\cL_i^{a*}h_i-(\sum_{i=1}^{N_g}\cL_i^{a*}\cL_i^s)\Upsilon\tilde\sigma_s$.

\subsection{Minimization-based nonlinear reconstruction scheme}
\label{SUBSEC:Nonl Recon}

To solve the full nonlinear inverse problem, we reformulate the problem into a regularized least-squares formulation. More precisely, we minimize the following objective functional:
\begin{equation}\label{EQ:OBJ}
	\Phi(\Upsilon,\sigma_a,\sigma_s) 
		\equiv \dfrac{1}{2}\sum_{i=1}^{N_g}\|\Upsilon\sigma_a \aver{u_i}_\bv-H_i^*\|_{L^2(\Omega)}^2 
\end{equation}
where $u_i$ is the solution of the transport equation with the $i$th illumination, i.e., $u_i$ solves
\begin{equation}\label{EQ:ERT gi}
  \begin{array}{rcll}
  	\bv\cdot\nabla u_i(\bx,\bv) + 
  	\sigma_a(\bx) u_i(\bx,\bv)&=&
  	\sigma_s(\bx)K(u_i)(\bx,\bv)& \mbox{in}\ \ X\\
    u_i(\bx,\bv) &=& g_i(\bx,\bv) & \mbox{on}\ \ \Gamma_{-} .
  \end{array}
\end{equation}
The interior data collected for illumination $g_i$ is denoted by $H_i^*$.

It is known that the functional~\eqref{EQ:OBJ} is Fr\'echet differentiable with respect to the unknowns provided that the coefficients satisfy the assumptions in {\bf (A1)}; see for instance~\cite{Dorn-IP98,Dorn-OE00,ReBaHi-SIAM06}. We can thus use gradient-based minimization technique to solve this problem. To calculate the Fr\'echet derivatives of the objective functional, we use the method of adjoint equations~\cite{Vogel-Book02}. Let us denote by $w_i$ the solution of the following adjoint transport equation
\begin{equation}\label{EQ:Adjoint}
  \begin{array}{rcll}
  	-\bv\cdot\nabla w_i(\bx,\bv) + 
  	\sigma_a(\bx) w_i(\bx,\bv)&=&
  	\sigma_s(\bx)K(w_i)(\bx,\bv) + \Upsilon\sigma_a(\Upsilon\sigma_a \aver{u_i}_\bv-H_i^*) & \mbox{ in }\ X\\
    w_i(\bx,\bv) &=& 0 & \mbox{ on }\ \Gamma_{+} .
  \end{array}
\end{equation}
It is then straightforward to follow the standard calculations in~\cite{Dorn-IP98,Dorn-OE00,ReBaHi-SIAM06} to show that the Fr\'echet derivatives can be computed as follows.
\begin{theorem}
	The Fr\'echet derivatives of $\Phi$ are given by
	\begin{eqnarray}
	\langle \frac{\partial\Phi}{\partial \Upsilon}, \widehat \Upsilon \rangle = 
	\dsum_{i=1}^{N_g}\langle (\Upsilon\sigma_a \aver{u_i}_\bv-H_i^*) \sigma_a\aver{u_i}_\bv, \widehat \Upsilon \rangle_{L^2(X)},\\		
	\langle \frac{\partial\Phi}{\partial \sigma_a}, \widehat{\sigma_a} \rangle =
	\dsum_{i=1}^{N_g}\aver{(\Upsilon\sigma_a \aver{u_i}_\bv-H_i^*)\Upsilon\aver{u_i}_\bv - \aver{u_i  w_i}_\bv, \widehat {\sigma_a} }_{L^2(X)},\\
\label{EQ:Der sigma}\langle \frac{\partial\Phi}{\partial \sigma_s}, \widehat{\sigma_s} \rangle =\dsum_{i=1}^{N_g} \langle \aver{K(u_i)  w_i}_\bv, \widehat {\sigma_s} \rangle_{L^2(X)}
\end{eqnarray}
where $\langle\cdot,\cdot\rangle_{L^2(X)}$ denotes the usual inner product in $L^2(X)$.
\end{theorem}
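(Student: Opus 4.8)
The plan is to use the adjoint-state method, regarding each $u_i$ as an implicit function of $(\sigma_a,\sigma_s)$ through the forward constraint~\eqref{EQ:ERT gi}, and exploiting the fact that $u_i$ does \emph{not} depend on $\Upsilon$. Writing $r_i := \Upsilon\sigma_a\aver{u_i}_\bv - H_i^*$ for the $i$-th residual, the differential of $\Phi$ in~\eqref{EQ:OBJ} splits by the chain rule as $\sum_i \int_\Omega r_i\,\delta\!\big(\Upsilon\sigma_a\aver{u_i}_\bv\big)\,d\bx$. The $\Upsilon$-derivative is then immediate: since~\eqref{EQ:ERT gi} contains no $\Upsilon$, the sensitivity of $u_i$ vanishes and only the explicit factor survives, giving $\langle \partial_\Upsilon\Phi,\widehat\Upsilon\rangle = \sum_i \int_\Omega r_i\,\sigma_a\aver{u_i}_\bv\,\widehat\Upsilon\,d\bx$, which is the first formula.

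For $\sigma_a$ and $\sigma_s$ the implicit dependence through $u_i$ must be tracked. First I would differentiate~\eqref{EQ:ERT gi} to obtain the linearized (sensitivity) equations. Denoting by $u_i^a$ and $u_i^s$ the directional derivatives of $u_i$ with respect to $\sigma_a$ (direction $\widehat{\sigma_a}$) and $\sigma_s$ (direction $\widehat{\sigma_s}$), these solve the transport problems
\begin{equation*}
\bv\cdot\nabla u_i^a + \sigma_a u_i^a - \sigma_s K(u_i^a) = -\widehat{\sigma_a}\,u_i, \qquad \bv\cdot\nabla u_i^s + \sigma_a u_i^s - \sigma_s K(u_i^s) = \widehat{\sigma_s}\,K(u_i),
\end{equation*}
both with zero inflow data on $\Gamma_-$. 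Consequently the $\sigma_a$-derivative of the data $\Upsilon\sigma_a\aver{u_i}_\bv$ has an explicit part $\Upsilon\widehat{\sigma_a}\aver{u_i}_\bv$ and an implicit part $\Upsilon\sigma_a\aver{u_i^a}_\bv$, whereas for $\sigma_s$ only the implicit part $\Upsilon\sigma_a\aver{u_i^s}_\bv$ appears.

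The core step is to eliminate the sensitivities $u_i^a,u_i^s$ in favour of the adjoint field $w_i$ from~\eqref{EQ:Adjoint}. I would pair each sensitivity equation with $w_i$ and the adjoint equation~\eqref{EQ:Adjoint} with the corresponding sensitivity, integrate over $X$, and subtract. The advection terms combine through Green's identity $\int_X w_i\,\bv\cdot\nabla u_i^a\,d\bv\,d\bx = -\int_X u_i^a\,\bv\cdot\nabla w_i\,d\bv\,d\bx + \int_\Gamma (\bv\cdot\bn)\,w_i u_i^a$, and the boundary integral vanishes because $u_i^a=0$ on $\Gamma_-$ while $w_i=0$ on $\Gamma_+$; the scattering terms cancel by the self-adjointness of $K$ on $L^2(X)$, legitimate once $\cK$ is symmetric (as for the Henyey--Greenstein phase function, which depends only on $\bv\cdot\bv'$) and since $\sigma_s$ depends on $\bx$ alone. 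Because the source in~\eqref{EQ:Adjoint} is exactly $\Upsilon\sigma_a r_i$, the subtraction yields
\begin{equation*}
\int_X \Upsilon\sigma_a r_i\, u_i^a \, d\bv\, d\bx = -\int_\Omega \aver{u_i w_i}_\bv\,\widehat{\sigma_a}\, d\bx, \qquad \int_X \Upsilon\sigma_a r_i\, u_i^s\, d\bv\, d\bx = \int_\Omega \aver{K(u_i)w_i}_\bv\,\widehat{\sigma_s}\, d\bx.
\end{equation*}
Adding the explicit part in the $\sigma_a$ case and reading off the coefficients of $\widehat{\sigma_a}$ and $\widehat{\sigma_s}$ produces the remaining two formulas.

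The main obstacle is not this formal manipulation but its justification: one needs each $u_i$ to be genuinely Fr\'echet (not merely G\^ateaux) differentiable with bounded derivative, so that the sensitivity problems are well posed and the discarded remainders are truly $o(\|\cdot\|)$. This is precisely the regularity guaranteed by assumption~\textbf{(A1)} together with the differentiability results of~\cite{Dorn-IP98,Dorn-OE00,ReBaHi-SIAM06}, which I would invoke rather than reprove. The only other delicate point is the simultaneous vanishing of the boundary flux term and the cancellation of the scattering contribution, which is exactly where the complementary boundary conditions $u_i^a|_{\Gamma_-}=0$ and $w_i|_{\Gamma_+}=0$ and the symmetry of $\cK$ enter.
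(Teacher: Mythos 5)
Your proposal is correct and follows exactly the route the paper intends: the paper itself proves this theorem only by invoking the adjoint-state method and citing the "standard calculations" of~\cite{Dorn-IP98,Dorn-OE00,ReBaHi-SIAM06}, and your derivation is precisely that calculation spelled out, with the right sensitivity equations, the correct cancellation of the boundary flux via $u_i^a|_{\Gamma_-}=0$ and $w_i|_{\Gamma_+}=0$, and the correct identification of where the symmetry of $\cK$ is needed. The signs and the splitting into explicit and implicit parts all check out against the three stated formulas.
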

Note that due to the fact that the problem of reconstructing $\Upsilon$ is a linear inverse problem, the Fr\'echet derivative of the objective functional with respect to $\Upsilon$ is independent of the adjoint solutions $\{w_i\}_{i=1}^{N_g}$. If we know $\sigma_a$ and $\sigma_s$ but are interested in reconstructing $\Upsilon$, then we can simply solve for $\Upsilon$ such that $\langle \frac{\partial\Phi}{\partial \Upsilon}, \widehat \Upsilon \rangle=0$ for any test function $\widehat\Upsilon$. This in turn gives us $\Upsilon\sigma_a \aver{u_i}_\bv-H_i^* =0,\ 1\le i\le {N_g}$, which holds for every point $\bx$ in $\Omega$. The least-squares solution of this overdetermined system is $\Upsilon=\sum_{i=1}^{N_g}\big(\aver{u_i}_\bv H_i^*\big)/\sum_{i=1}^{N_g}\big(\sigma_a\aver{u_i}_\bv^2\big)$. Thus we need only solve the $N_g$ transport equations and evaluate $\aver{u_i}_\bv$ to reconstruct $\Upsilon$.

We use the limited memory version of the BFGS quasi-Newton method to solve the minimization problem. The details of the implementation are documented in~\cite{ReBaHi-SIAM06,Ren-CiCP10}. In our numerical experiments, we only  attempt to reconstruct two of the three coefficients. We observe that the algorithm converges very fast, even from initial guesses that are relatively far from the true coefficients. This confirms the theory developed in the diffusive regime~\cite{BaRe-IP11,BaRe-IP12,BaUh-IP10} that is the problem is very well-conditioned when only two coefficients are sought. In the numerical simulation, we can add a small amount of regularization to the problem when the noise in the data is significant. This is done by adding a term of Tikhonov functional, such as $\dfrac{\rho}{2}\|(\Upsilon,\sigma_a)-(\bar\Upsilon,\bar\sigma_a)\|_{(\cH^1(\Omega))^2}^2$ in the reconstruction of $(\Upsilon,\sigma_a)$ where $\bar\Upsilon$, $\bar\sigma_a$, and $\bar\sigma_s$ are values obtained with \emph{a priori} knowledge, in the objective functional~\eqref{EQ:OBJ}.

\section{The multi-spectral QPAT}
\label{SEC:MultiSpec}

In practical applications of PAT, light of different wavelengths can be used to probe the properties of the medium at those wavelengths~\cite{ChDaBaMoCoSmChLe-PMB05,CoWa-IJBI06,CoGiHe-JBO10,CoArBe-JOSA09,CoLaBe-SPIE09,KiFlYaKaHi-JBO10,LaCoZhBe-AO10,Razansky-NP09,Razansky-OL07,ShCoZe-AO11,SrPoJiDePa-AO05,WaDaSrJiPoPa-JBO08,YuJi-OL09,ZaSvAxScArAn-OE09}. This is the idea of multi-spectral QPAT.

Let us denote by $\Lambda$ the set of wavelengths that can be accessed in practice, then the radiative transport equation in this case takes the form
\begin{equation}\label{EQ:ERT Spectra}
	\begin{array}{rcll}
  	\bv\cdot\nabla u(\bx,\bv,\lambda) + 
  	\sigma_a(\bx,\lambda)u(\bx,\bv,\lambda)
  	&=& \sigma_s(\bx,\lambda)K(u)(\bx,\bv,\lambda)
  	&\mbox{ in }\ X\times\Lambda\\
       u(\bx,\bv,\lambda) &=& g(\bx,\bv,\lambda) &\mbox{ on }\ \Gamma_{-}\times\Lambda,
	\end{array}
\end{equation}
where the coefficients $\sigma_a(\bx,\lambda)$ and $\sigma_s(\bx,\lambda)$ are now functions of the wavelength. The interior datum collected in this case is now also a function of the wavelength
\begin{equation}
	H(\bx,\lambda) = \Upsilon(\bx,\lambda)\int_{\bbS^{d-1}} \sigma_a(\bx,\lambda) u(\bx,\bv,\lambda) d\bv .
\end{equation}
Due to the fact that the equations for different wavelengths are de-coupled, we could not expect to reconstruct unknowns at one wavelength from data collected at other wavelengths, unless other \emph{a priori} information is supplied. 

If the medium is non-scattering, the results in Section~\ref{SEC:Non-scattering} enable us to reconstruct $\Upsilon(\bx,\lambda)$ and $\sigma_a(\bx,\lambda)$ with two well-chosen illuminations $g_1(\bx,\bv,\lambda)$ and $g_2(\bx,\bv,\lambda)$. The illuminations can be either collimated sources of the form $\frak g(\bx,\lambda)\delta(\bv-\bv')$ or point source in space of the form $\frak g(\bv,\lambda)\delta(\bx-\bx')$. We can simply perform wavelength by wavelength reconstructions using the methods developed in Section~\ref{SEC:Non-scattering}. 

\subsection{Uniqueness of reconstruction}

The theory developed in~\cite{BaRe-IP12} for the diffusion equation confirms that we can reconstruct all three coefficients $\Upsilon$, $\sigma_a$, and $\sigma_s$ simultaneously with multi-spectral interior data under practically reasonable assumptions on the coefficients. Following~\cite{BaRe-IP12}, we take the following standard model for the unknown coefficients: 
\begin{equation}\label{EQ:Model II}
	\sigma_a(\bx,\lambda)=\sum_{k=1}^K\alpha_k(\lambda)\sigma_a^k(\bx), \qquad 
	\sigma_s(\bx,\lambda)=\beta(\lambda)\sigma_s(\bx), \qquad 
	\Upsilon(\bx,\lambda)=\gamma(\lambda)\Upsilon(\bx),
\end{equation}
where the functions $\{\alpha_k(\lambda)\}_{k=1}^K$, $\beta(\lambda)$ and $\gamma(\lambda)$ are assumed to be \emph{known}. In other words, all three coefficient functions can be written as products of functions of different variables. Moreover, the absorption coefficient contains multiple components.  This is the parameter model proposed in~\cite{CoArBe-JOSA09,CoLaBe-SPIE09,KiFlYaKaHi-JBO10,LaCoZhBe-AO10,WaDaSrJiPoPa-JBO08} to reconstruct chromophore concentrations from photoacoustic measurements.

We have the following uniqueness result with the multi-spectral data.
\begin{corollary}\label{THM:Uniq Spec}
	Let $(\Upsilon,\sigma_a,\sigma_s)$ and $(\tilde\Upsilon,\tilde\sigma_a,\tilde\sigma_s)$ be two sets of coefficients of the form~\eqref{EQ:Model II}, satisfying the regularity assumptions in {\bf (A1)}. Assume that we have data from $M (\ge K)$ different wavelengths $\{\lambda_m\}_{m=1}^M$ such that the matrix $\boldsymbol\alpha=(\alpha_k(\lambda_m)), 1\le k\le K, 1\le m\le M$ has rank $K$. Let $\cA_\lambda$ be the albedo operator that depends on wavelength. Then $\cA_\lambda=\tilde\cA_\lambda$ implies $\{\Upsilon(\bx),\{\sigma_a^k(\bx)\}_{k=1}^K, \sigma_s(\bx)\}= \{\tilde \Upsilon(\bx),\{\tilde \sigma_a^k(\bx)\}_{k=1}^K, \tilde\sigma_s(\bx)\}$.
\end{corollary}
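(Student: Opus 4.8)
The plan is to reduce the multi-spectral identifiability to the single-wavelength result of Corollary~\ref{THM:Uniq} and then to disentangle the parametric model~\eqref{EQ:Model II} by linear algebra. Because the transport equations~\eqref{EQ:ERT Spectra} decouple across wavelengths, the hypothesis $\cA_\lambda=\tilde\cA_\lambda$ means $\cA_{\lambda_m}=\tilde\cA_{\lambda_m}$ for each of the $M$ chosen wavelengths. The proof of Corollary~\ref{THM:Uniq} shows that each such equality determines, for a.e. $\bx\in\Omega$, the two quantities
\[
\sigma(\bx,\lambda_m)=\sigma_a(\bx,\lambda_m)+\sigma_s(\bx,\lambda_m),\qquad \mu(\bx,\lambda_m):=\Upsilon(\bx,\lambda_m)\sigma_a(\bx,\lambda_m),
\]
and that the corresponding tilded quantities coincide with them. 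It therefore suffices to show that $\{\sigma(\cdot,\lambda_m),\mu(\cdot,\lambda_m)\}_{m=1}^M$, together with the known spectral weights $\alpha_k,\beta,\gamma$, determines $\Upsilon(\bx)$, $\{\sigma_a^k(\bx)\}_{k=1}^K$ and $\sigma_s(\bx)$ uniquely.

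First I would recover the absorption components up to the Gr\"uneisen factor. Inserting~\eqref{EQ:Model II} into $\mu$ and dividing by the known, nonvanishing $\gamma(\lambda_m)$ gives, at each $\bx$,
\[
\sum_{k=1}^K\alpha_k(\lambda_m)\,f_k(\bx)=\frac{\mu(\bx,\lambda_m)}{\gamma(\lambda_m)},\qquad f_k(\bx):=\Upsilon(\bx)\sigma_a^k(\bx),\quad m=1,\ldots,M.
\]
This is the linear system $\boldsymbol\alpha^{T}\mathbf{f}(\bx)=\mathbf{b}(\bx)$ with $\mathbf{b}(\bx)_m=\mu(\bx,\lambda_m)/\gamma(\lambda_m)$. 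Since $\boldsymbol\alpha$ has rank $K$, the $M\times K$ matrix $\boldsymbol\alpha^{T}$ has full column rank, so the system admits at most one solution and each product $f_k(\bx)=\Upsilon(\bx)\sigma_a^k(\bx)$ is uniquely recovered (explicitly, $\mathbf{f}=(\boldsymbol\alpha\boldsymbol\alpha^{T})^{-1}\boldsymbol\alpha\,\mathbf{b}$). In particular $F_m(\bx):=\Upsilon(\bx)\sigma_a(\bx,\lambda_m)=\sum_k\alpha_k(\lambda_m)f_k(\bx)$ is now known for every $m$.

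Next I would separate $\Upsilon$ from $\sigma_s$ using the total-attenuation data. Multiplying $\sigma(\bx,\lambda_m)=\sigma_a(\bx,\lambda_m)+\beta(\lambda_m)\sigma_s(\bx)$ by $\Upsilon(\bx)$ and substituting $F_m$ yields, at each $\bx$, a linear system in the two unknowns $p:=\Upsilon(\bx)$ and $r:=\Upsilon(\bx)\sigma_s(\bx)$,
\[
\sigma(\bx,\lambda_m)\,p-\beta(\lambda_m)\,r=F_m(\bx),\qquad m=1,\ldots,M.
\]
As long as the two vectors $(\sigma(\bx,\lambda_m))_{m}$ and $(\beta(\lambda_m))_{m}$ are linearly independent, this $M\times2$ system has rank $2$ and determines $(p,r)$ uniquely; then $\Upsilon=p$, $\sigma_s=r/p$ and $\sigma_a^k=f_k/p$. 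Since this entire procedure is a deterministic map from $\{\sigma,\mu\}$ to the coefficients, and the tilded data produce the same $\{\sigma,\mu\}$, the two coefficient sets must coincide, which is the asserted uniqueness.

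I expect the rank-$2$ nondegeneracy in the final step to be the main obstacle. It fails precisely when $\sigma(\bx,\lambda_m)/\beta(\lambda_m)$ is independent of $m$, equivalently when $\sigma_a(\bx,\lambda_m)/\beta(\lambda_m)$ is constant over the chosen wavelengths---a nongeneric spectral coincidence. I would exclude it either by a genericity assumption on the spectral data, as in~\cite{BaRe-IP12}, or by requiring two wavelengths at which $\beta$ and the reconstructed $\sigma$ are spectrally independent; note that this forces $M\ge2$, so the degenerate case $K=1$ needs a wavelength beyond the single one allowed by the bare rank hypothesis. Modulo this point, the argument is a direct transcription of the diffusive-regime proof of~\cite{BaRe-IP12}, with Corollary~\ref{THM:Uniq} supplying the single-wavelength identifiability in place of its diffusion-based analogue.
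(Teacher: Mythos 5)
Your proposal is correct and rests on the same two pillars as the paper's proof: the single-wavelength recovery of $\mu(\bx,\lambda)=\gamma(\lambda)\Upsilon(\bx)\sigma_a(\bx,\lambda)$ and $\sigma(\bx,\lambda)=\sigma_a(\bx,\lambda)+\beta(\lambda)\sigma_s(\bx)$ via Corollary~\ref{THM:Uniq}, followed by linear algebra on the spectral model \eqref{EQ:Model II}. The organization of the second step is inverted, though. The paper first takes a \emph{pair} of wavelengths and solves a $2\times2$ system (one homogeneous equation from the ratio $\mu(\bx,\lambda_1)/\mu(\bx,\lambda_2)$, one inhomogeneous from the combination $\beta(\lambda_2)\sigma(\bx,\lambda_1)-\beta(\lambda_1)\sigma(\bx,\lambda_2)$) for $\sigma_a(\bx,\lambda_1)$ and $\sigma_a(\bx,\lambda_2)$, then reads off $\Upsilon$ and $\sigma_s$, and only at the very end invokes the rank-$K$ hypothesis to extract the components $\sigma_a^k$. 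You instead use the rank-$K$ hypothesis first to get $\Upsilon\sigma_a^k$, and then solve an overdetermined $M\times2$ system for $(\Upsilon,\Upsilon\sigma_s)$. The two routes are equivalent in substance, and in particular they need the \emph{same} spectral nondegeneracy: for $M=2$ your rank-$2$ condition reduces exactly to the paper's condition $\beta(\lambda_1)\gamma(\lambda_1)\mu(\bx,\lambda_2)-\beta(\lambda_2)\gamma(\lambda_2)\mu(\bx,\lambda_1)\neq 0$, i.e.\ $\sigma_a(\bx,\lambda_1)/\beta(\lambda_1)\neq\sigma_a(\bx,\lambda_2)/\beta(\lambda_2)$. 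Your treatment is the more careful one on this point: the paper states this condition only as an ``if'' inside the proof without adding it to the hypotheses of the corollary, whereas you flag it explicitly, note that using all $M$ wavelengths in the rank-$2$ step weakens what must be assumed, and correctly observe that the case $K=M=1$ permitted by the bare rank hypothesis cannot work since at least two wavelengths are needed to separate $\Upsilon$, $\sigma_a$ and $\sigma_s$.
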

\begin{proof}
	Corollary~\ref{THM:Uniq} implies that we can reconstruct uniquely
	\begin{equation}
		\mu(\bx,\lambda)=\gamma(\lambda)\Upsilon(\bx)\sigma_a(\bx,\lambda), \qquad
		\sigma(\bx,\lambda)=\sigma_a(\bx,\lambda)+\beta(\lambda)\sigma_s(\bx)		
	\end{equation}
	Take two wavelengths $\lambda_1$ and $\lambda_2$. We then have 
	\begin{eqnarray}
		\frac{\mu(\bx,\lambda_1)}{\mu(\bx,\lambda_2)}=\frac{\gamma(\lambda_1)\sigma_a(\bx,\lambda_1)}{\gamma(\lambda_2)\sigma_a(\bx,\lambda_2)},\\
		\beta(\lambda_2)\sigma(\bx,\lambda_1)-\beta(\lambda_1)\sigma(\bx,\lambda_2)=\beta(\lambda_2)\sigma_a(\bx,\lambda_1)-\beta(\lambda_1)\sigma_a(\bx,\lambda_2) .		
	\end{eqnarray}
	If $\beta(\lambda_1)\gamma(\lambda_1)\mu(\bx,\lambda_2)-\beta(\lambda_2)\gamma(\lambda_2)\mu(\bx,\lambda_1)\neq 0$ at $\bx\in\Omega$, we can solve the above system of equations to reconstruct
	$\sigma_a(\bx,\lambda_1)$ and $\sigma_a(\bx,\lambda_2)$. We then obtain $\Upsilon(\bx)$, and $\sigma_s(\bx)$. Once we know $\Upsilon(\bx)$ and $\sigma_s(\bx)$, we can reconstruct uniquely $\sigma_a(\bx,\lambda)$ for any $\lambda\in\Lambda$. Now select $\{\lambda_m\}_{m=1}^M$ from $\Lambda$ such that the matrix $\boldsymbol\alpha=(\alpha_k(\lambda_m)), 1\le k\le K, 1\le m\le M$ has rank $K$. We can reconstruct $\{\sigma_a^k(\bx)\}_{k=1}^K$ by solving the system $\sum_{k=1}^K\alpha_k(\lambda_m)\sigma_a^k(\bx)=\sigma(\bx,\lambda_m),\ 1\le m\le M$. This completes the proof.
\end{proof}

\subsection{Reconstruction methods}

The linearized and nonlinear reconstruction methods proposed in the previous section can be adapted to use the multi-spectral interior data. We would not repeat the whole algorithm again but just highlight the main modifications here for the linearized reconstruction with Born approximation. 

We can build an analogue of~\eqref{EQ:LIN INTGRL} 
\begin{equation}\label{EQ:LIN INTGRL SPEC}
	\dfrac{H_\lambda}{\gamma(\lambda)\sigma_{a0}\aver{U_\lambda}}(\bx)=\cI(\Upsilon)+\cL_\lambda^a(\Upsilon\tilde\sigma_a)+\cL_\lambda^s(\Upsilon\tilde\sigma_s),
\end{equation}
where $\cI$ is the identity operator and the operators $\cL^a$ and $\cL^s$ are defined, respectively, as
\begin{eqnarray}
	\label{EQ:La SPEC}
	\cL_\lambda^a(\Upsilon\tilde\sigma_a) &=& \dfrac{\Upsilon \tilde\sigma_a}{\sigma_{a0}}  
	      +\int_\Omega \Upsilon(\by)\tilde\sigma_a(\by,\lambda)
		\dfrac{\aver{U_\lambda G_\lambda}_\bv}{\aver{U_\lambda}_\bv}(\by;\bx) d\by ,\\
	\label{EQ:Ls SPEC}
	\cL_\lambda^s(\Upsilon\tilde\sigma_s) &=& -\alpha(\lambda)\int_\Omega \Upsilon(\by)\tilde\sigma_s(\by) 
		\dfrac{\aver{K(U_\lambda)G_\lambda}_\bv}{\aver{U_\lambda}_\bv}(\by;\bx) d\by .	 
\end{eqnarray}

Let us assume again that we collect the data for ${N_g}$ different illumination patterns and for each illumination patten, we have data for $M$ different wavelengths $\{\lambda_m\}_{m=1}^M$. We denote by $H_{i,\lambda_m}, 1\le i\le {N_g}, 1\le m\le M$ the $i$th data set at wavelength $\lambda_m$. The system of linear integral equations with these ${N_g}\times M$ data sets can be written as
\begin{equation}\label{EQ:LIN INTGRL Sys Spec}
	\left(
	\begin{array}{cccccc}
		\cI & \cL^a_{1,\lambda_1} & 0 & \ldots & 0 & \cL^s_{1,\lambda_1}\\
		\vdots  & \vdots & \vdots\\
		\cI & 0 & 0 & \ldots & \cL^a_{1,\lambda_M} & \cL^s_{1,\lambda_M}\\
		\vdots  & \vdots & \vdots\\
		\cI & \cL^a_{i,\lambda_1} & 0 & \ldots & 0 & \cL^s_{i,\lambda_1}\\
		\vdots  & \vdots & \vdots\\
		\cI & 0 & 0 & \ldots & \cL^a_{i,\lambda_M} & \cL^s_{i,\lambda_M}\\
		\vdots  & \vdots & \vdots\\
		\cI & \cL^a_{{N_g},\lambda_1} & 0 & \ldots & 0 & \cL^s_{{N_g},\lambda_1}\\
		\vdots  & \vdots & \vdots\\
		\cI & 0 & 0 & \ldots & \cL^a_{{N_g},\lambda_M} & \cL^s_{{N_g},\lambda_M}\\
	\end{array}
	\right)
	\left(
	\begin{array}{c}
		\Upsilon\\
		\tilde\sigma_a(\bx,\lambda_1)\\
		\vdots\\
		\tilde\sigma_a(\bx,\lambda_m)\\
		\vdots\\
		\tilde\sigma_a(\bx,\lambda_M)\\
		\tilde\sigma_s(\bx)
	\end{array}
	\right)
	=
	\left(
	\begin{array}{c}
		h_{1,\lambda_1}\\
		\vdots\\
		h_{1,\lambda_M}\\
		\vdots\\
		h_{i,\lambda_1}\\
		\vdots\\
		h_{i,\lambda_M}\\
		\vdots\\
		h_{{N_g},\lambda_1}\\
		\vdots\\
		h_{{N_g},\lambda_M}
	\end{array}
	\right)
\end{equation}
with $\cL^a_{i,\lambda_m}$ and $\cL^s_{i,\lambda_m}$ being the evaluation of the operators defined in~\eqref{EQ:La SPEC} and ~\eqref{EQ:Ls SPEC} respectively at source $g_i$ and wavelength $\lambda_m$. If we introduce the notation 
\begin{eqnarray*}
\cL_i^a & = &\diag (\cL_{i,\lambda_m}^a,\ldots,\cL_{i,\lambda_m}^a,\ldots,\cL_{i,\lambda_M}^a),  \\
\tilde\sigma_a & = & (\tilde\sigma_a(\bx,\lambda_1),\ldots,\tilde\sigma_a(\bx,\lambda_m),\ldots,\tilde\sigma_a(\bx,\lambda_M))^T, \\
h_i & = & (h_{i,\lambda_1},\ldots,h_{i,\lambda_m},\ldots,h_{i,\lambda_M})^T, 
\end{eqnarray*}
then this system is in the exact same form as~\eqref{EQ:LIN INTGRL Sys}. We solve this linear system in least-squares sense again in exactly the same ways as those presented in Section~\ref{SUBSEC:Born}. Once $(\Upsilon(\bx),\{\tilde\sigma_a(\bx,\lambda_m)\}_{m=1}^M,\tilde\sigma_s(\bx))$ are reconstructed, we reconstruct the coefficient components $\{\tilde\sigma_a^k\}_{k=1}^K$ by solving the linear system (in least-squares sense) locally (i.e. at each spatial location $\bx\in\Omega$)
\[
	\left(
	\begin{array}{ccc}
		\alpha_1(\lambda_1) & \ldots & \alpha_K(\lambda_1) \\
		\vdots & \ldots & \vdots \\
		\alpha_1(\lambda_M) & \ldots & \alpha_K(\lambda_M) 
	\end{array}
	\right)
	\left(
	\begin{array}{c}
		\tilde\sigma_a^1(\bx)\\
		\vdots\\
		\tilde\sigma_a^K(\bx)
	\end{array}
	\right)
	=
	\left(
	\begin{array}{c}
		\tilde\sigma_a(\bx,\lambda_1)\\
		\vdots\\
		\tilde\sigma_a(\bx,\lambda_M)
	\end{array}
	\right) .
\] 

\section{Numerical experiments}
\label{SEC:Num}

We now present some numerical simulations with synthetic data to demonstrate the performance of the algorithms that we have presented in the previous sections. To simplify numerical computation, we consider only two-dimensional simulations, although the algorithms we have described are independent of spatial dimension. We will use both $\bx$ and $(x,y)$ to denote a point on the plane. The computational domain is the square $\Omega=(0,2)\times(0,2)$. We denote by $\overline{\Omega}\equiv\Omega\cup\partial\Omega$. We also denote by $\partial\Omega|_L$, $\partial\Omega|_R$, $\partial\Omega|_B$, $\partial\Omega|_T$ the left, right, bottom and top sides of the boundary $\partial\Omega$ respectively. For instance, $\partial\Omega|_L=\{(x,y) \; | \; x=0,y\in(0,2)\}$. The unit sphere $\bbS^1$ is parameterized by an angle $\theta \in [0,\ 2\pi)$ so that the direction vector $\bv$ can be represented as ($\cos\theta$, $\sin\theta$). As before, we will denote by $\aver{u}_\bv$ and $\aver{u}_\theta$ the average of $u$ on $\bbS^1$.

To generate synthetic data, we solve the transport problem on a spatial mesh that is four times as fine as the mesh used to solve the inverse problem. We then compute $H$ in~\eqref{EQ:Data} by averaging in the four neighbor cells. This way, the interior data $H$ contain automatically noise due to the mismatch in spatial discretization. However, we will still call the data constructed this way the ``noiseless'' data. In the case when the parameters that we are interested in are discontinuous in space, there is no way to recover the discontinuity in the coefficients exactly even with noiseless data. 

To get noisy data, we add random noise in the following way. Let $\bH\in\bbR^{N_\Omega}$ ($N_\Omega$ being the total number of grid points) be a vector of noiseless data on the grid, then the noisy data vector $\widetilde{\bH}\in\bbR^{N_\Omega}$ is given by
\begin{equation}
	\widetilde{\bH} = (\cI + \epsilon \mathcal{N}) \bH, \quad \mathcal{N} = \diag (X_1,\ldots,X_{N_\Omega}),
\label{EQN:cone-noise}
\end{equation}
where $X_j$, $j=1,\ldots,N_\Omega$ are independent identically distributed Gaussian random variables with zero mean and unit variance, and $\epsilon$ is the parameter that controls the noise level in the noisy data. For a particular value of the parameter $\epsilon$ we say that the noise level is $\epsilon \cdot 100\%$. For example, for $\epsilon =
 0.05$ we say the noise level is $5\%$.

\subsection{Reconstructions in non-scattering media}
\label{SUBSEC:num-nonscatter}

In this section we present some numerical simulations in non-scattering media following the reconstruction methods presented in Section~\ref{SEC:Non-scattering}. To do the reconstruction, we cover the domain with $100\times 100$ cells of uniform size whose nodes are given as
\begin{displaymath}
	\Omega_\Delta=\{\bx_{i,j}=(x_i,y_j)|\ x_i=i\Delta x,\ y_j=j\Delta y,\ i, j = 0, 1, \ldots, 100\},
\end{displaymath}
with $\Delta x=\Delta y=0.02$. This is four times as fine as the grid used in next section for reconstructions in scattering media.

\subsubsection{Reconstructing $\sigma_a$}

We show now some reconstructions of the absorption coefficient. 

\begin{figure}[ht]
\centering
\includegraphics[angle=0,width=0.23\textwidth]{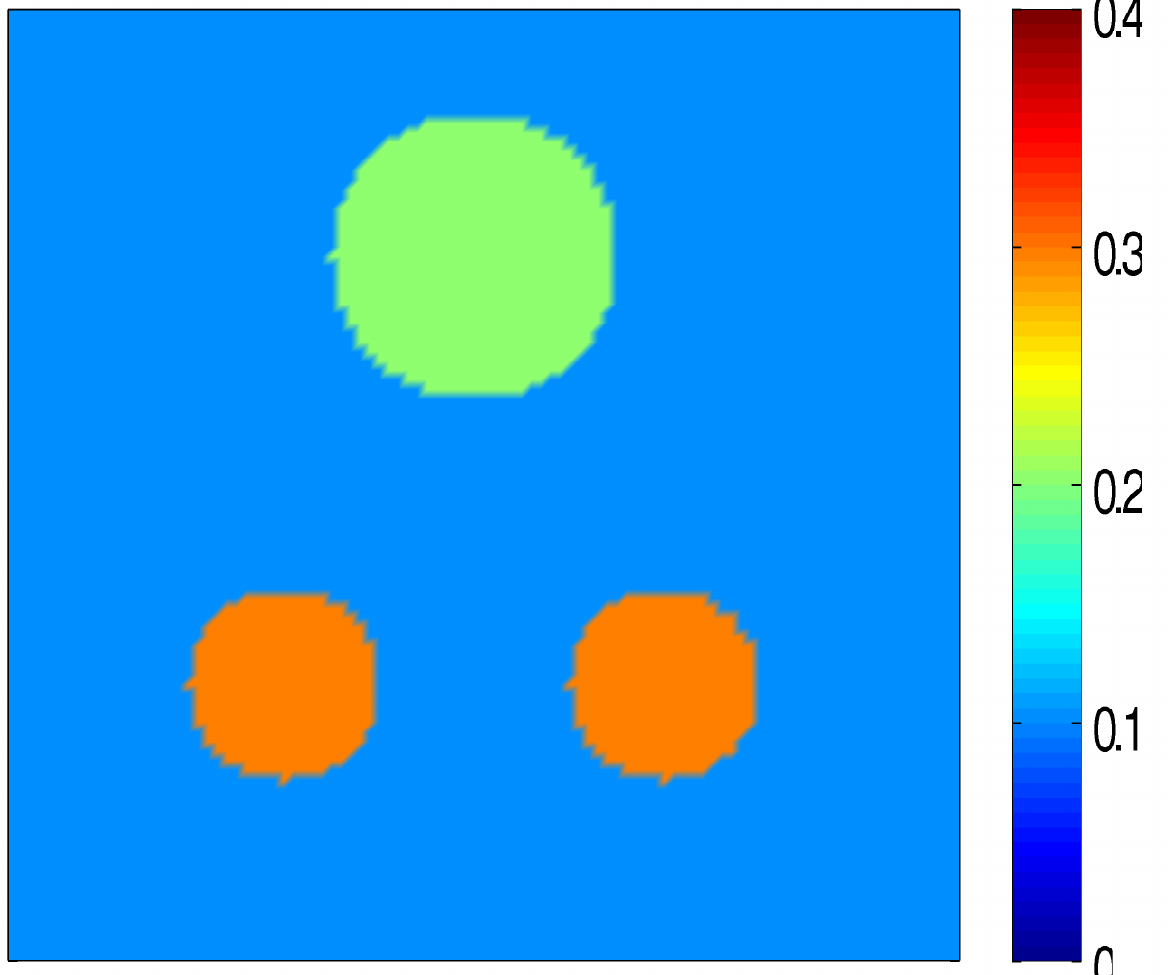} 
\includegraphics[angle=0,width=0.23\textwidth]{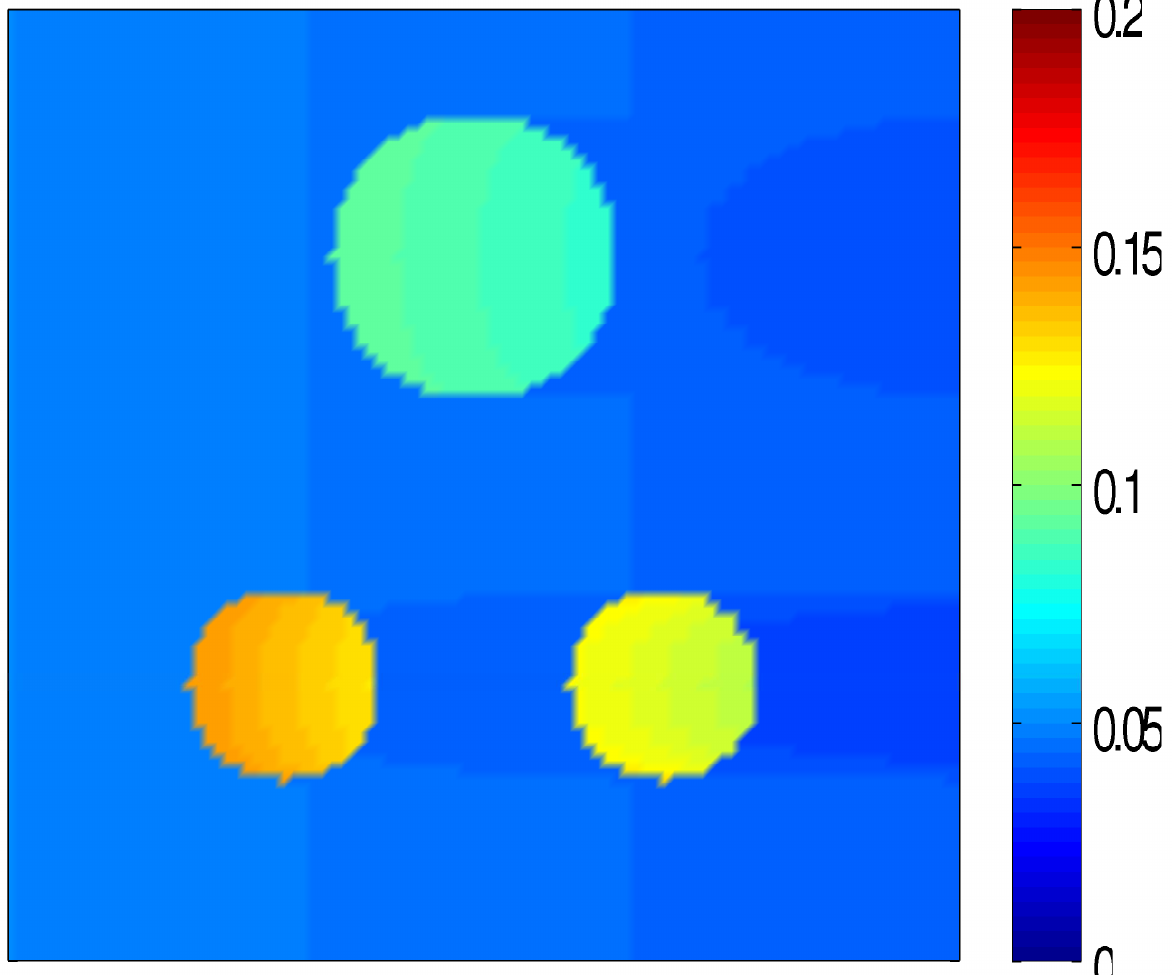} 
\includegraphics[angle=0,width=0.23\textwidth]{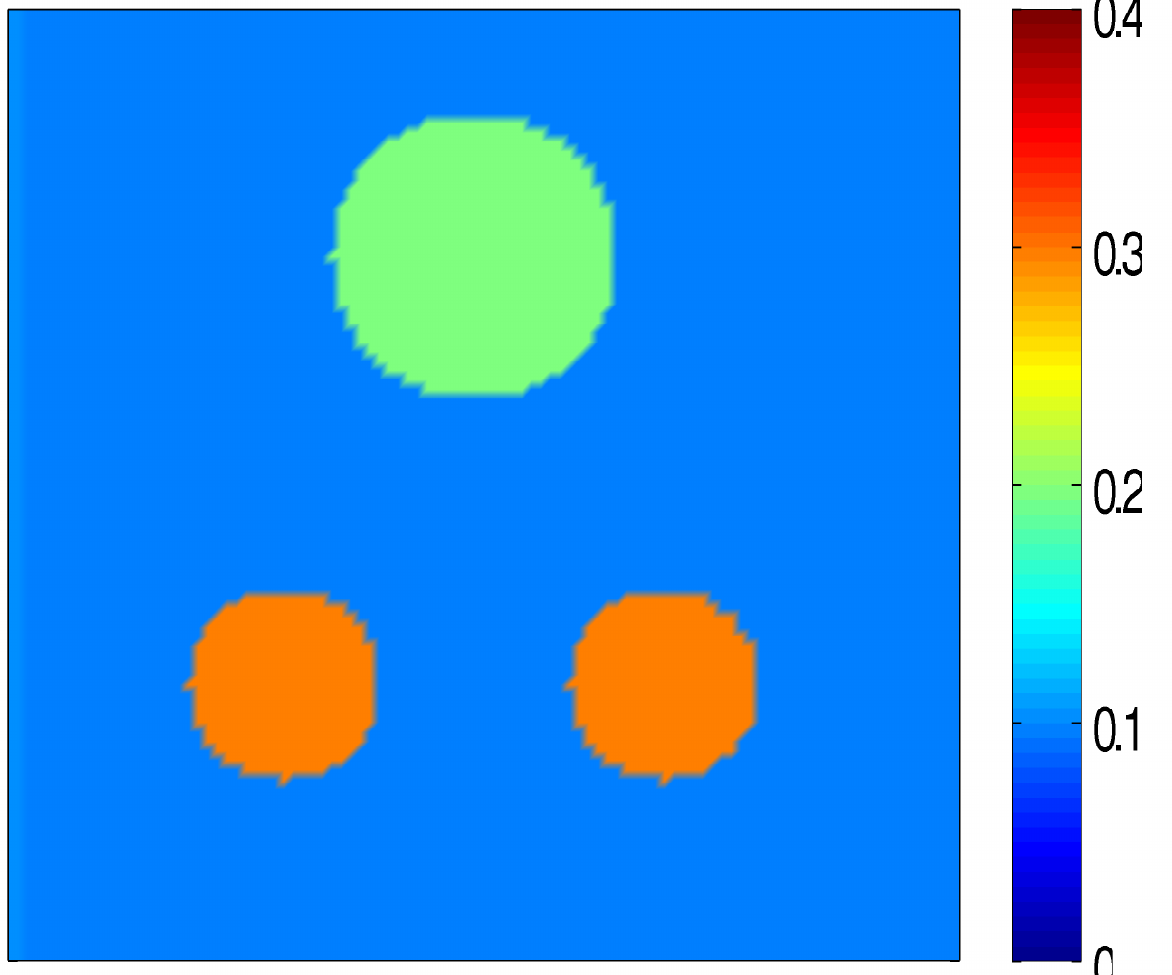} 
\includegraphics[angle=0,width=0.23\textwidth]{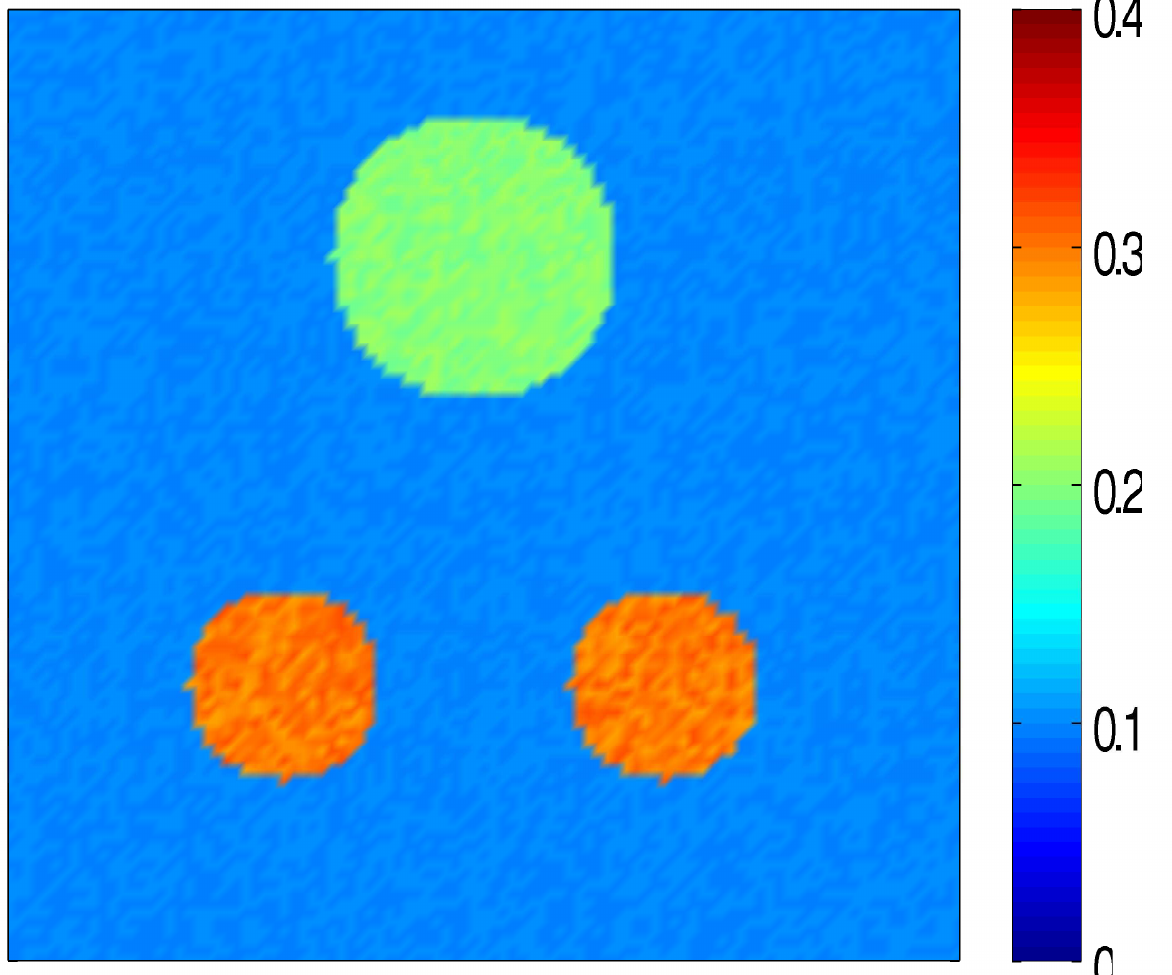} 
\caption{Reconstructions of a piecewise constant absorption coefficient with a collimated source. Left to right: true absorption coefficient $\sigma_a$, interior data $H$, $\sigma_a$ reconstructed with noiseless data and $\sigma_a$ reconstructed with noisy data (noise level $5\%$).}
\label{FIG:Collimated}
\end{figure}
In the first numerical simulation in this group, we perform a reconstruction of a piecewise constant absorption function using a collimated source located on the left side of the boundary pointing inside the domain. The source is $g(\bx,\bv)=\chi_{\partial\Omega|_L}\delta(\bv-\bv')$ with $\bv'=(1,0)$. 
The absorption function consists of a background $\sigma_a=0.1\mbox{ cm}^{-1}$ and three disk inclusions $\Omega_1=\{\bx \in \Omega \;|\; |\bx-(0.6,0.6)|\le 0.2\}$, $\Omega_2=\{\bx \in \Omega \;|\; |\bx-(1.4,0.6)|\le 0.2\}$ and $\Omega_3=\{\bx \in \Omega \;|\; |\bx-(1.0,1.5)|\le 0.3\}$ with values ${\sigma_a}_{|\Omega_1}=0.3\mbox{ cm}^{-1}$, ${\sigma_a}_{|\Omega_2}=0.3\mbox{ cm}^{-1}$ and ${\sigma_a}_{|\Omega_3}=0.2\mbox{ cm}^{-1}$ respectively. The reconstruction results with noiseless and noisy data are presented in Fig.~\ref{FIG:Collimated}. The reconstruction is almost perfect when noiseless data is used. When noisy data ($\epsilon=0.05$) is used, we can clearly see a degeneration of the quality of the reconstruction. However, the degeneration is very small, comparable to the noise level of the data.

\begin{figure}[ht]
\centering
\includegraphics[angle=0,width=0.23\textwidth]{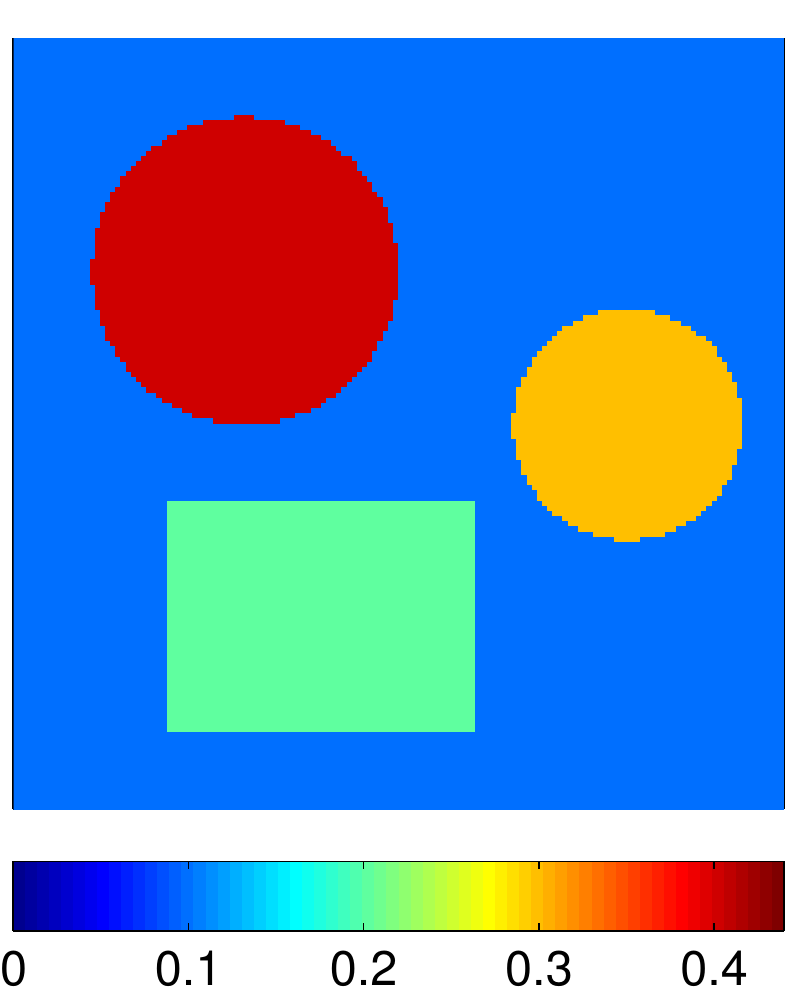} 
\includegraphics[angle=0,width=0.23\textwidth]{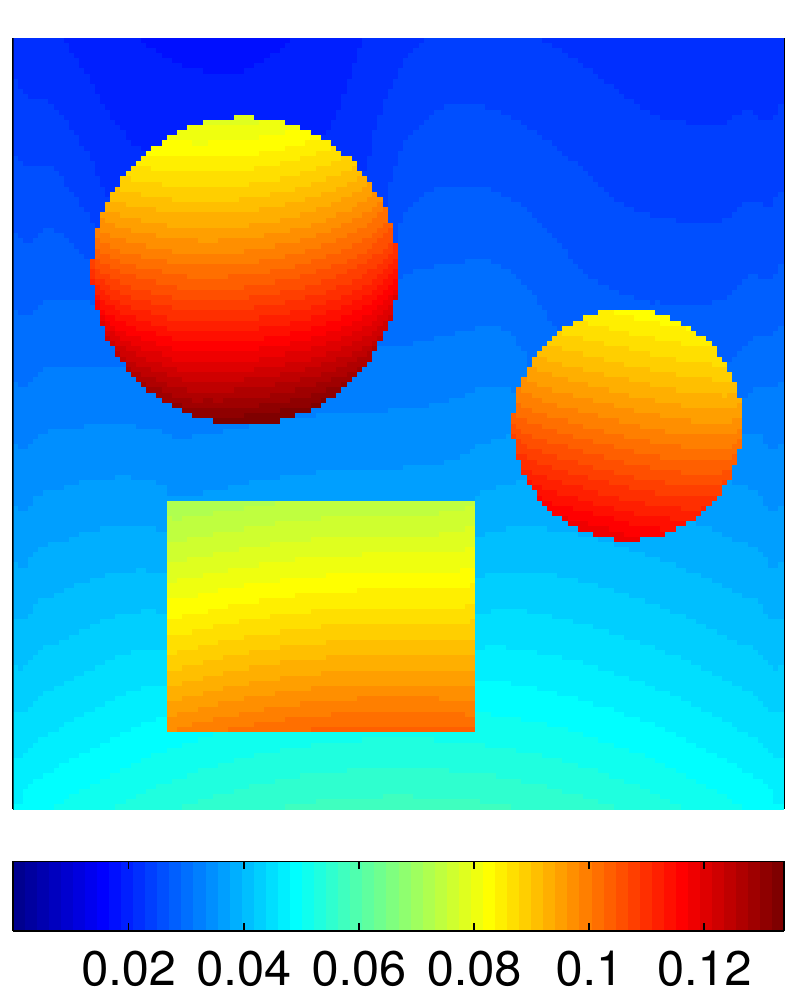} 
\includegraphics[angle=0,width=0.23\textwidth]{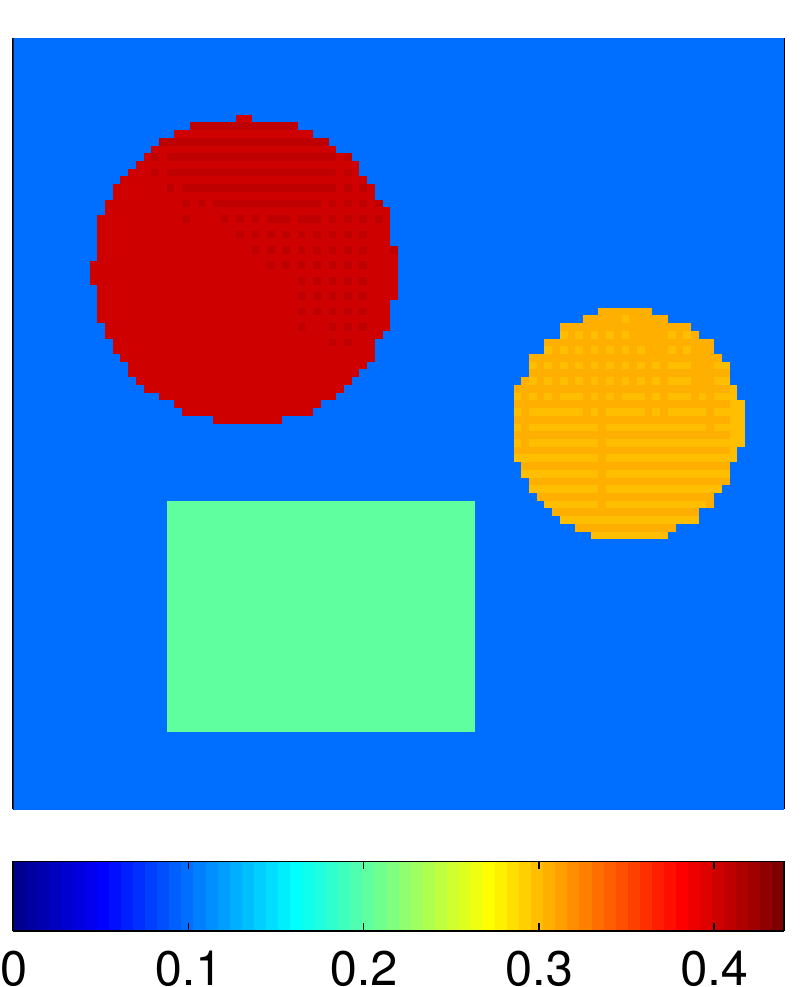}
\includegraphics[angle=0,width=0.23\textwidth]{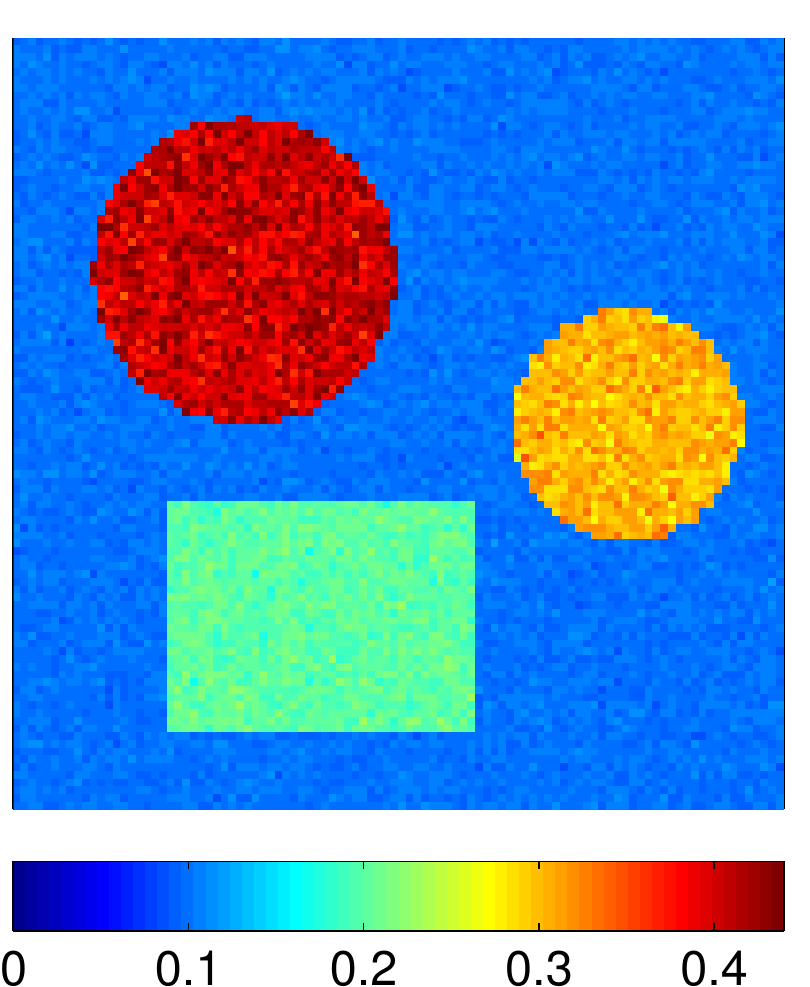}
\caption{Reconstructions of a piecewise constant absorption coefficient with cone-limited source. Left to right: true absorption coefficient $\sigma_a$, interior data $H$, $\sigma_a$ reconstructed with noiseless data and $\sigma_a$ reconstructed with noisy data (noise level $5\%$).}
\label{FIG:cone-pwc}
\end{figure}
Next we perform two reconstructions using a cone-limited source. The setup is as depicted in Fig.~\ref{FIG:cone-square}. The boundary condition $g(\bx,\theta)$ corresponds to a uniform isotropic line source of unit intensity concentrated on a segment $\{(x,y) \;|\; x\in(0,2), y=-2\}$ which results in $g(\bx,\theta)$ being non-zero only on $\partial \Omega|_B$. The half-aperture angle $\theta_0$ for such source is $\pi/4$, so we only keep track of the solution $u(x,y,\theta)$ for $\theta \in [\pi/4, 3\pi/4]$. This segment is uniformly discretized with $50$ nodes to generate the data and with $40$ nodes to solve the inverse problem using the algorithm from Section~\ref{SUBSEC:rec-cone}.

\begin{figure}[ht]
\centering
\includegraphics[angle=0,width=0.23\textwidth]{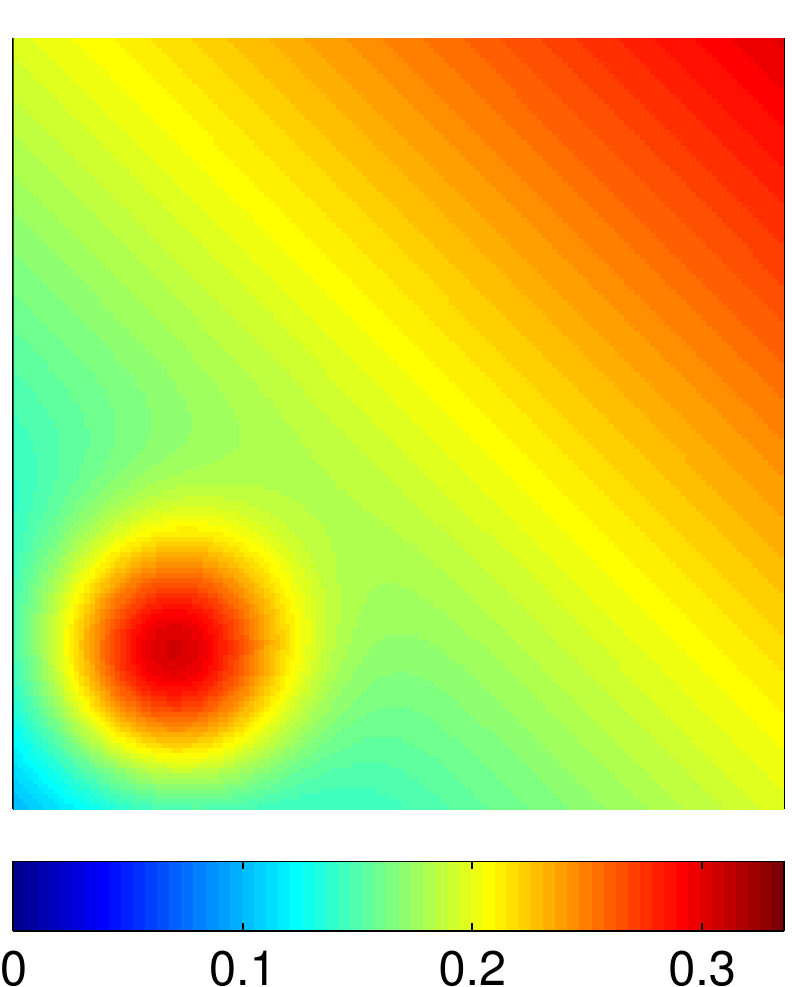}
\includegraphics[angle=0,width=0.23\textwidth]{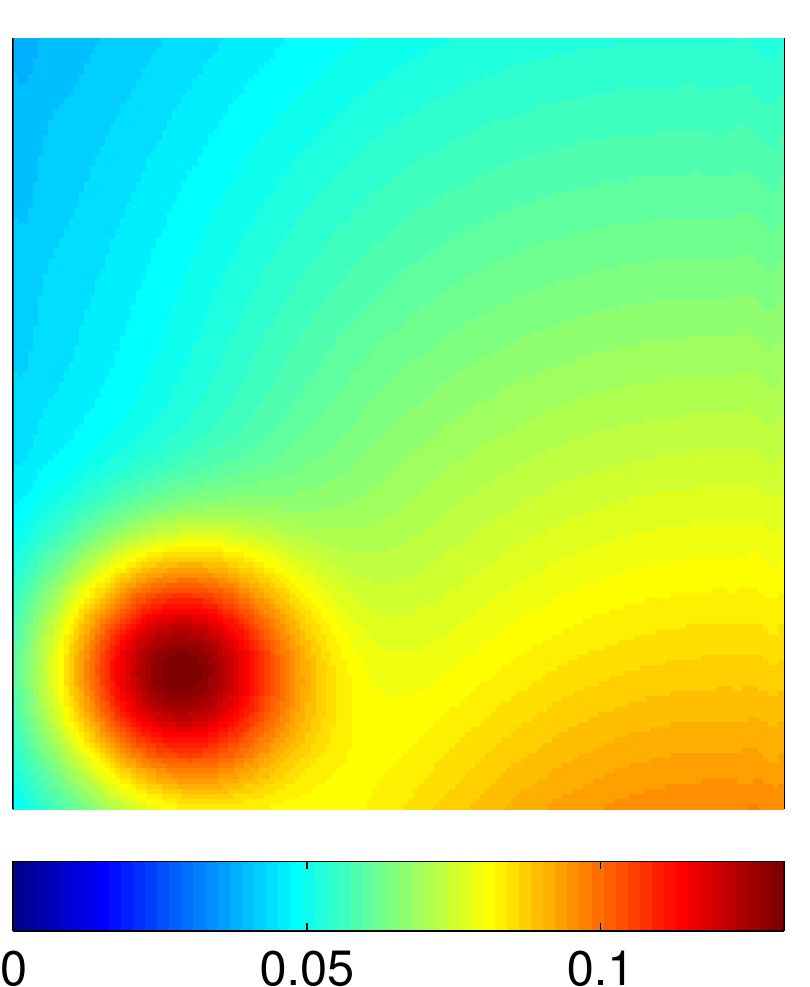}
\includegraphics[angle=0,width=0.23\textwidth]{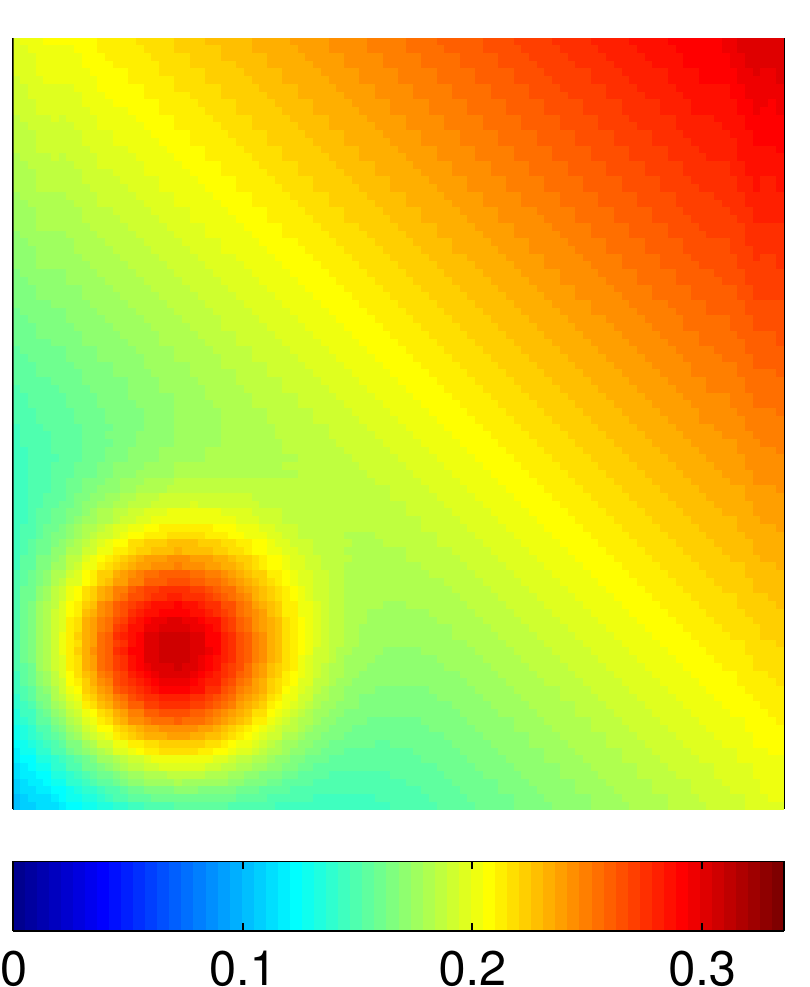}
\includegraphics[angle=0,width=0.23\textwidth]{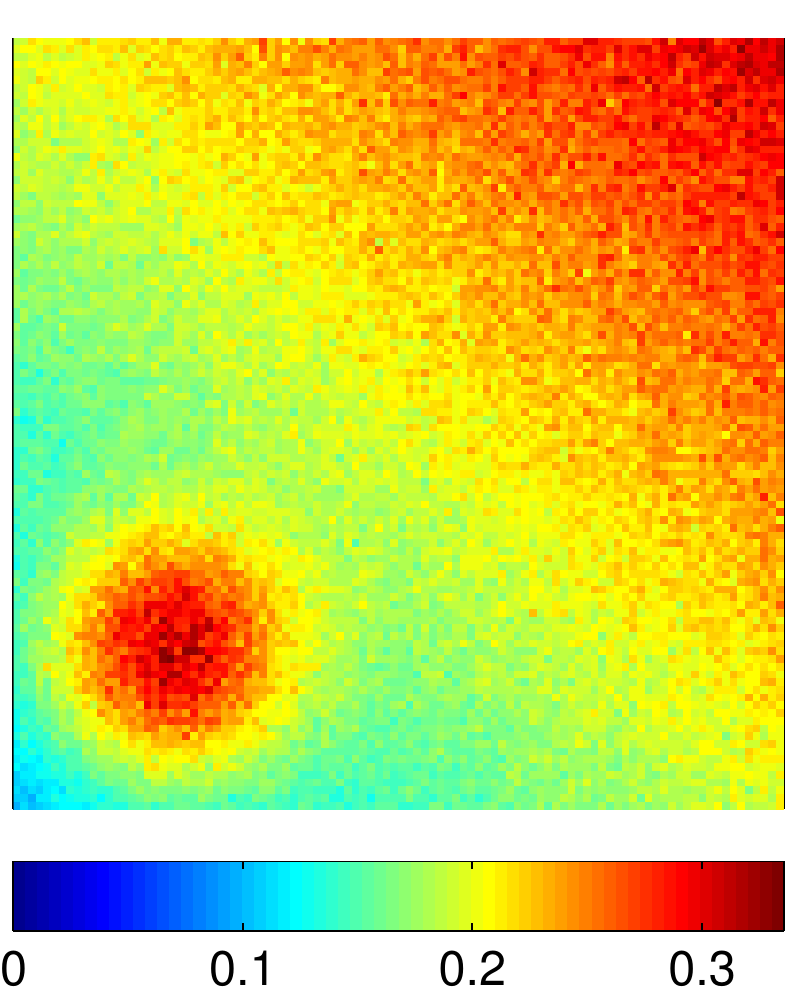}
\caption{Reconstructions of a smooth absorption coefficient with cone-limited source. Left to right: true absorption coefficient $\sigma_a$, interior data $H$, $\sigma_a$ reconstructed with noiseless data and $\sigma_a$ reconstructed with noisy data (noise level $5\%$).}
\label{FIG:cone-smooth}
\end{figure}
We show in Fig.~\ref{FIG:cone-pwc} the reconstruction of a piecewise-constant absorption coefficient with the cone-limited source. The absorption coefficient (in units of $\mbox{cm}^{-1}$) is given by
\begin{equation}\label{EQ:Gas Absorption1}
	\sigma_a (\bx) = 0.1 + 0.1\chi_{\Omega_1}(\bx)+0.2\chi_{\Omega_2}(\bx)+0.3\chi_{\Omega_3}(\bx),
\end{equation}
with the rectangular inclusion $\Omega_1=[0.4,1.2]\times[0.2,0.8]$, the smaller disk inclusion $\Omega_2=\{\bx \in \Omega \;|\; |\bx-(1.6,1.0)|\le 0.3\}$ and the larger disk inclusion $\Omega_3=\{\bx \in \Omega \;|\; |\bx-(0.6,1.4)|\le 0.4\}$. Thus the absorption coefficient in the background is $\sigma_a=0.1\mbox{ cm}^{-1}$ while that in the three inclusions it takes the values ${\sigma_a}|_{\Omega_1}=0.2\mbox{ cm}^{-1}$, ${\sigma_a}|_{\Omega_2}=0.3\mbox{ cm}^{-1}$ and ${\sigma_a}|_{\Omega_3}=0.4\mbox{ cm}^{-1}$ respectively. The quality of the reconstruction is comparable to that in the previous numerical experiment in Fig.~\ref{FIG:Collimated}. Smoother absorption coefficients can be reconstructed with similar quality. To demonstrate that, we show in Fig.~\ref{FIG:cone-smooth} the reconstruction of the absorption coefficient that is a sum of a Gaussian and linear functions given by
\begin{equation}\label{EQ:Gas Absorption2}
	\sigma_a (x,y) 
		= Ax + By + C + D e^{\left.\left((x-0.4)^2+(y-0.4)^2\right)\right/s_D^2}
\end{equation}
where the parameters $A,B,C,D$, and $s_D$ are chosen so that $0.1\mbox{ cm}^{-1} \leq \sigma_a \leq 0.3\mbox{ cm}^{-1}$. We performed reconstructions for many different choices of smooth absorption coefficients. The quality of the reconstruction in Fig.~\ref{FIG:cone-smooth} is representative of the typical reconstruction quality that we obtained in those experiments.

\begin{figure}[ht]
\centering
\includegraphics[angle=0,width=0.4\textwidth]{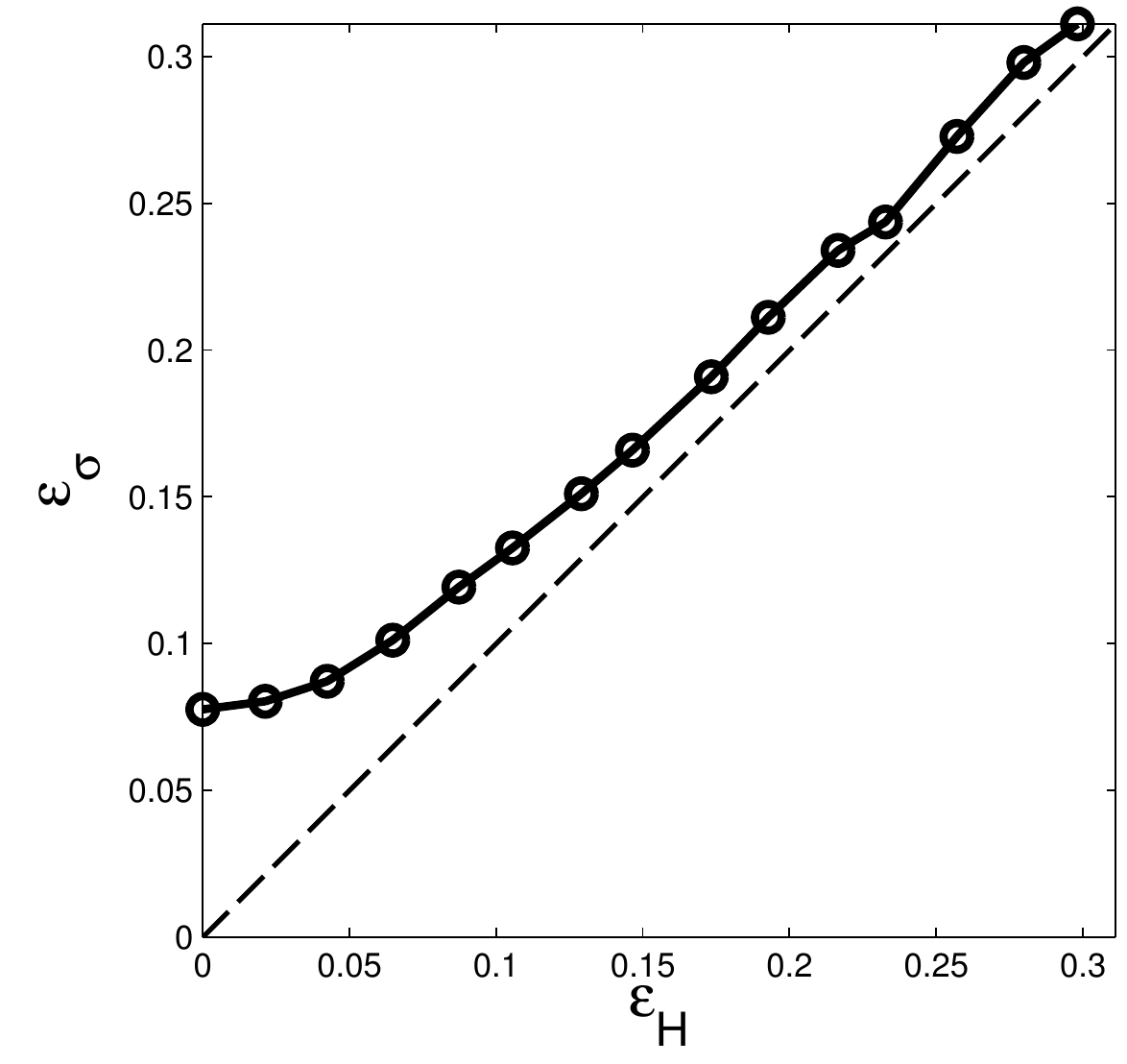}
\includegraphics[angle=0,width=0.4\textwidth]{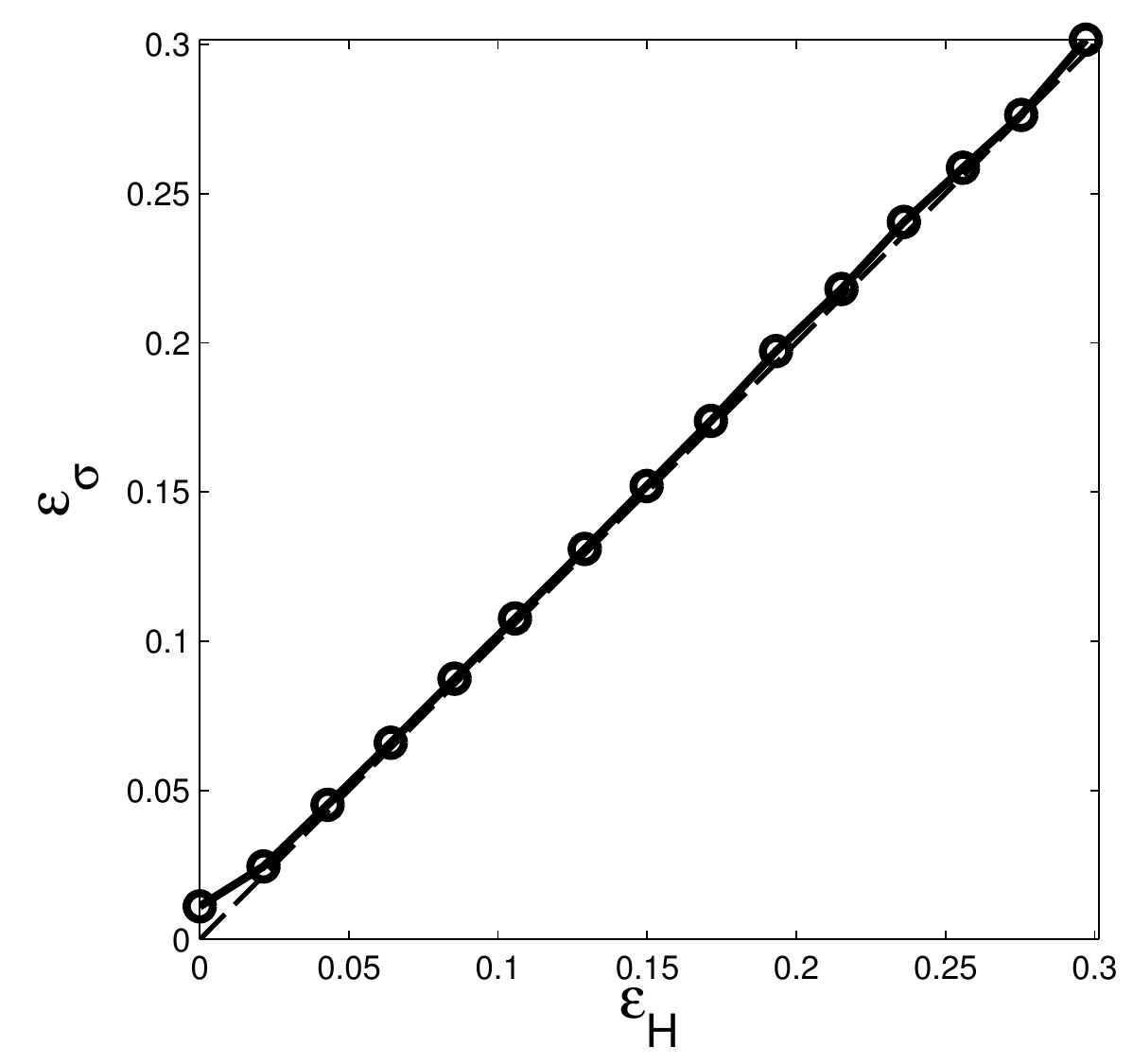}
\caption{Relative error $\cE_\sigma$ in the reconstruction of the absorption coefficient with a cone-limited source versus noise level $\cE_H$ in the data. Noise levels from $0$ to $30\%$. 
Left: reconstruction of piecewise constant absorption coefficient~\eqref{EQ:Gas Absorption1}; Right: reconstruction of smooth absorption coefficient~\eqref{EQ:Gas Absorption2}. 
Perfect linear stability $\cE_\sigma = \cE_H$ is shown as a dashed line for reference.}
\label{FIG:cone-noise-error}
\end{figure}
To characterize the stability of the reconstruction more precisely, we plot in Fig.~\ref{FIG:cone-noise-error} the relative $L^2$ error $\cE_\sigma$ in the reconstruction of the absorption coefficients~\eqref{EQ:Gas Absorption1} and ~\eqref{EQ:Gas Absorption2} versus the noise level $\cE_H$ in the data used, with $\cE_\sigma$ and $\cE_H$ defined respectively as
\begin{equation}
	\cE_\sigma = \frac{\|\tilde{\boldsymbol\sigma}_a-\boldsymbol\sigma_a\|_{l^2}}{\|\boldsymbol\sigma_a\|_{l^2}}, \quad\mbox{and}\quad
	\cE_H = \frac{\|\tilde\bH -\bH\|_{l^2}}{\|\bH\|_{l^2}}
\end{equation}
where $\boldsymbol \sigma_a\in \bbR^{N_\Omega}$ and $\tilde{\boldsymbol \sigma}_a\in \bbR^{N_\Omega}$ are the true and reconstructed absorption vectors respectively. Numerically the method appears to
have linear stability, with the piecewise constant case being slightly worse than the smooth one. This is typically due to
an imperfect resolution of the boundaries of the inclusions. Also, there is some residual error in the noiseless ($\epsilon=0$)
case due to the mismatch between the fine grid and the reconstruction grid.

\subsubsection{Reconstructing ($\Upsilon$, $\sigma_a$)}

To reconstruct both the Gr\"uneisen coefficient and the absorption coefficient, we use data collected from two collimated sources located on the left and right sides of the boundary respectively, $g_1(\bx,\bv)=\chi_{\partial\Omega|_L}\delta(\bv-\bv')$ and $g_2(\bx,\bv)=\chi_{\partial\Omega|_R}\delta(\bv+\bv')$ with $\bv'=(1,0)$. 

\begin{figure}[ht]
\centering
\includegraphics[angle=0,width=0.23\textwidth]{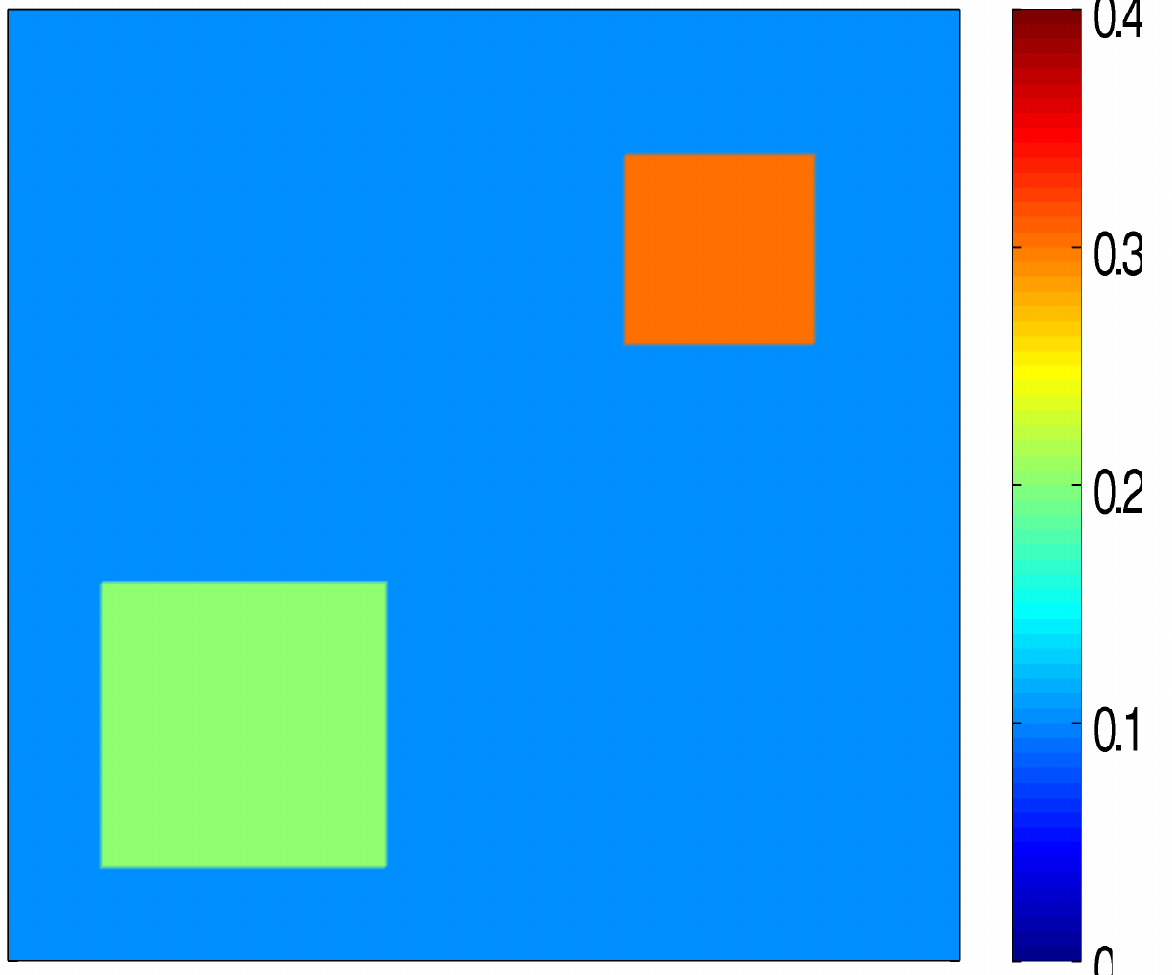} 
\includegraphics[angle=0,width=0.23\textwidth]{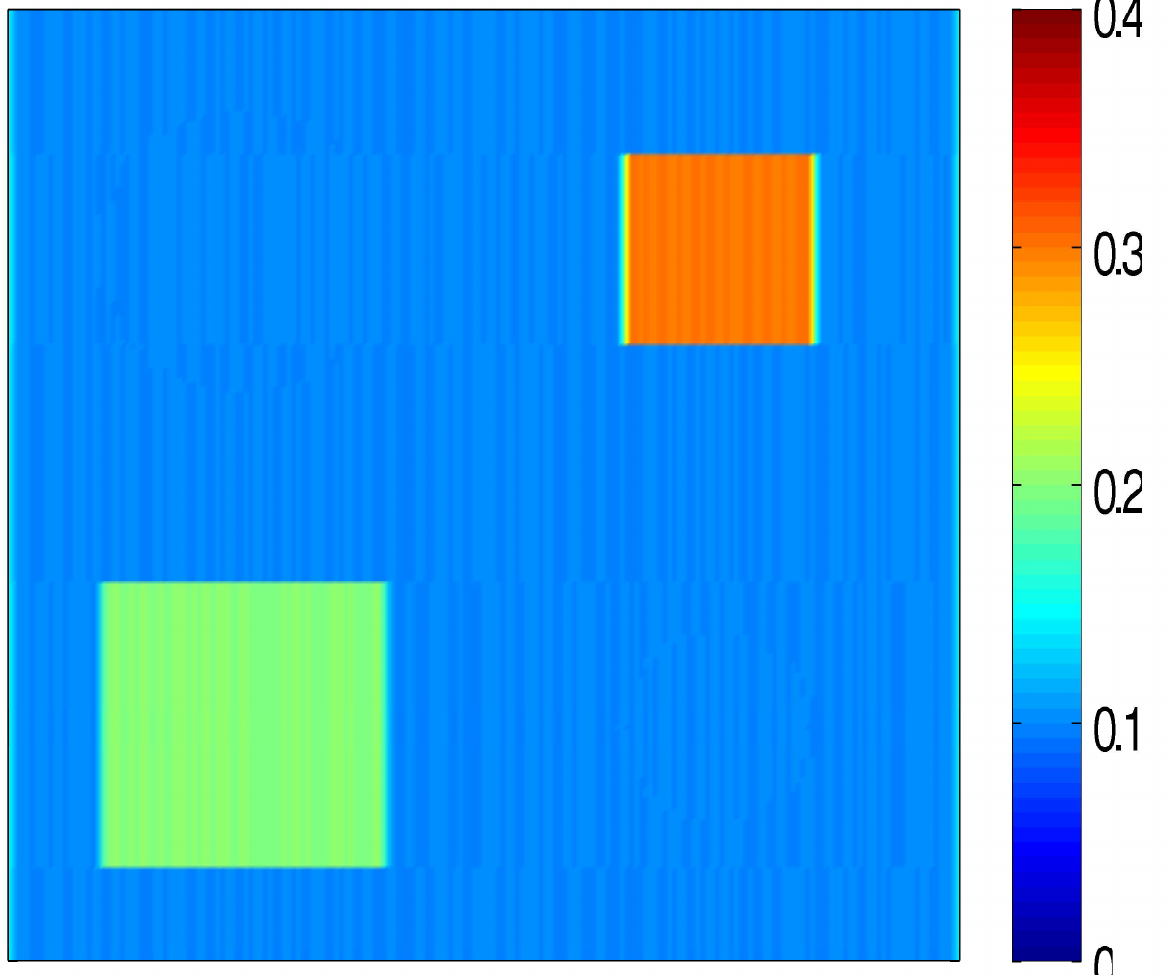} 
\includegraphics[angle=0,width=0.23\textwidth]{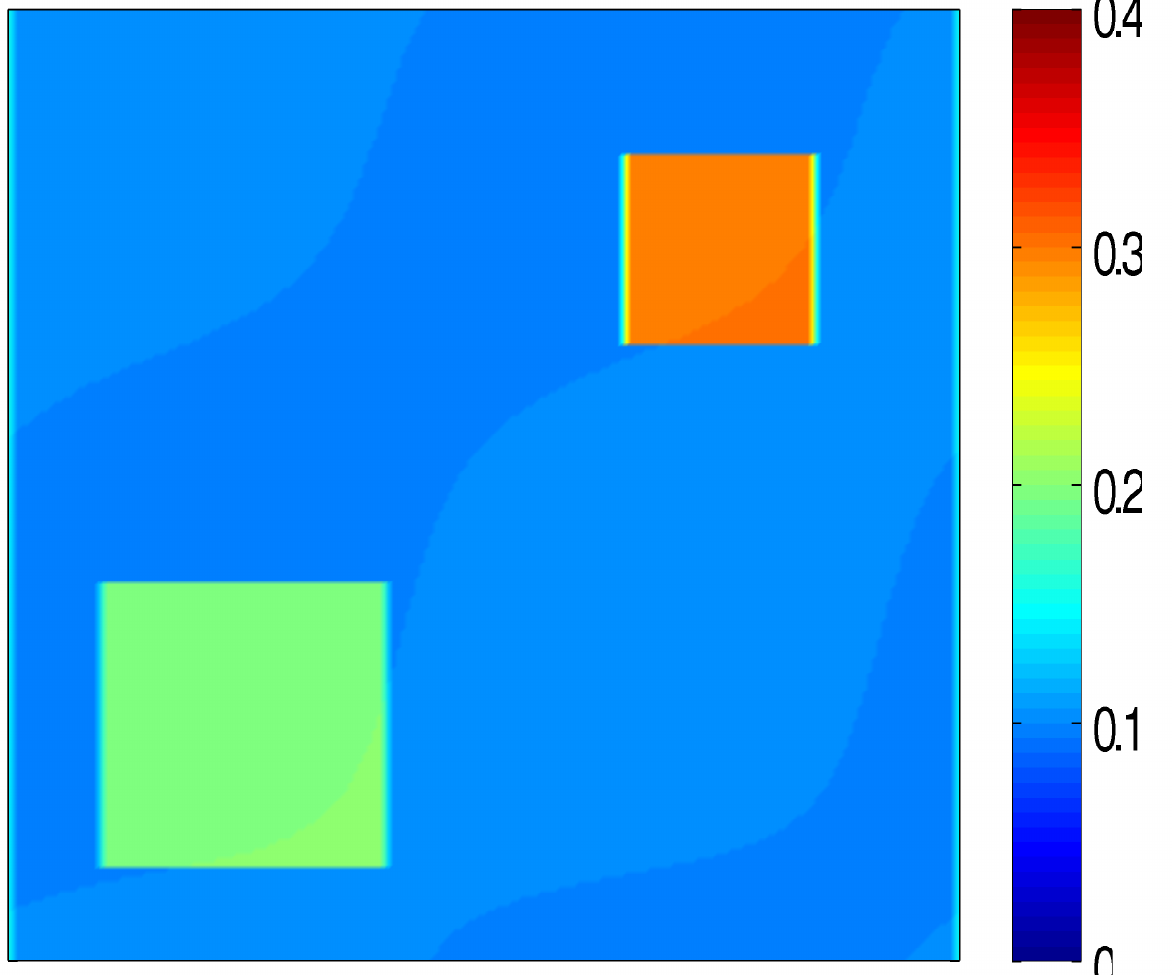}\\ 
\includegraphics[angle=0,width=0.23\textwidth]{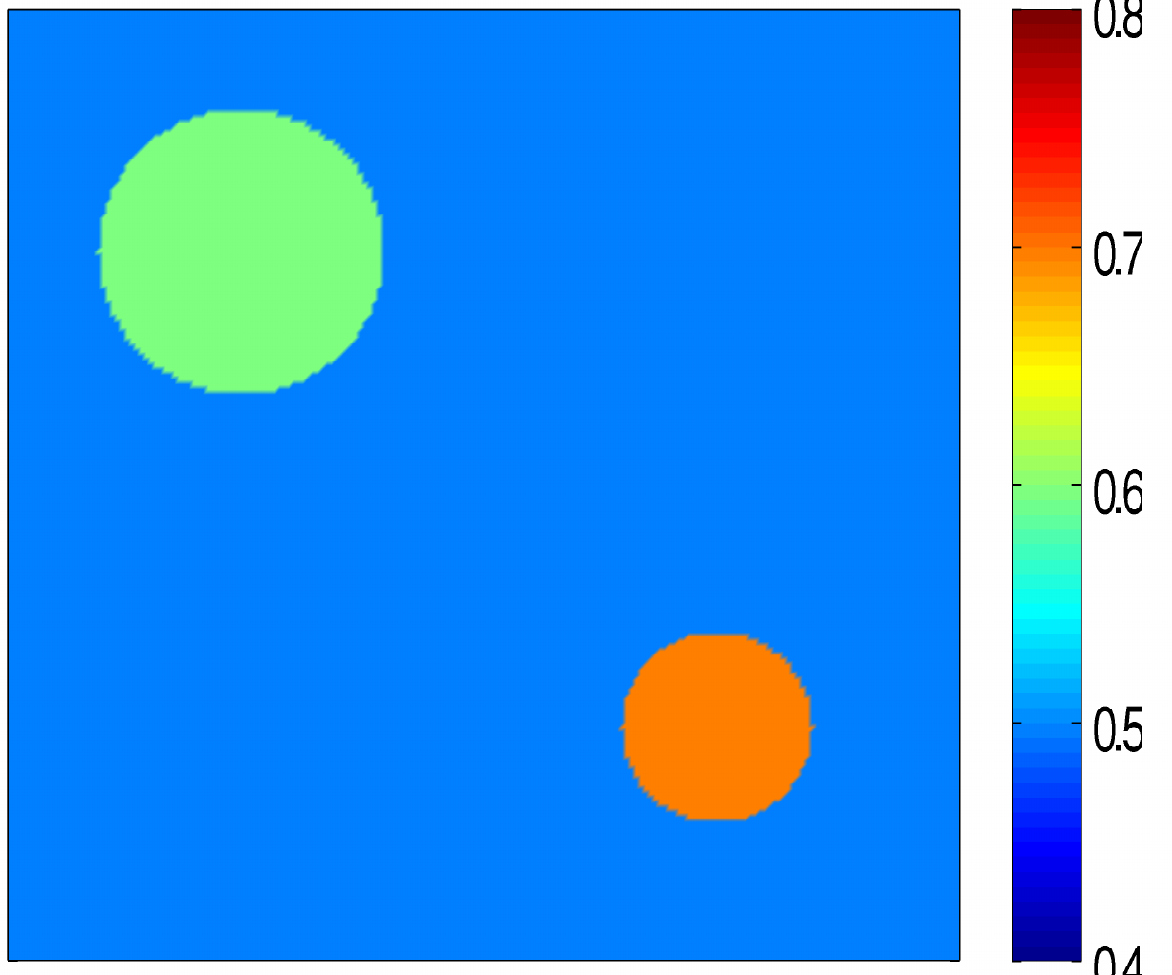} 
\includegraphics[angle=0,width=0.23\textwidth]{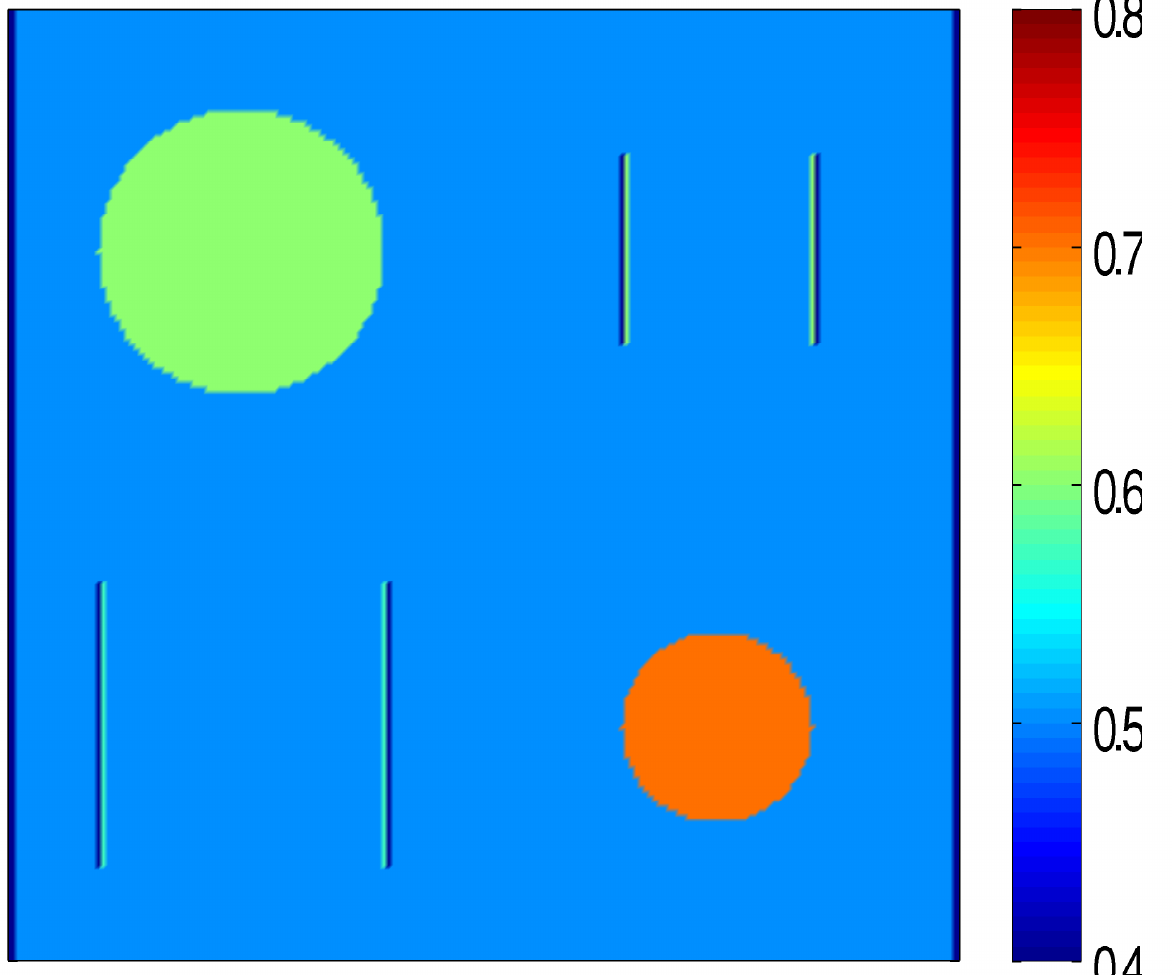} 
\includegraphics[angle=0,width=0.23\textwidth]{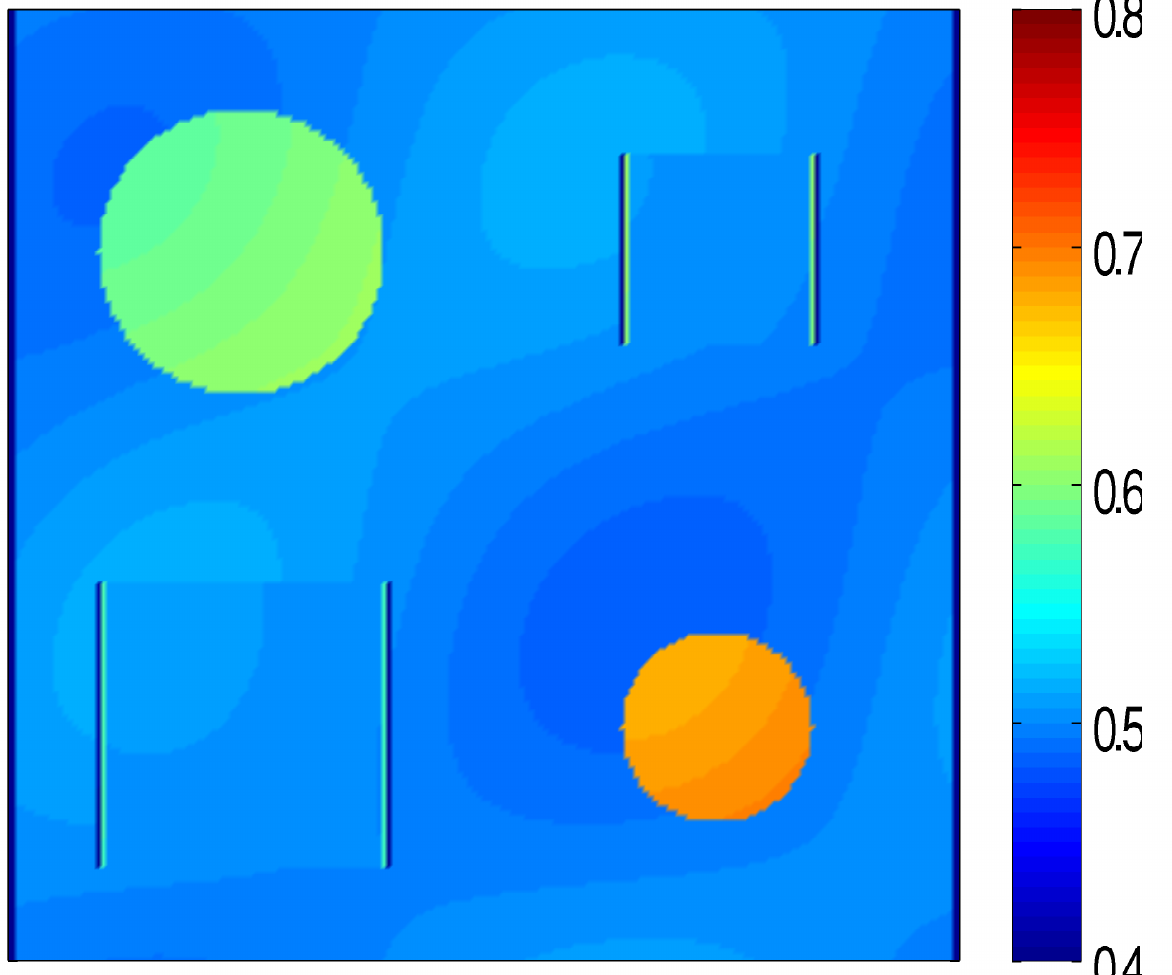} 
\caption{Reconstructions of $\sigma_a$ (top row) and $\Upsilon$ (bottom row) using a pair of collimated sources. Left to right: true coefficients, coefficients reconstructed with noiseless data and coefficients reconstructed with noisy data. The noisy data contains $5\%$ random noise ($\epsilon=0.05$).}
\label{FIG:Collimated GammaMua}
\end{figure}
The background absorption coefficient is $\sigma_a=0.1$ cm$^{-1}$ and the background Gr\"uneisen coefficient is $\Upsilon=0.5$. There are four inclusions, two for the absorption coefficient located in $\Omega_1=\{\bx \in \Omega \;|\; |\bx-(0.5,0.5)|\le 0.3\}$ and $\Omega_2=\{\bx \in \Omega \;|\; |\bx-(1.5,1.5)|\le 0.2\}$ respectively and two for the Gr\"uneisen coefficient located in $\Omega_3=\{\bx \in \Omega \;|\; |\bx-(0.5,1.5)|\le 0.2\}$ and $\Omega_4=\{\bx \in \Omega \;|\; |\bx-(1.5,0.5)|\le 0.3\}$ respectively. The coefficients inside the inclusions are ${\sigma_a}|_{\Omega_1}=0.2$ cm$^{-1}$, ${\sigma_a}|_{\Omega_2}=0.3$ cm$^{-1}$, $\Upsilon|_{\Omega_3}=0.6$ and ${\Upsilon}|_{\Omega_4}=0.7$ respectively. We perform the reconstruction with both noiseless data and noisy data polluted with $5$\% additive random noise. The reconstruction results are presented in Fig.~\ref{FIG:Collimated GammaMua}. Other than the phantom inclusions in the reconstructed Gr\"uneisen coefficient at the locations of the inclusions of the absorption coefficient, the quality of the reconstructions is very high, comparable to the previous reconstructions in the cases of one unknown coefficient. The phantom inclusions in the reconstructed Gr\"uneisen coefficients are caused by the inaccuracy of the reconstruction of the absorption coefficient at the boundary of square inclusions. This inaccuracy originates from the differentiation of the quantity $\ln\frac{H_2}{H_1}$ which contains noise coming from mismatch between the forward and inversion grids. For the reconstruction from noisy data, the noise added to the synthetic data in random but have only low frequency components. If the data contain very high frequency components as in the previous cases, numerical differentiation of the quantity $\ln\frac{H_2}{H_1}$ in the algorithm would yield even larger noise in the reconstruction that would bury the true coefficients. Averaging from multiple reconstructions would be necessary to get a clean image. We would not address this issue in detail now but leave it to future study. 

\subsection{Reconstructions in scattering media}
\label{SUBSEC:ScatterNumSetup}

Here we present some numerical simulations for scattering media, i.e. the case when $\sigma_s\neq 0$. In this case, we do not have analytical reconstruction formulas to work with. We thus use the linearized reconstruction and minimization-based nonlinear reconstruction schemes.

\subsubsection{Numerical setup}

The transport equations are solved with a finite volume scheme for the spatial variable and a discrete ordinate method for the angular variable. The domain is covered by $50\times 50$ cells of uniform size whose nodes are given by
\begin{displaymath}
	\Omega_h=\{\bx_{i,j}=(x_i,y_j)|\ x_i=i\Delta x,\ y_j=j\Delta y,\ i, j = 0, 1, \ldots, 50\},
\end{displaymath}
with $\Delta x=\Delta y=0.04$.  We discretize $\bbS^1$ into $128$ uniformly distributed directions with
identical quadrature weight:
\begin{displaymath}
	\bbS^1_{\Delta\theta}=\{\bv_k:\bv_k=(k-1)*\Delta\theta,\ k=1, \ldots, 128\},
\end{displaymath}
where $\Delta\theta=2\pi/128$. Note that this discretization is not as fine as that used in the previous case when scattering is not presented in the problem. This is only done here to save computational cost. It should not be regarded as a limitation of the reconstruction algorithms for scattering media.

The scattering kernel is chosen as the Henyey-Greenstein phase function~\cite{Arridge-IP99,HeGr-AJ41,WeVa-Book95}
\begin{equation}\label{EQ:HG}
	\cK(\bv,\bv')=\dfrac{1}{2\pi}\frac{1-\eta^2}{(1+\eta^2-2\eta\bv\cdot\bv')^{3/2}},
\end{equation}
where $\eta\in[0,1]$ is the anisotropy factor, which measures how peaked forward the phase function is. The larger $\eta$ is, the more forward the scattering.  The anisotropy factor is often used to define the so-called effective scattering coefficient through $\sigma_s'=(1-\eta)\sigma_s$. 

For more details on the forward solver, we refer to our previous publications~\cite{ReAbBaHi-OL04,ReBaHi-SIAM06}. 

\subsubsection{Reconstructing $\sigma_a$}
\label{SUBSEC:1Coeff}

\begin{figure}[ht]
\centering
\includegraphics[angle=0,width=0.25\textwidth]{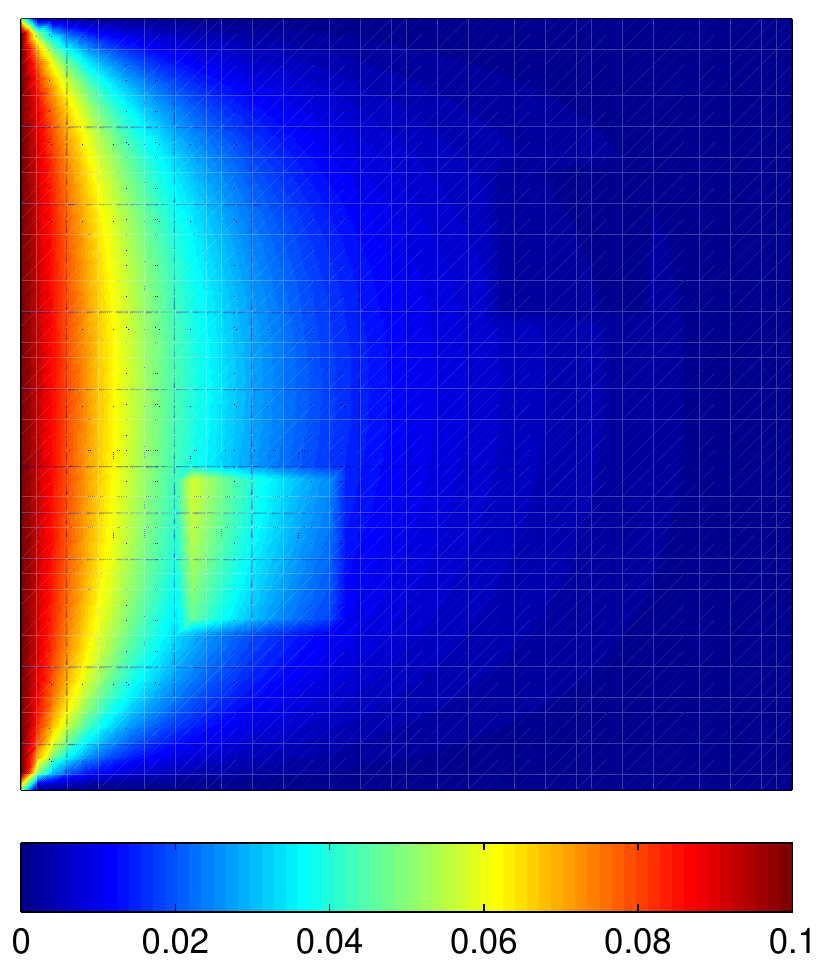}
\includegraphics[angle=0,width=0.25\textwidth]{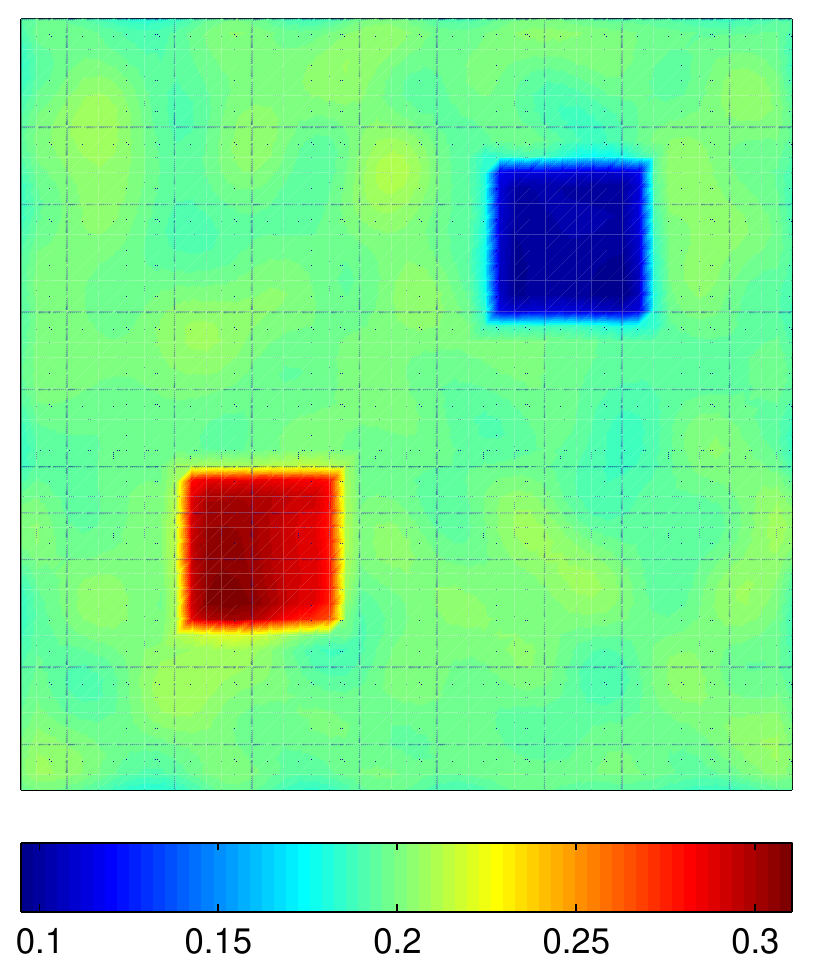}
\includegraphics[angle=0,width=0.25\textwidth]{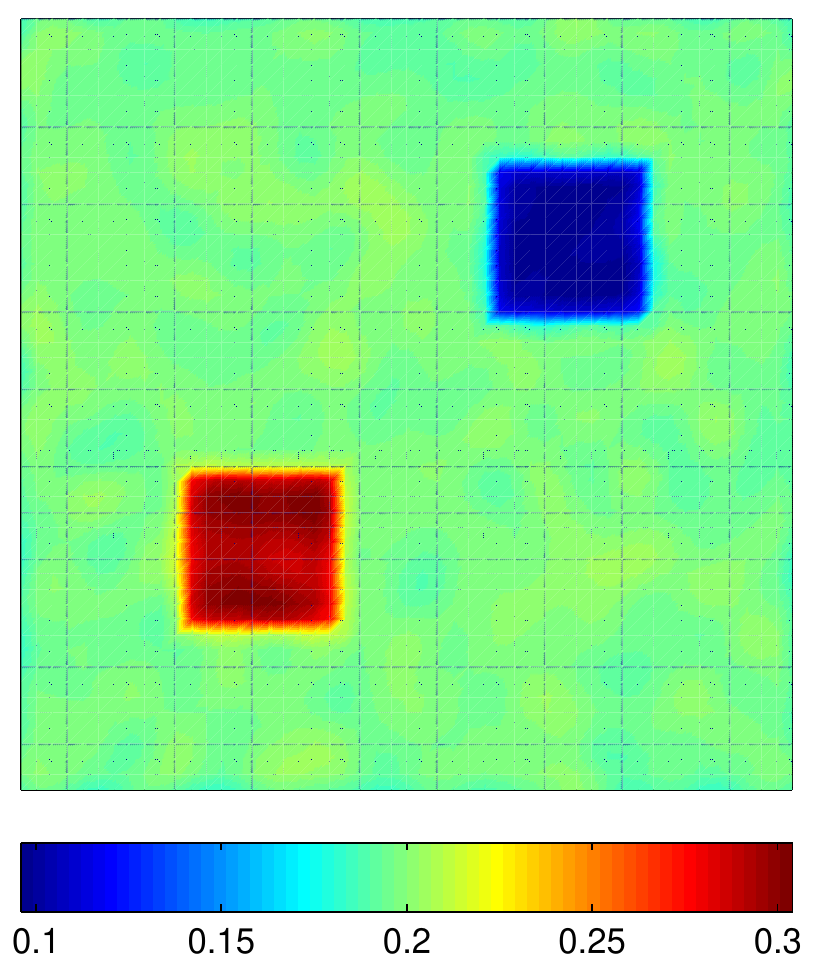}\\
\includegraphics[angle=0,width=0.25\textwidth]{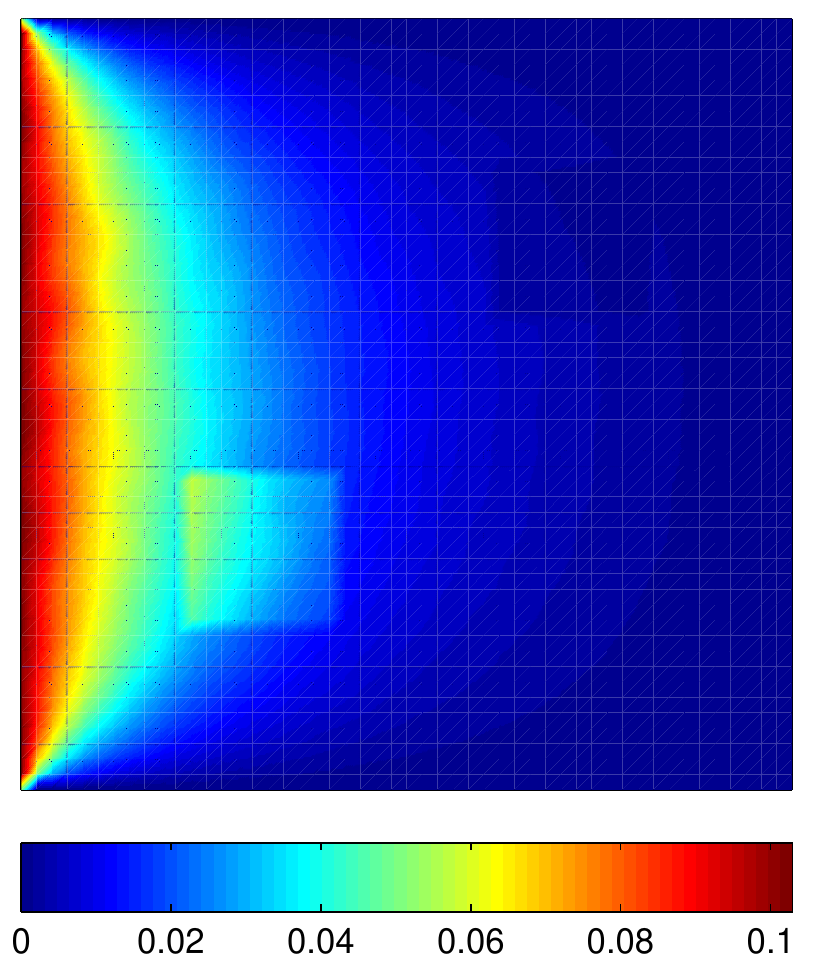}
\includegraphics[angle=0,width=0.25\textwidth]{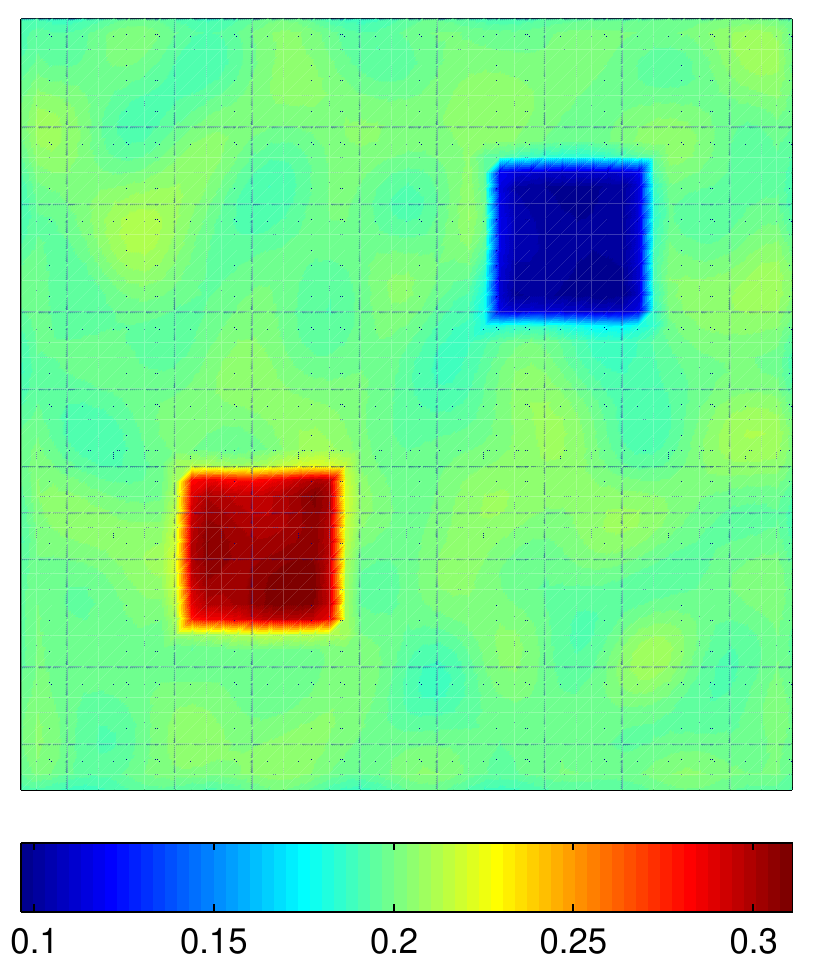}
\includegraphics[angle=0,width=0.25\textwidth]{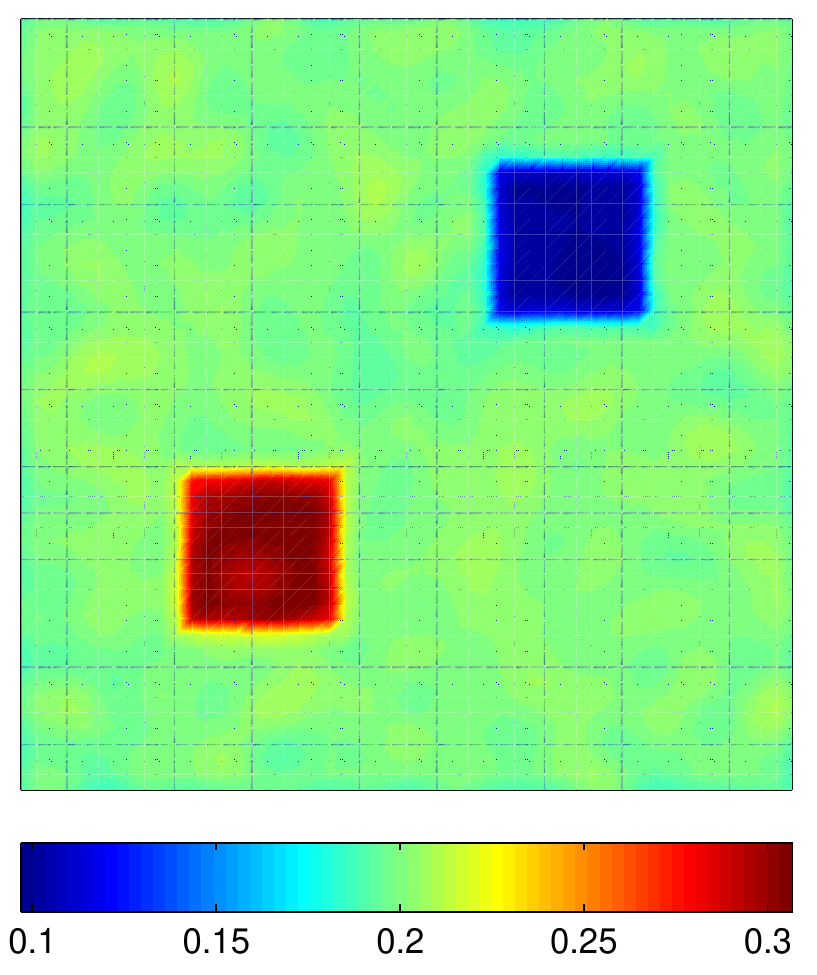}
\caption{Reconstructions of the absorption coefficient. Top row: the data $H=\Upsilon\sigma_a \aver{u}_\bv$ (left), $\sigma_a$ reconstructed using Born approximation (middle) and $\sigma_a$ reconstructed using nonlinear iteration (right) in an isotropic medium; Bottom row: same as the top row but in an anisotropic medium.}
\label{FIG:Mua}
\end{figure}
We first reconstruct the absorption coefficient, assuming that both the scattering coefficient and the Gr\"uneisen coefficients are known. The setup is as follows. The medium consists of a homogeneous background absorbing medium with absorption coefficient $\sigma_a=0.2$ cm$^{-1}$ and two absorbing inclusions. The first inclusion occupies $\Omega_1=[0.4,0.8]\times[0.4,0.8]$ with the absorption coefficient $\sigma_a=0.3$ cm$^{-1}$. The second inclusion occupies $\Omega_2=[1.2,1.6]\times [1.2,1.6]$ with the absorption coefficient $\sigma_a=0.1$ cm$^{-1}$. We first performed two reconstructions with the linearization method: a reconstruction in an isotropically scattering medium with $\eta=0$, $\sigma_s=8$ cm$^{-1}$ and a reconstruction in an anisotropically scattering medium with $\eta=0.9$, $\sigma_s=80$ cm$^{-1}$. The Gr\"uneisen coefficient is always $\Upsilon=0.5$. The synthetic data used in these reconstructions is noiseless in the sense described at the beginning of this section. The results of the reconstructions are shown in Fig.~\ref{FIG:Mua}. The quality of the reconstructions is very high (although shown on a coarser grid than those in the previous section). Note that the fast decay of field from the line source makes the second inclusion hardly visible in the data $H=\Upsilon\sigma_a\aver{u}_\bv$ plots. This is one of the main reason that the quantitative step of PAT is needed. The reconstructions using the nonlinear reconstruction algorithm are presented in the right column of Fig.~\ref{FIG:Mua}.

\subsubsection{Reconstructing ($\Upsilon$, $\sigma_a$)}
\label{SUBSEC:2Coeff}

\begin{figure}[ht]
\centering
\includegraphics[angle=0,width=0.48\textwidth]{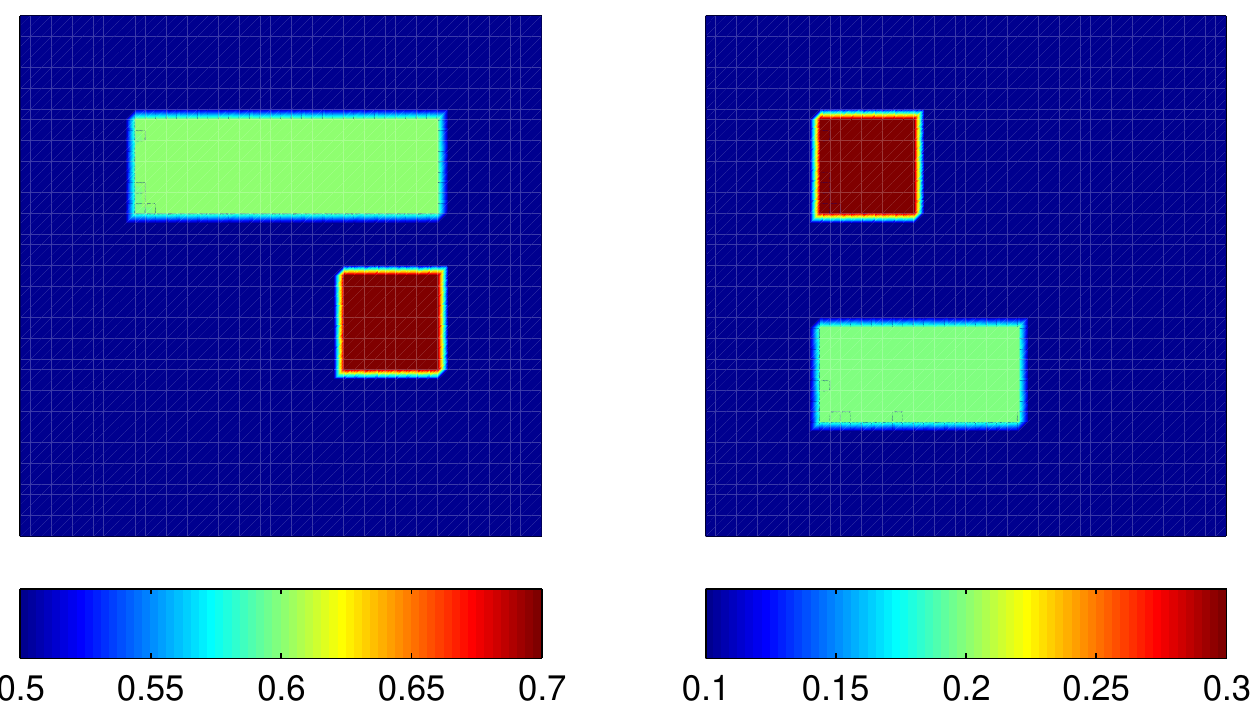}
\includegraphics[angle=0,width=0.48\textwidth]{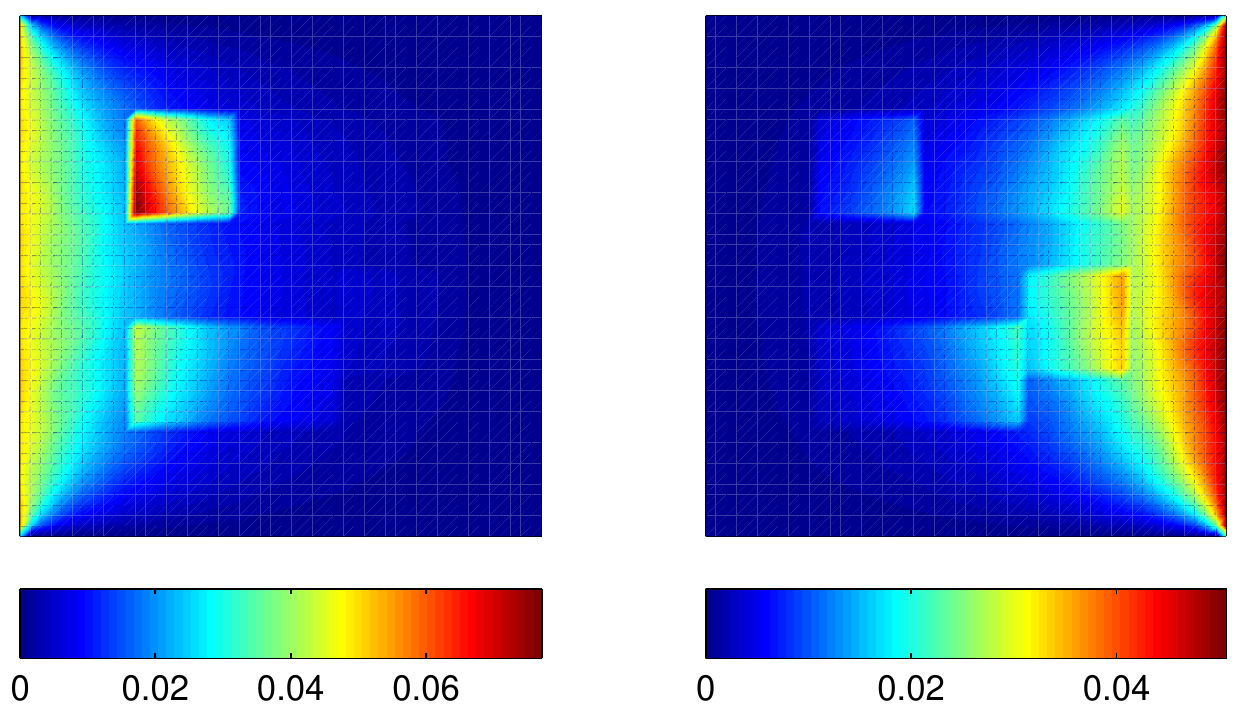}\\
\includegraphics[angle=0,width=0.48\textwidth]{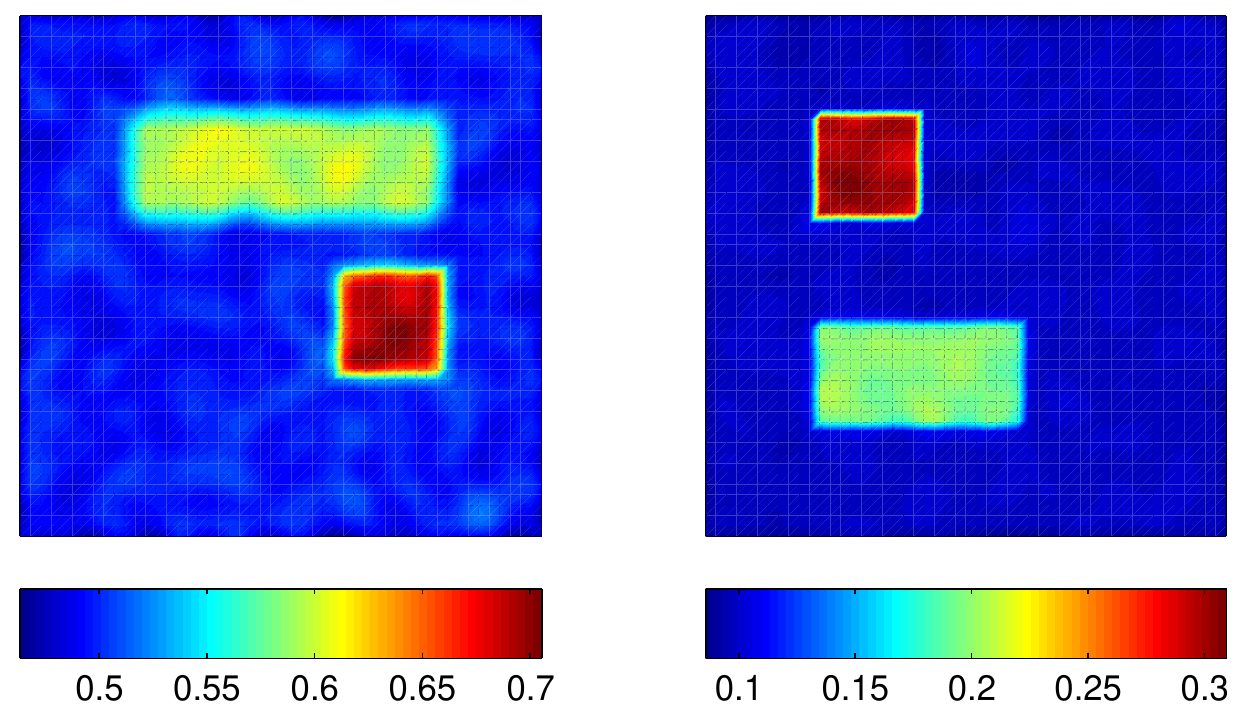}
\includegraphics[angle=0,width=0.48\textwidth]{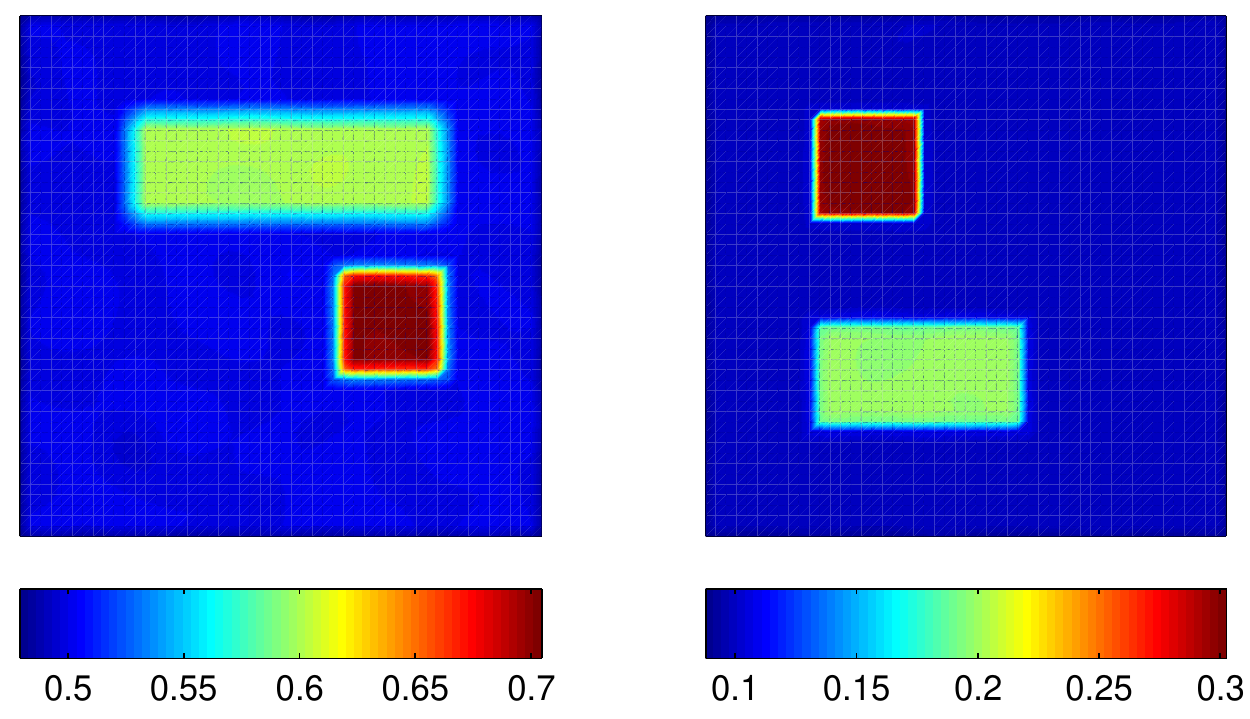}
\caption{Reconstructions of the Gr\"uneisen and the absorption coefficients. Top left: true coefficients ($\Upsilon$, $\sigma_a$) Top right: two data sets ($H_1$, $H_2$) used in the reconstruction; Bottom left: reconstructed ($\Upsilon$, $\sigma_a$) with two data sets; Bottom right: reconstructed ($\Upsilon$, $\sigma_a$) with eight data sets.}
\label{FIG:UpsilonMua}
\end{figure}
We now perform numerical simulations where we reconstruct both the Gr\"uneisen coefficient and the absorption coefficient. The scattering coefficient is fixed at $\sigma_s=8$ cm$^{-1}$ and the anisotropic factor $\eta=0$. The background absorption coefficient is $\sigma_a=0.1$ cm$^{-1}$ and the background Gr\"uneisen coefficient is $\Upsilon=0.5$. There are four inclusions, two for the absorption coefficient located in $\Omega_1=[0.4,1.2]\times[0.4,0.8]$ and $\Omega_2=[0.4, 0.8]\times [1.2, 1.6]$ respectively and two for the Gr\"uneisen coefficient located in $\Omega_3=[0.4,1.6]\times[1.2,1.6]$ and $\Omega_4=[1.2,1.6]\times[0.6,1.0]$ respectively. 
The coefficients inside the inclusions are ${\sigma_a}|_{\Omega_1}=0.2$ cm$^{-1}$, ${\sigma_a}|_{\Omega_2}=0.3$ cm$^{-1}$, $\Upsilon|_{\Omega_3}=0.6$ and ${\Upsilon}|_{\Omega_4}=0.7$ respectively. We perform the reconstruction with data polluted with 5\% additive random noise added the same way as in the previous section. The coefficients ($\Gamma$,$\sigma_a$) reconstructed with two data sets generated with sources $g_1(\bx,\bv)=\chi_{\partial\Omega_L}$ and $g_2(\bx,\bv)=\chi_{\partial\Omega_R}$ are shown in Fig.~\ref{FIG:UpsilonMua} (the left two plots in the bottom row). The quality of the reconstruction is quite high again even though it is slightly lower than that in the reconstructions in the non-scattering regime, such as those shown in Fig.~\ref{FIG:Collimated GammaMua}. The quality of the reconstruction can be improved by averaging out noise in the data through the usage of additional data sets. This is demonstrated in Fig.~\ref{FIG:UpsilonMua} (the right two plots in the bottom row) where we show the reconstruction of the coefficients ($\Upsilon$,$\sigma_a$) using eight data sets. The eight sources used are the rotation of the source $g_1(\bx)=\chi_{\partial\Omega_L}$ and $g_2=\chi_{\partial\Omega_L}\delta(\bv-(1,0))$ over $0$, $\pi/2$, $\pi$, and $3\pi/2$ around the center of the square domain $\Omega$.

\subsubsection{Reconstructing ($\sigma_a$, $\sigma_s$)}
\label{SUBSEC:2Coeff2}

\begin{figure}[ht]
\centering
\includegraphics[angle=0,width=0.48\textwidth]{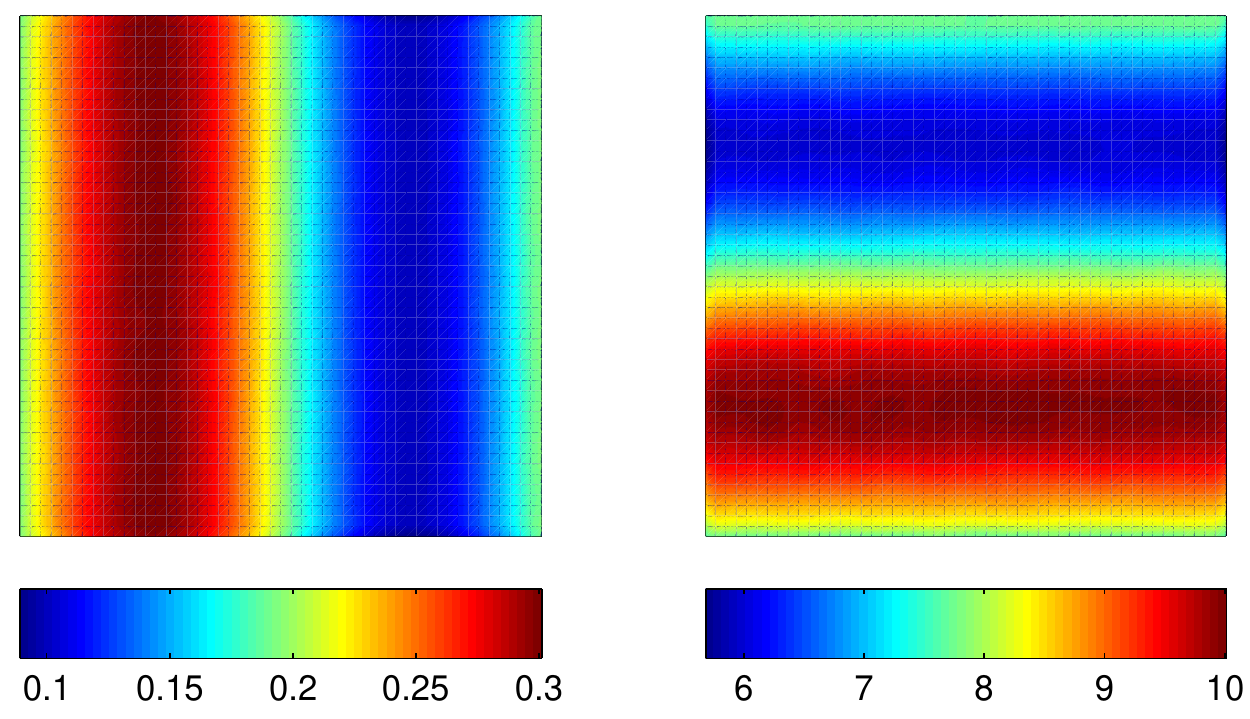}
\includegraphics[angle=0,width=0.48\textwidth]{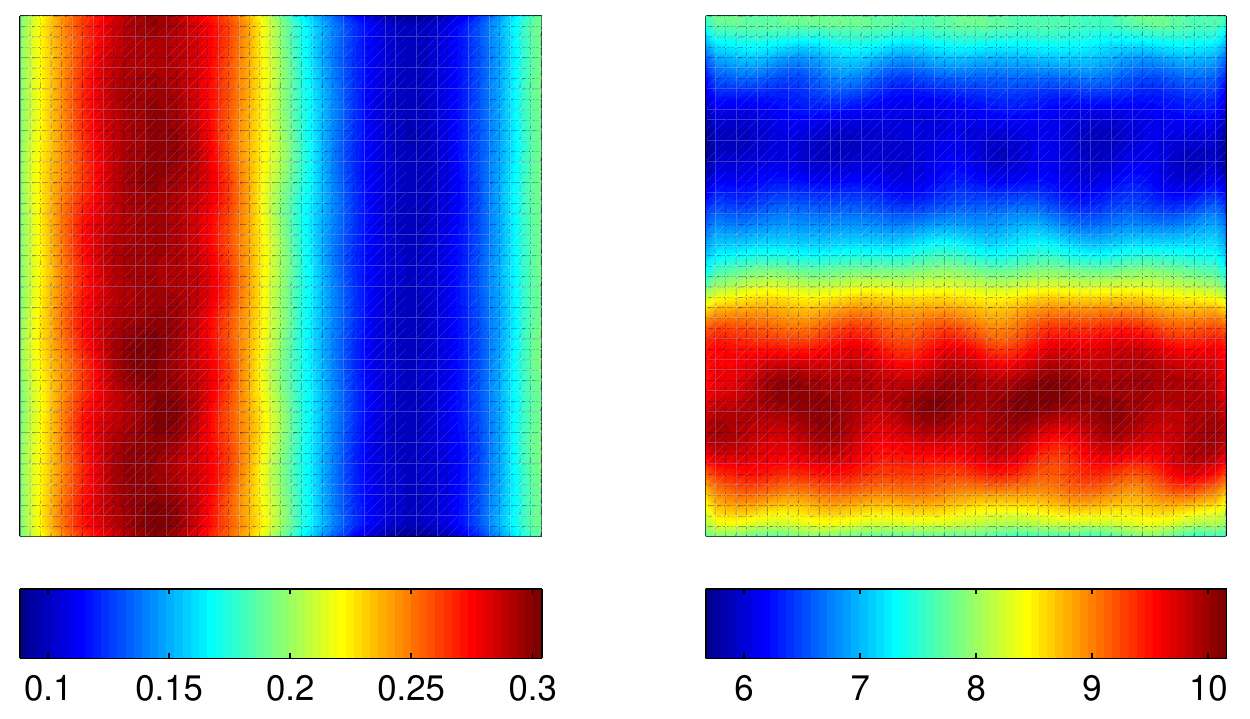}\\
\includegraphics[angle=0,width=0.48\textwidth]{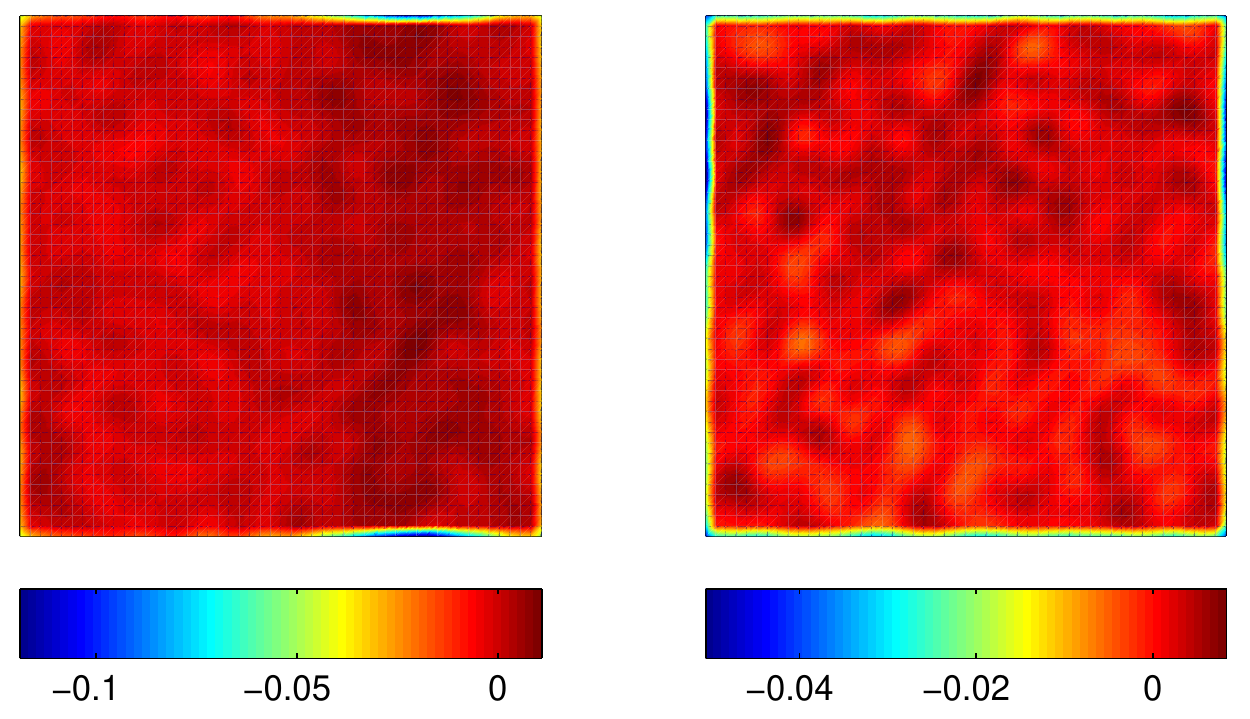}
\includegraphics[angle=0,width=0.48\textwidth]{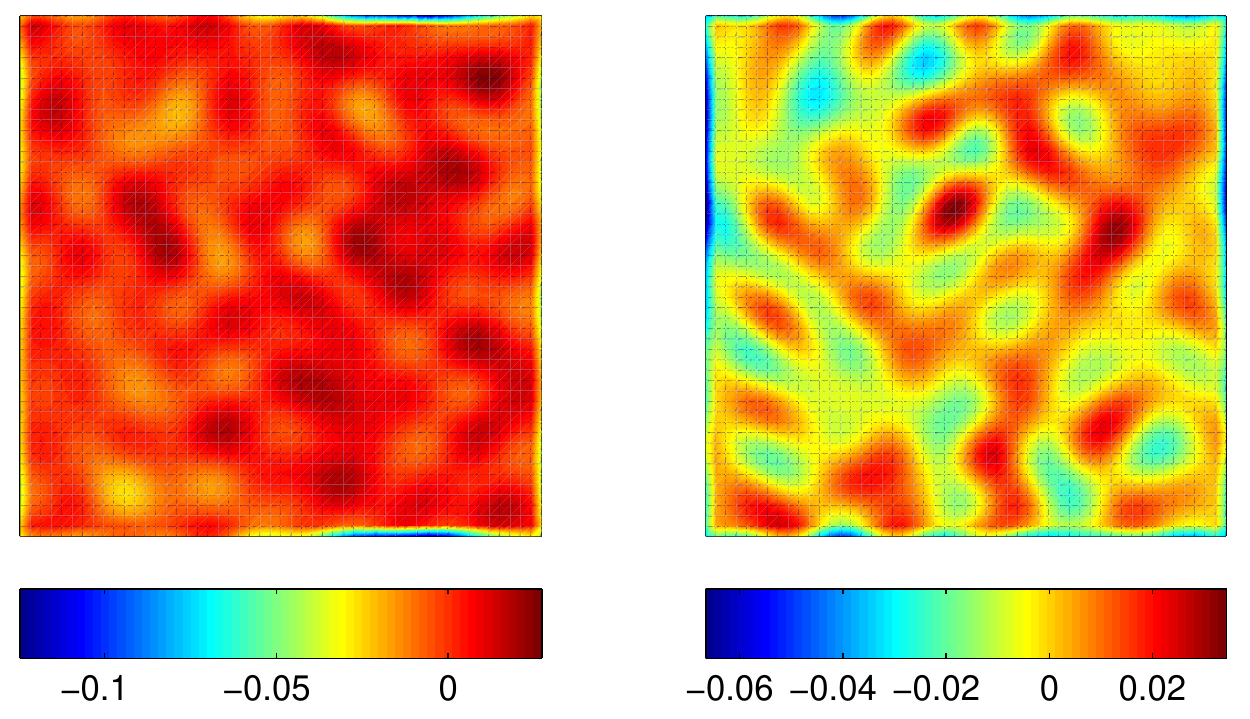}
\caption{Reconstructed absorption and scattering coefficients given in~\eqref{EQ:Abs Coeff1}. 
Top row: ($\sigma_a$, $\sigma_s$) reconstructed with noiseless (left two plots) and noisy (right two plots) data; Bottom row: the corresponding relative differences between reconstructed and real coefficients.}
\label{FIG:MuaMus-C}
\end{figure}
We now fix the Gr\"uneisen coefficient $\Upsilon=1$ and attempt to reconstruct both the absorption and the scattering coefficient. We show in Fig.~\ref{FIG:MuaMus-C} the reconstructions of the objective absorption and scattering coefficients:
\begin{equation}\label{EQ:Abs Coeff1}
	\sigma_a(x,y)=0.2+0.1 \sin(\pi x), \qquad 
	\sigma_s(x,y)=8.0+2.0 \sin(\pi y) .
\end{equation}
Shown are reconstructed ($\sigma_a$,$\sigma_s$) (top row) with eight noiseless and noisy data sets and the corresponding pointwise relative error (bottom row), defined for $\sigma_a$ (resp. $\sigma_s$) as $\frac{\tilde{\boldsymbol\sigma}_a-\boldsymbol\sigma_a}{\boldsymbol\sigma_a}$ (resp. $\frac{\tilde{\boldsymbol\sigma}_s-\boldsymbol\sigma_s}{\boldsymbol\sigma_s}$), in the reconstructions. Except for the relatively large error on some part of the boundary, the error in the reconstructions is comparable to the noise level in the data. This is also confirmed in the reconstruction of piecewise constant coefficients such as those shown in Fig.~\ref{FIG:MuaMus-D}. The piecewise constant absorption coefficient but the smooth scattering coefficient are given as
\begin{equation}\label{EQ:Abs Coeff2}
	\sigma_a(x,y)=0.2+0.1 \sin(\pi x), \qquad 
	\sigma_s(x,y)=8.0+2.0 \sin(\pi y)
\end{equation}
where the first absorbing inclusion is located in $\Omega_1=[0.4,0.8]\times[0.8,1.2]$ and the second absorbing inclusion is located in $\Omega_2=[1.4,1.6]\times[0.2,1.8]$. For piecewise constant coefficient, the reconstruction error occurs on the boundary of the inclusions. This is true also in the noiseless data case due to the fact that the synthetic data is generated by averaging quantities on a finer mesh. The Tikhonov regularization we employed in the numerical schemes also contribute to the smoothing on the boundary of the inclusions.
\begin{figure}[ht]
\centering
\includegraphics[angle=0,width=0.48\textwidth]{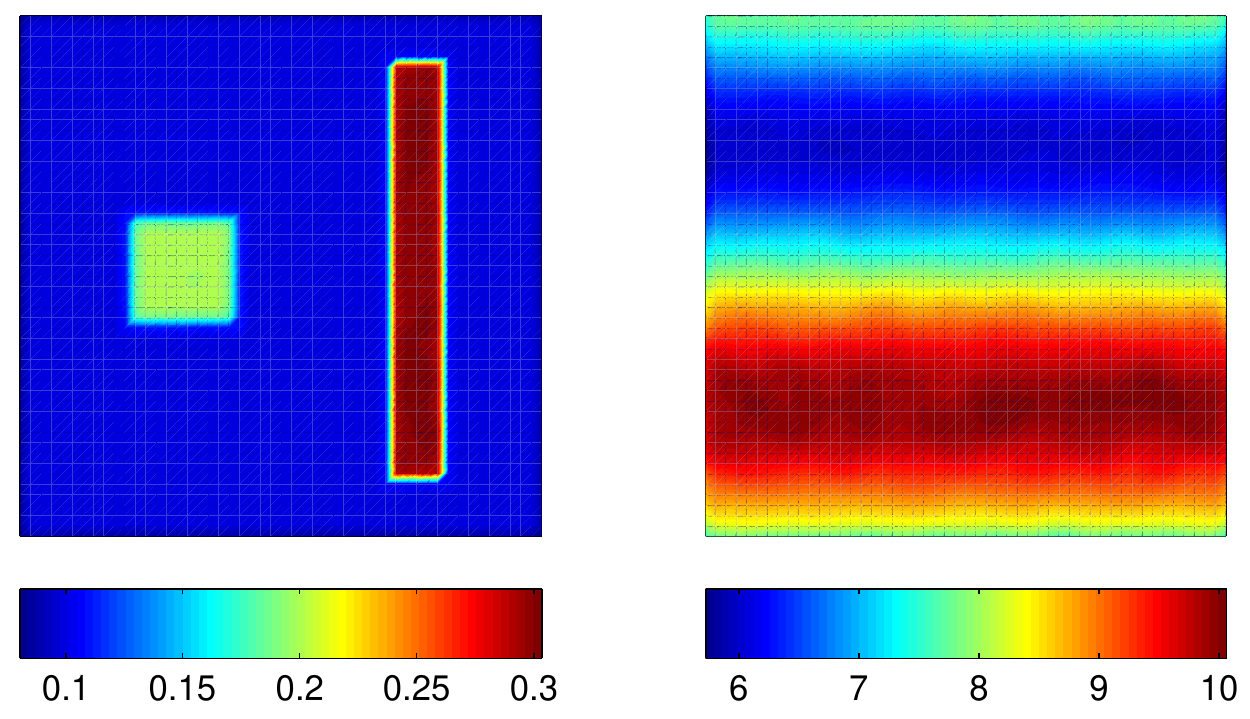}
\includegraphics[angle=0,width=0.48\textwidth]{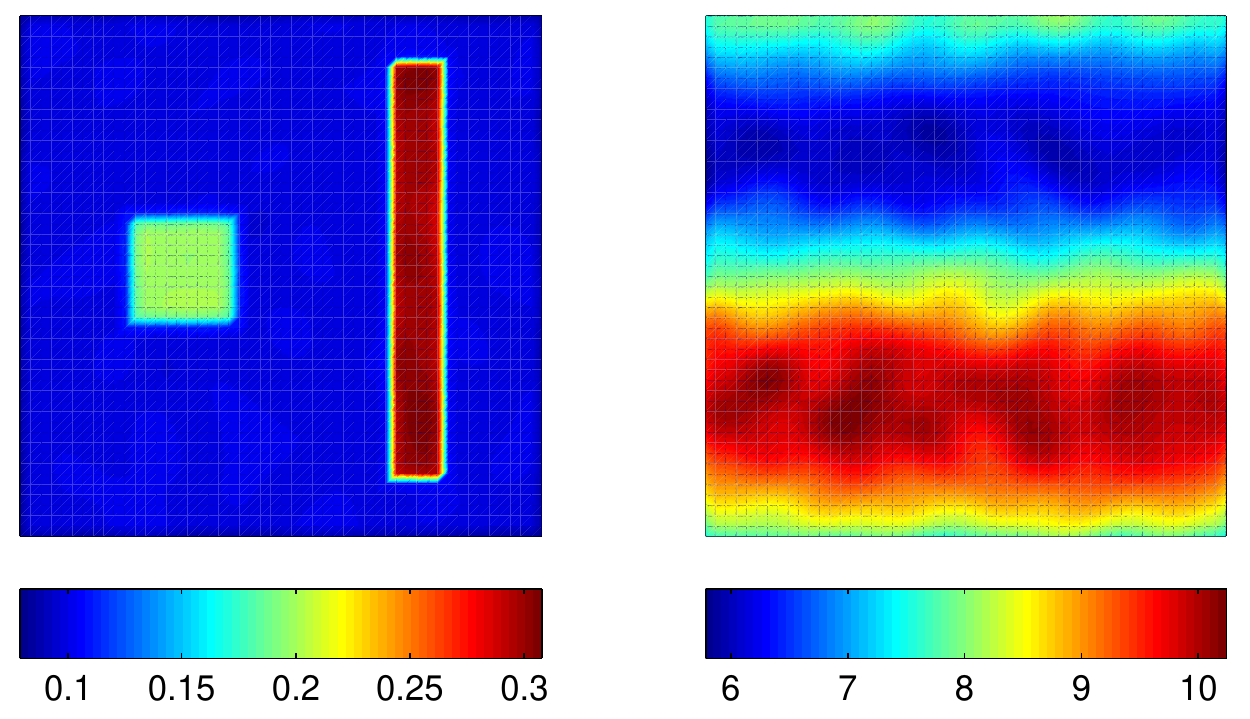}\\
\includegraphics[angle=0,width=0.48\textwidth]{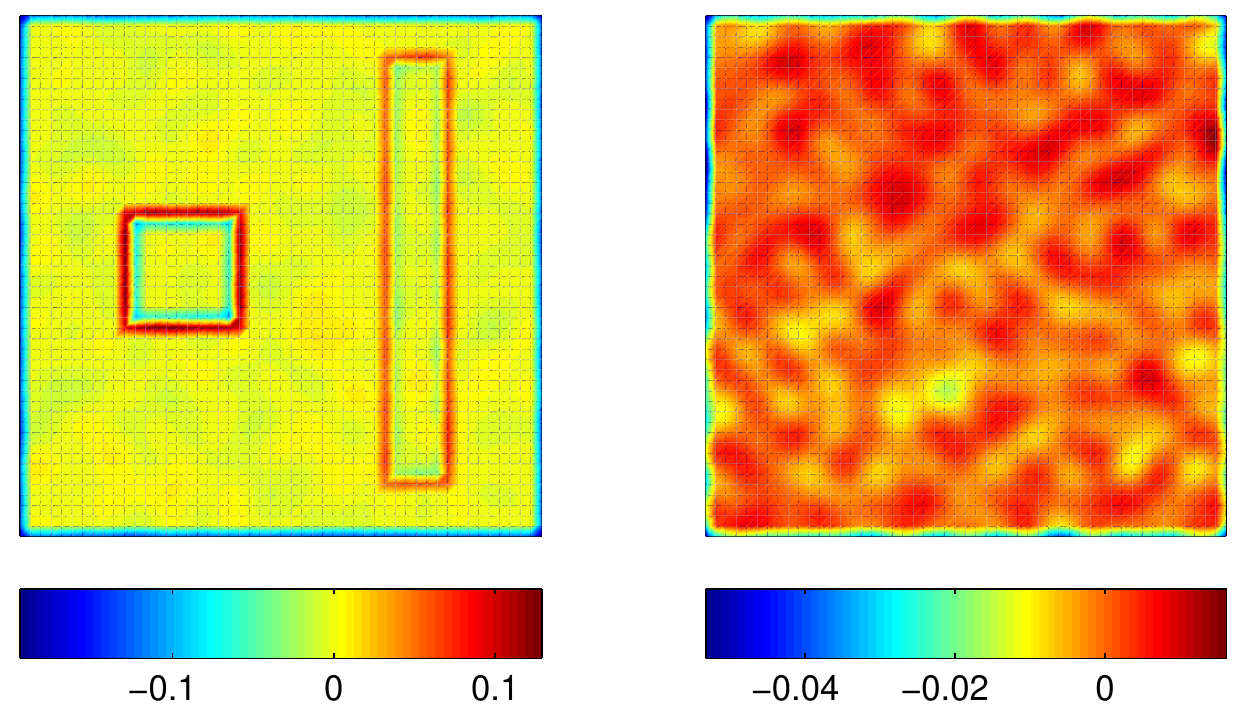}
\includegraphics[angle=0,width=0.48\textwidth]{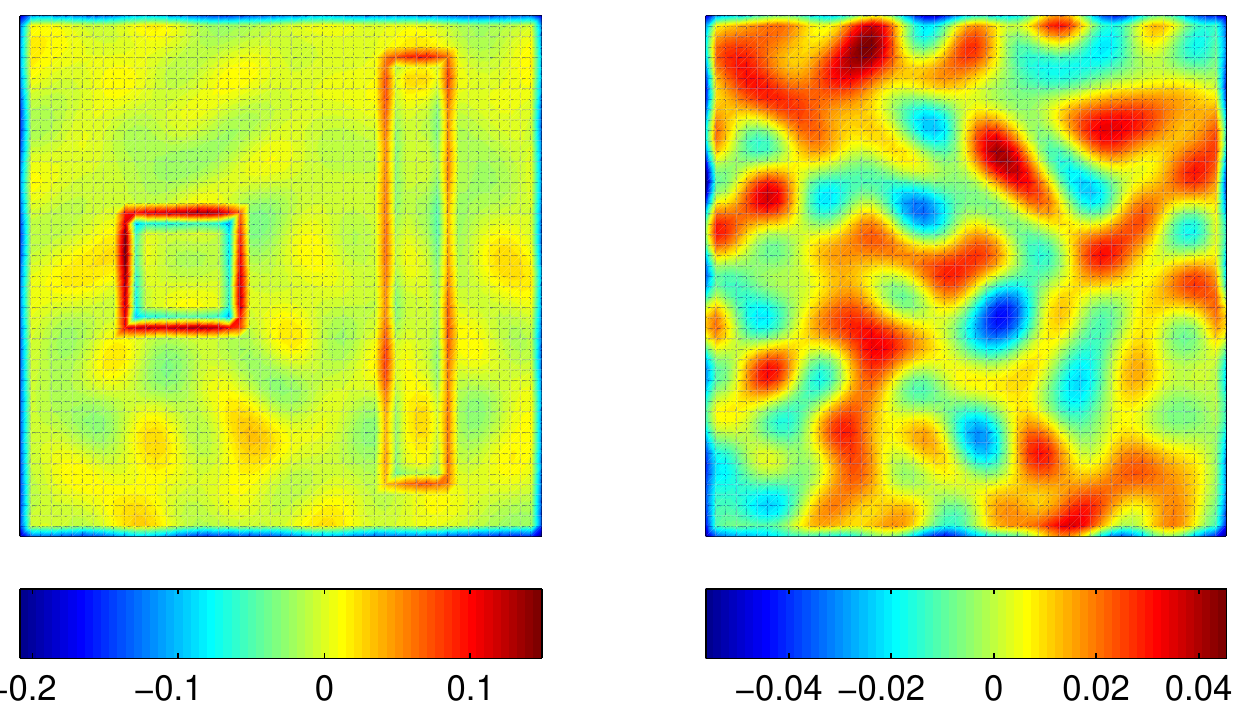}
\caption{Reconstructed absorption and scattering coefficients given in~\eqref{EQ:Abs Coeff2}. 
Top row: ($\sigma_a$, $\sigma_s$) reconstructed with noiseless (left two plots) and noisy (right two plots) data; Bottom row: the corresponding relative differences between reconstructed and real coefficients.}
\label{FIG:MuaMus-D}
\end{figure}

\subsubsection{Reconstructing ($\Upsilon$, $\sigma_a$, $\sigma_s$) with multi-spectral data}
\label{SUBSEC:3Coeff}

\begin{figure}[ht]
\centering
\includegraphics[angle=0,width=0.96\textwidth]{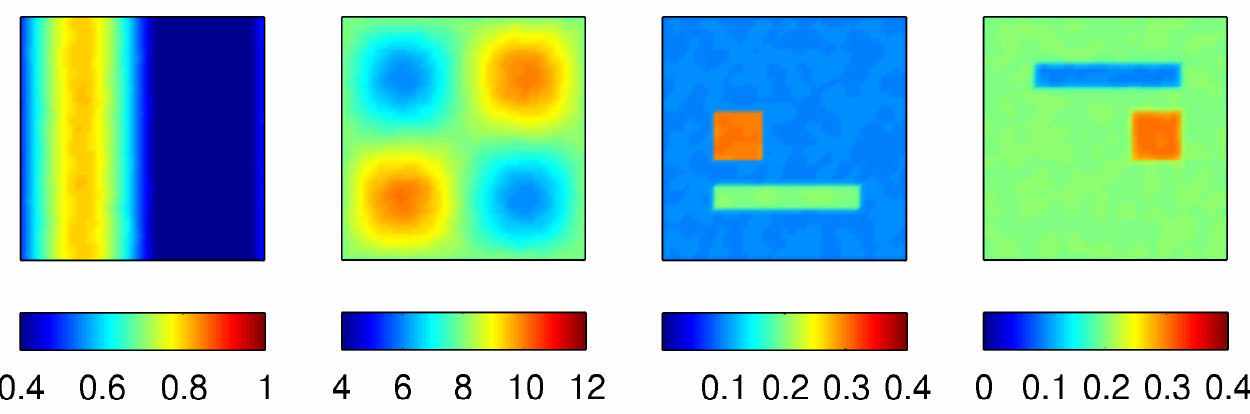}
\includegraphics[angle=0,width=0.96\textwidth]{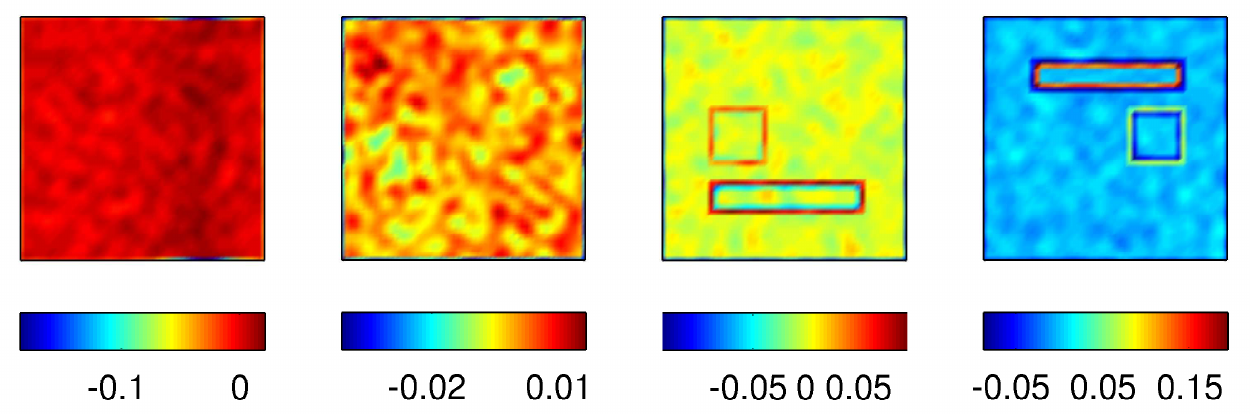}
\caption{Simultaneous reconstruction of the absorption and the scattering and the Gr\"uneisen coefficients with multi-spectral data. Top row: reconstructed $\Upsilon$, $\sigma_s$, $\sigma_a^1$ and $\sigma_a^2$; Bottom row: the corresponding pointwise relative error in the reconstructions. The data used in this simulation are noiseless.}
\label{FIG:Spec-C}
\end{figure}
The last numerical simulation is devoted to the simultaneous reconstructions of all three coefficients with multi-spectral interior data. The coefficients take the forms given in~\eqref{EQ:Model II}. To simplify the presentation, we consider only an absorption coefficient that has two components, i.e. $K=2$. The spectral components of the coefficients are given as follows:
\begin{equation}
	\alpha_1(\lambda)=\dfrac{\lambda}{\lambda_0},\quad  \alpha_2(\lambda)=\dfrac{\lambda_0}{\lambda}, \quad \beta(\lambda)= \left( \dfrac{\lambda}{\lambda_0} \right)^{3/2}, \quad \gamma(\lambda) = 1,
\end{equation}
where the normalization wavelength $\lambda_0$ is included to control the amplitude of coefficients. 
These weight functions are by no means what they should exactly be in practical applications. However, the specific forms do not have impact on the results of the reconstruction. The spatial components of the coefficients are given as
\begin{equation}
	\begin{array}{c}
		\Upsilon(\bx)=0.5+0.4\tanh(4x-4), \qquad \sigma_s(\bx)=8.0+2.0\sin(\pi x)\sin(\pi y)\\
		\sigma_a^1(\bx)= 0.1+0.1\chi_{\Omega_1}+0.2\chi_{\Omega_2}, \qquad \sigma_a^2(\bx)=0.2-0.1\chi_{\Omega_3}+0.1\chi_{\Omega_4}
	\end{array}
\end{equation}
where $\Omega_1=[0.4, 1.6]\times[0.4, 0.6]$, $\Omega_2=[0.4, 0.8]\times[0.8, 1.2]$, $\Omega_3=[0.4, 1.6]\times[1.4, 1.6]$, and $\Omega_4=[1.2, 1.6]\times[0.8, 1.2]$. The anisotropic factor $\eta=0$. We performanced numerical reconstructions using four wavelength-dependent sources. For each source, we have data for four different wavelengths. The results of the reconstructions using noiseless data are presented in Fig.~\ref{FIG:Spec-C} and those using noisy data are shown in Fig.~\ref{FIG:Spec-D}. We observe similar reconstruction qualities (as can be seen in the plots of the pointwise relative error) to the two-coefficient cases in the previous sections.
\begin{figure}[ht]
\centering
\includegraphics[angle=0,width=0.96\textwidth]{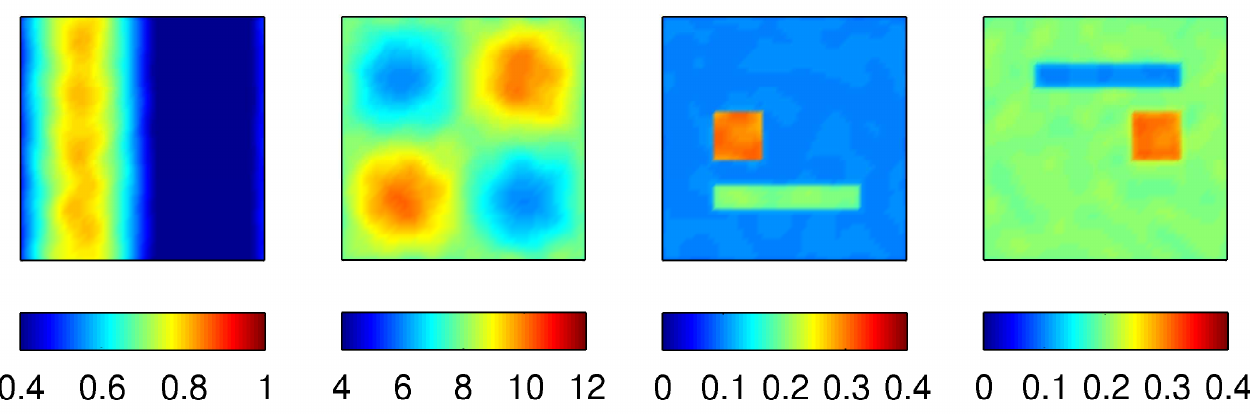}
\includegraphics[angle=0,width=0.96\textwidth]{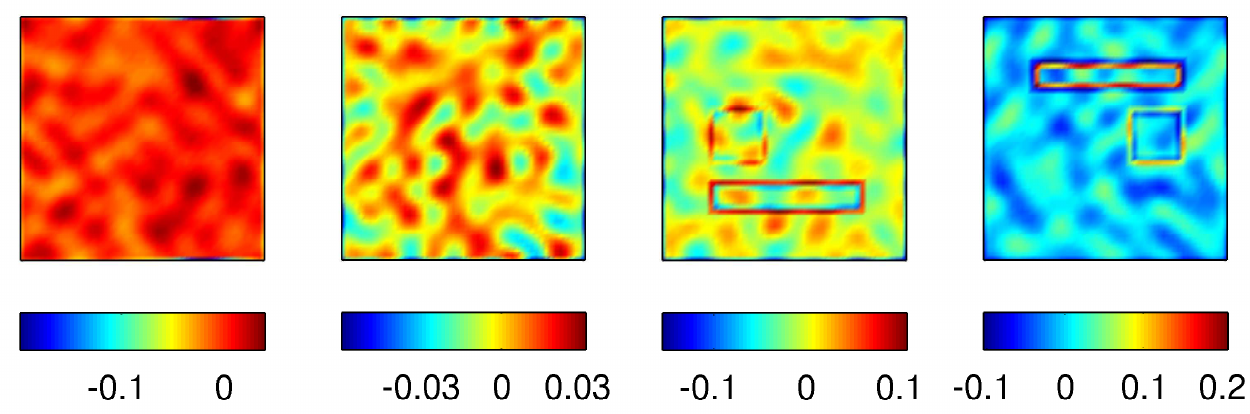}
\caption{Same as in Fig.~\ref{FIG:Spec-C} except that the data used contain 5\% random noise.}
\label{FIG:Spec-D}
\end{figure}

\section{Concluding remarks}
\label{SEC:Concl}

We studied the quantitative photoacoustic tomography problem with the radiative transport model, aiming at reconstructing multiple physical coefficients simultaneously using the data collected from multiple illuminations. We showed that in non-scattering absorbing media, we can reconstruct both the absorption and the Gr\"uneisen coefficients simultaneously in a stable manner using only two sets of interior data. Moreover, in this case we derived explicit reconstruction formulas for the problem with particular choices of illuminations (collimated, point and cone-limited sources). In scattering media, we show, based on the result in~\cite{BaJoJu-IP10}, that one can reconstruct two of the absorption, the scattering and the Gr\"uneisen coefficients stably when more data, i.e., data given by the full albedo operator, is available. To reconstruct all three coefficients simultaneously, we proposed to use interior data collected at different optical wavelengths. We show that with some realistic \emph{a priori} knowledge, mainly the knowledge of the spectral denpence of the coefficients, we can reconstruct stably the spatial component of all three coefficients simultaneously.

Besides the analytical reconstruction strategy for the non-scattering problem, we proposed a linearized reconstruction method based on Born approximation to the original inverse problem as well as a nonlinear reconstruction method based on numerical minimization techniques for QPAT for scattering media. We show numerically that the reconstruction is very stable. In fact, our numerical experiments showed that, assuming that the data measured with ultrasound is accurate enough, the nonlinear least-square formulation of the inverse transport problems with interior data problem behaves like a convex optimization problem and thus can be solved efficiently and accurately; see the numerical reconstructions in Section~\ref{SEC:Num}. 

There are many important issues in quantitative PAT that need to be addressed in the future. For instance, in the theory developed in~\cite{BaRe-IP11,BaUh-IP10,BaRe-IP12} for the diffusion model, one can reconstruct both the absorption and the scattering coefficients with only two ``well-chosen'' illuminations, with little \emph{a priori} assumptions on the coefficients. It would be interesting to see how to generalize that to the radiative transport model~\eqref{EQ:ERT} with $\sigma_s\neq 0$. An equally important question is whether or not we can derive any analytical reconstruction procedure, similar to those developed in diffusion-based theory~\cite{BaUh-IP10,BaRe-IP11,BaRe-IP12}, for the inverse transport problem in scattering media. 

On the numerical side, when the noise present in the data is significant, we need to regularize the reconstruction. The regularization we adopt here is the usual Tikhonov regularization which yields smooth solutions among all possibilities. In certain applications, we might know \emph{a priori} that the coefficients to be reconstructed are piecewise constant, such as those presented in some of the numerical simulations in Section~\ref{SEC:Num}. In this case, alternative regularization strategies such as the total variation (TV) regularization might be more appropriate. In~\cite{BaRe-IP11,GaZh-JBO10,GaZhOs-Prep10}, $L^1$ regularization for non-smooth coefficients have been considered in the diffusion case. It would be interesting to see the performance of that in the transport case. We plan to explore this issue in the future.

\section*{Acknowledgment}

This work is partially supported by National Science Foundation (NSF) through grant DMS-0914825 and a faculty development award from the University of Texas at Austin. 

\bibliography{/home/ren/Academic/Bibliography/BIB-KR,/home/ren/Academic/Bibliography/BIB-YH}

\begin{thebibliography}{100}

\bibitem{AgKuKu-PIS09}
{\sc M.~Agranovsky, P.~Kuchment, and L.~Kunyansky}, {\em On reconstruction
  formulas and algorithms for the {TAT} and {PAT} tomography}, in Photoacoustic
  Imaging and Spectroscopy, L.~V. Wang, ed., CRC Press, 2009, pp.~89--101.

\bibitem{AgQu-JFA96}
{\sc M.~Agranovsky and E.~T. Quinto}, {\em Injectivity sets for the {Radon}
  transform over circles and complete systems of radial functions}, J. Funct.
  Anal., 139 (1996), pp.~383--414.

\bibitem{AmGoLe-IP10}
{\sc G.~Ambartsoumian, R.~Gouia-Zarrad, and M.~Lewis}, {\em Inversion of the
  circular {Radon} transform on an annulus}, Inverse Problems, 26 (2010).
\newblock 105015.

\bibitem{Ammari-Book08}
{\sc H.~Ammari}, {\em An Introduction to Mathematics of Emerging Biomedical
  Imaging}, Springer, 2008.

\bibitem{AmBoJuKa-SIAM10}
{\sc H.~Ammari, E.~Bossy, V.~Jugnon, and H.~Kang}, {\em Mathematical modelling
  in photo-acoustic imaging of small absorbers}, SIAM Rev., 52 (2010),
  pp.~677--695.

\bibitem{AmBoJuKa-SIAM11}
\leavevmode\vrule height 2pt depth -1.6pt width 23pt, {\em Quantitative
  photo-acoustic imaging of small absorbers}, SIAM J. Appl. Math., 71 (2011),
  pp.~676--693.

\bibitem{AmBrGaWa-CM11}
{\sc H.~Ammari, E.~Bretin, J.~Garnier, and A.~Wahab}, {\em Time reversal in
  attenuating acoustic media}, Contemporary Mathematics, 548 (2011),
  pp.~151--163.

\bibitem{AmBrJuWa-LNM11}
{\sc H.~Ammari, E.~Bretin, V.~Jugnon, and A.~Wahab}, {\em Photo-acoustic
  imaging for attenuating acoustic media}, Lecture Notes in Mathematics, 2035
  (2011), pp.~53--80.

\bibitem{Arridge-IP99}
{\sc S.~R. Arridge}, {\em Optical tomography in medical imaging}, Inverse
  Probl., 15 (1999), pp.~R41--R93.

\bibitem{ArSc-IP09}
{\sc S.~R. Arridge and J.~C. Schotland}, {\em Optical tomography: forward and
  inverse problems}, Inverse Problems, 25 (2009).
\newblock 123010.

\bibitem{Bal-IP09}
{\sc G.~Bal}, {\em Inverse transport theory and applications}, Inverse
  Problems, 25 (2009).
\newblock 053001.

\bibitem{Bal-IO12}
\leavevmode\vrule height 2pt depth -1.6pt width 23pt, {\em Hybrid inverse
  problems and internal information}, in Inside Out: Inverse Problems and
  Applications, G.~Uhlmann, ed., Mathematical Sciences Research Institute
  Publications, Cambridge University Press, 2012.

\bibitem{BaJoJu-IP10}
{\sc G.~Bal, A.~Jollivet, and V.~Jugnon}, {\em Inverse transport theory of
  photoacoustics}, Inverse Problems, 26 (2010).
\newblock 025011.

\bibitem{BaRe-IP11}
{\sc G.~Bal and K.~Ren}, {\em Multi-source quantitative {PAT} in diffusive
  regime}, Inverse Problems, 27 (2011).
\newblock 075003.

\bibitem{BaRe-CM11}
\leavevmode\vrule height 2pt depth -1.6pt width 23pt, {\em Non-uniqueness
  results for a hybrid inverse problem}, in Tomography and Inverse Transport
  Theory, G.~Bal, D.~Finch, P.~Kuchment, J.~Schotland, P.~Stefanov, and
  G.~Uhlmann, eds., vol.~559 of Contemporary Mathematics, Amer. Math. Soc.,
  Providence, RI, 2011, pp.~29--38.

\bibitem{BaRe-IP12}
\leavevmode\vrule height 2pt depth -1.6pt width 23pt, {\em On multi-spectral
  quantitative photoacoustic tomography in diffusive regime}, Inverse Problems,
  28 (2012).
\newblock 025010.

\bibitem{BaSc-PRL10}
{\sc G.~Bal and J.~C. Schotland}, {\em Inverse scattering and acousto-optic
  imaging}, Phys. Rev. Lett., 104 (2010).
\newblock 043902.

\bibitem{BaUh-IP10}
{\sc G.~Bal and G.~Uhlmann}, {\em Inverse diffusion theory of photoacoustics},
  Inverse Problems, 26 (2010).
\newblock 085010.

\bibitem{BaUh-Prep11}
\leavevmode\vrule height 2pt depth -1.6pt width 23pt, {\em Reconstructions of
  coefficients in scalar second-order elliptic equations from knowledge of
  their solutions}, Submitted,  (2011).

\bibitem{BaBaVaRo-JOSA08}
{\sc B.~Banerjee, S.~Bagchi, R.~M. Vasu, and D.~Roy}, {\em Quatitative
  photoacoustic tomography from boundary pressure measurements: noniterative
  recovery of optical absorption coefficient from the reconstructed absorbed
  energy map}, J. Opt. Soc. Am. A, 25 (2008), pp.~2347--2356.

\bibitem{BaNoErWaCu-JBO11}
{\sc A.~Q. Bauer, R.~E. Nothdurft, T.~N. Erpelding, L.~V. Wang, and J.~P.
  Culver}, {\em Quantitative photoacoustic imaging: correcting for
  heterogeneous light fluence distributions using diffuse optical tomography},
  J. Biomed. Opt., 16 (2011).
\newblock 096016.

\bibitem{Beard-IF11}
{\sc P.~Beard}, {\em Biomedical photoacoustic imaging}, Interface Focus, 1
  (2011), pp.~602--631.

\bibitem{BuRoJeDiRaNt-MP11}
{\sc A.~Buehler, A.~Rosenthal, T.~Jetzfellner, A.~Dima, D.~Razansky, and
  V.~Ntziachristos}, {\em Model-based optoacoustic inversions with incomplete
  projection data}, Med. Phys., 38 (2011), p.~1694.

\bibitem{BuMaHaPa-PRE07}
{\sc P.~Burgholzer, G.~J. Matt, M.~Haltmeier, and G.~Paltauf}, {\em Exact and
  approximative imaging methods for photoacoustic tomography using an arbitrary
  detection surface}, Phys. Rev. E, 75 (2007).
\newblock 046706.

\bibitem{ChDaBaMoCoSmChLe-PMB05}
{\sc A.~J. Chaudhari, F.~Darvas, J.~R. Bading, R.~A. Moats, P.~S. Conti, D.~J.
  Smith, S.~R. Cherry, and R.~M. Leahy}, {\em Hyperspectral and multispectral
  bioluminescence optical tomography for small animal imaging}, Phys. Med.
  Biol., 50 (2005), p.~5421–5441.

\bibitem{CoWa-IJBI06}
{\sc A.~X. Cong and G.~Wang}, {\em Multispectral bioluminescence tomography:
  Methodology and simulation}, Int. J. Biomed. Imag., 2006 (2006).
\newblock 57614.

\bibitem{CoGiHe-JBO10}
{\sc T.~Correia, A.~Gibson, and J.~Hebden}, {\em Identification of the optimal
  wavelengths for optical topography: a photon measurement density function
  analysis}, J. Biomed. Opt., 15 (2010).
\newblock 056002.

\bibitem{CoLaArBe-JBO12}
{\sc B.~Cox, J.~G. Laufer, S.~R. Arridge, and C.~Beard}, {\em Quantitative
  spectroscopic photoacoustic imaging: a review}, J. Biomed. Optics, 17 (2012).
\newblock 061202.

\bibitem{CoArBe-IP07}
{\sc B.~T. Cox, S.~R. Arridge, and P.~C. Beard}, {\em Photoacoustic tomography
  with a limited-aperture planar sensor and a reverberant cavity}, Inverse
  Problems, 23 (2007), pp.~S95--S112.

\bibitem{CoArBe-JOSA09}
\leavevmode\vrule height 2pt depth -1.6pt width 23pt, {\em Estimating
  chromophore distributions from multiwavelength photoacoustic images}, J. Opt.
  Soc. Am. A, 26 (2009), pp.~443--455.

\bibitem{CoArKoBe-AO06}
{\sc B.~T. Cox, S.~R. Arridge, K.~P. K\"ostli, and P.~C. Beard}, {\em
  Two-dimensional quantitative photoacoustic image reconstruction of absorption
  distributions in scattering media by use of a simple iterative method},
  Applied Optics, 45 (2006), pp.~1866--1875.

\bibitem{CoBe-JOSA09}
{\sc B.~T. Cox and P.~C. Beard}, {\em Photoacoustic tomography with a single
  detector in a reverberant cavity}, J. Opt. Soc. Am. A, 125 (2009),
  pp.~1426--1436.

\bibitem{CoLaBe-SPIE09}
{\sc B.~T. Cox, J.~G. Laufer, and P.~C. Beard}, {\em The challenges for
  quantitative photoacoustic imaging}, Proc. of SPIE, 7177 (2009).
\newblock 717713.

\bibitem{CoTaAr-CM11}
{\sc B.~T. Cox, T.~Tarvainen, and S.~R. Arridge}, {\em Multiple illumination
  quantitative photoacoustic tomography using transport and diffusion models},
  in Tomography and Inverse Transport Theory, G.~Bal, D.~Finch, P.~Kuchment,
  J.~Schotland, P.~Stefanov, and G.~Uhlmann, eds., vol.~559 of Contemporary
  Mathematics, Amer. Math. Soc., Providence, RI, 2011, pp.~1--12.

\bibitem{DaLi-Book93-6}
{\sc R.~Dautray and J.-L. Lions}, {\em Mathematical {Analysis} and {Numerical}
  {Methods} for {Science} and {Technology}, {Vol VI}}, Springer-Verlag, Berlin,
  1993.

\bibitem{DeRaNt-PMB11}
{\sc X.~L. De\'an-Ben, D.~Razansky, and V.~Ntziachristos}, {\em The effects of
  acoustic attenuation in optoacoustic signals}, Phys. Med. Biol., 56 (2011),
  pp.~6129--6148.

\bibitem{Dorn-IP98}
{\sc O.~Dorn}, {\em A transport-backtransport method for optical tomography},
  Inverse Probl., 14 (1998), pp.~1107--1130.

\bibitem{Dorn-OE00}
\leavevmode\vrule height 2pt depth -1.6pt width 23pt, {\em Scattering and
  absorption transport sensitivity functions for optical tomography}, Optics
  Express, 7 (2000), pp.~492--506.

\bibitem{ElScSc-arXiv11}
{\sc P.~Elbau, O.~Scherzer, and R.~Schulze}, {\em Reconstruction formulas for
  photoacoustic sectional imaging}, arXiv:1109.0841v1,  (2011).

\bibitem{FiHaRa-SIAM07}
{\sc D.~Finch, M.~Haltmeier, and Rakesh}, {\em Inversion of spherical means and
  the wave equation in even dimensions}, SIAM J. Appl. Math., 68 (2007),
  pp.~392--412.

\bibitem{FiRa-IP07}
{\sc D.~Finch and Rakesh}, {\em The spherical mean operator with centers on a
  sphere}, Inverse Problems, 35 (2007), pp.~S37--S50.

\bibitem{FiPaRa-SIAM04}
{\sc S.~K. Finch, D.~Patch, and Rakesh.}, {\em Determining a function from its
  mean values over a family of spheres}, SIAM J. Math. Anal., 35 (2004),
  pp.~1213--1240.

\bibitem{Friedrichs-Book85}
{\sc K.~O. Friedrichs}, {\em {Advanced} {Ordinary} {Differential} {Equations}},
  Taylor \& Francis, New York, 1985.

\bibitem{GaOsZh-LNM12}
{\sc H.~Gao, S.~Osher, and H.~Zhao}, {\em Quantitative photoacoustic
  tomography}, in Mathematical Modeling in Biomedical Imaging II: Optical,
  Ultrasound, and Opto-Acoustic Tomographies, H.~Ammari, ed., Lecture Notes in
  Mathematics, Springer, 2012.

\bibitem{GaZh-OE10A}
{\sc H.~Gao and H.~Zhao}, {\em Multilevel bioluminescence tomography based on
  radiative transfer equation. part 1: l1 regularization}, Optics Express, 18
  (2010), pp.~1854--1871.

\bibitem{GaZh-OE10B}
\leavevmode\vrule height 2pt depth -1.6pt width 23pt, {\em Multilevel
  bioluminescence tomography based on radiative transfer equation. part 2:
  total variation and l1 data fidelity}, Optics Express, 18 (2010),
  pp.~2894--2912.

\bibitem{GaZh-JBO10}
\leavevmode\vrule height 2pt depth -1.6pt width 23pt, {\em Simultaneous
  reconstruction of absorption and scattering with multiple optical excitations
  in quantitative photoacoustic tomography via {Bregman} method}, J. Biomed.
  Opt., 1 (2010), p.~1.

\bibitem{GaZhOs-Prep10}
{\sc H.~Gao, H.~Zhao, and S.~Osher}, {\em {Bregman} methods in quantitative
  photoacoustic tomography}.
\newblock CAM Report 10-42, UCLA, 2010.

\bibitem{GoHiKu-Prep11}
{\sc T.~G\"orner, R.~Hielscher, and S.~Kunis}, {\em Efficient and accurate
  computation of spherical mean values at scattered center points}, Preprint,
  (2011).

\bibitem{Haltmeier-SIAM11}
{\sc M.~Haltmeier}, {\em Inversion formulas for a cylindrical {Radon}
  transform}, SIAM J. Imag. Sci., 4 (2011), pp.~789--806.

\bibitem{Haltmeier-SIAM11B}
\leavevmode\vrule height 2pt depth -1.6pt width 23pt, {\em A mollification
  approach for inverting the spherical mean {Radon} transform}, SIAM J. Appl.
  Math., 71 (2011), pp.~1637--1652.

\bibitem{HaScBuPa-IP04}
{\sc M.~Haltmeier, O.~Scherzer, P.~Burgholzer, and G.~Paltauf}, {\em
  Thermoacoustic computed tomography with large planer receivers}, Inverse
  Problems, 20 (2004), pp.~1663--1673.

\bibitem{HaScSc-M2AS05}
{\sc M.~Haltmeier, T.~Schuster, and O.~Scherzer}, {\em Filtered backprojection
  for thermoacoustic computed tomography in spherical geometry}, Math. Methods
  Appl. Sci., 28 (2005), pp.~1919--1937.

\bibitem{HeGr-AJ41}
{\sc L.~G. Henyey and J.~L. Greenstein}, {\em Diffuse radiation in the galaxy},
  Astrophys. J., 90 (1941), pp.~70--83.

\bibitem{Hristova-IP09}
{\sc Y.~Hristova}, {\em Time reversal in thermoacoustic tomography - an error
  estimate}, Inverse Problems, 25 (2009).
\newblock 055008.

\bibitem{HrKuNg-IP08}
{\sc Y.~Hristova, P.~Kuchment, and L.~Nguyen}, {\em Reconstruction and time
  reversal in thermoacoustic tomography in acoustically homogeneous and
  inhomogeneous media}, Inverse Problems, 24 (2008).
\newblock 055006.

\bibitem{KiFlYaKaHi-JBO10}
{\sc H.~K. Kim, M.~Flexman, D.~J. Yamashiro, J.~J. Kandel, and A.~H.
  Hielscher}, {\em {PDE}-constrained multispectral imaging of tissue
  chromophores with the equation of radiative transfer}, Biomedical Optics
  Express, 1 (2010), pp.~812--824.

\bibitem{KoSc-Prep10}
{\sc R.~Kowar and O.~Scherzer}, {\em Photoacoustic imaging taking into account
  attenuation}, Submitted,  (2010).

\bibitem{Kuchment-MLLE12}
{\sc P.~Kuchment}, {\em Mathematics of hybrid imaging. a brief review}, in The
  Mathematical Legacy of Leon Ehrenpreis, I.~Sabadini and D.~Struppa, eds.,
  Springer-Verlag, 2012.

\bibitem{KuKu-EJAM08}
{\sc P.~Kuchment and L.~Kunyansky}, {\em Mathematics of thermoacoustic
  tomography}, Euro. J. Appl. Math., 19 (2008), pp.~191--224.

\bibitem{KuKu-HMMI10}
\leavevmode\vrule height 2pt depth -1.6pt width 23pt, {\em Mathematics of
  thermoacoustic and photoacoustic tomography}, in Handbook of Mathematical
  Methods in Imaging, O.~Scherzer, ed., Springer-Verlag, 2010, pp.~817--866.

\bibitem{Kunyansky-IP07}
{\sc L.~Kunyansky}, {\em Explicit inversion formulae for the spherical mean
  {Radon} transform}, Inverse Problems, 23 (2007), pp.~373--383.

\bibitem{Kunyansky-IP07B}
\leavevmode\vrule height 2pt depth -1.6pt width 23pt, {\em A series solution
  and a fast algorithm for the inversion of the spherical mean {Radon}
  transform}, Inverse Problems, 23 (2007), pp.~S11--S20.

\bibitem{Kunyansky-IP08}
\leavevmode\vrule height 2pt depth -1.6pt width 23pt, {\em Thermoacoustic
  tomography with detectors on an open curve: an efficient reconstruction
  algorithm}, Inverse Problems, 24 (2008).
\newblock 055021.

\bibitem{Kunyansky-IP11}
\leavevmode\vrule height 2pt depth -1.6pt width 23pt, {\em Reconstruction of a
  function from its spherical (circular) means with the centers lying on the
  surface of certain polygons and polyhedra}, Inverse Problems, 27 (2011).
\newblock 025012.

\bibitem{LaCoZhBe-AO10}
{\sc J.~Laufer, B.~T. Cox, E.~Zhang, and P.~Beard}, {\em Quantitative
  determination of chromophore concentrations from 2d photoacoustic images
  using a nonlinear model-based inversion scheme}, Applied Optics, 49 (2010),
  pp.~1219--1233.

\bibitem{LiWa-PMB09}
{\sc C.~Li and L.~Wang}, {\em Photoacoustic tomography and sensing in
  biomedicine}, Phys. Med. Biol., 54 (2009), pp.~R59--R97.

\bibitem{Natterer-Prep11}
{\sc F.~Natterer}, {\em Photo-acoustic inversion in convex domains}, Preprint,
  (2011).

\bibitem{Nguyen-IPI09}
{\sc L.~V. Nguyen}, {\em A family of inversion formulas in thermoacoustic
  tomography}, Inverse Probl. Imaging, 3 (2009), pp.~649--675.

\bibitem{PaNuBu-PMB09}
{\sc G.~Paltauf, R.~Nuster, and P.~Burgholzer}, {\em Weight factors for limited
  angle photoacoustic tomography}, Phys. Med. Biol., 54 (2009), pp.~3303--3314.

\bibitem{PaNuHaBu-IP07}
{\sc G.~Paltauf, R.~Nuster, M.~Haltmeier, and P.~Burgholzer}, {\em Experimental
  evaluation of reconstruction algorithms for limited view photoacoustic
  tomography with line detectors}, Inverse Problems, 23 (2007), pp.~S81--S94.

\bibitem{PaSc-IP07}
{\sc S.~K. Patch and O.~Scherzer}, {\em Photo- and thermo- acoustic imaging},
  Inverse Problems, 23 (2007), pp.~S1--S10.

\bibitem{QiStUhZh-SIAM11}
{\sc J.~Qian, P.~Stefanov, G.~Uhlmann, and H.~Zhao}, {\em An efficient
  {Neumann}-series based algorithm for thermoacoustic and photoacoustic
  tomography with variable sound speed}, SIAM J. Imaging Sci., 4 (2011),
  pp.~850--883.

\bibitem{Razansky-NP09}
{\sc D.~Razansky, M.~Distel, C.~Vinegoni, R.~Ma, N.~Perrimon, R.~W. K\"oster,
  and V.~Ntziachristos}, {\em Multispectral opto-acoustic tomography of
  deep-seated fluorescent proteins in vivo}, Nature Photonics, 3 (2009),
  pp.~412--417.

\bibitem{RaNt-MP07}
{\sc D.~Razansky and V.~Ntziachristos}, {\em Hybrid photoacoustic fluorescence
  molecular tomography using finite-element-based inversion}, Med. Phys., 34
  (2007), pp.~4293--4301.

\bibitem{Razansky-OL07}
{\sc D.~Razansky, C.~Vinegoni, and V.~Ntziachristos}, {\em Multispectral
  photoacoustic imaging of fluorochromes in small animals}, Optics Lett., 32
  (2007), pp.~2891--2893.

\bibitem{Ren-CiCP10}
{\sc K.~Ren}, {\em Recent developments in numerical techniques for
  transport-based medical imaging methods}, Commun. Comput. Phys., 8 (2010),
  pp.~1--50.

\bibitem{ReAbBaHi-OL04}
{\sc K.~Ren, G.~S. Abdoulaev, G.~Bal, and A.~H. Hielscher}, {\em Algorithm for
  solving the equation of radiative transfer in the frequency domain}, Optics
  Lett., 29 (2004), pp.~578--580.

\bibitem{ReBaHi-SIAM06}
{\sc K.~Ren, G.~Bal, and A.~H. Hielscher}, {\em Frequency domain optical
  tomography based on the equation of radiative transfer}, SIAM J. Sci.
  Comput., 28 (2006), pp.~1463--1489.

\bibitem{ReGaZh-SIAM12}
{\sc K.~Ren, H.~Gao, and H.~Zhao}, {\em A hybrid reconstruction method for
  quantitative photoacoustic imaging}, Submitted,  (2012).

\bibitem{RiNt-PRE05}
{\sc J.~Ripoll and V.~Ntziachristos}, {\em Quantitative point source
  photoacoustic inversion formulas for scattering and absorbing media}, Phys.
  Rev. E, 71 (2005).
\newblock 031912.

\bibitem{RoRaNt-IEEE10}
{\sc A.~Rosenthal, D.~Razansky, and V.~Ntziachristos}, {\em Fast
  semi-analytical model-based acoustic inversion for quantitative optoacoustic
  tomography}, IEEE TRANSACTIONS ON MEDICAL IMAGING, 29 (2010), pp.~1275--1285.

\bibitem{ShCoZe-AO11}
{\sc P.~Shao, B.~Cox, and R.~J. Zemp}, {\em Estimating optical absorption,
  scattering, and {Grueneisen} distributions with multiple-illumination
  photoacoustic tomography}, Appl. Opt., 50 (2011), pp.~3145--3154.

\bibitem{SrPoJiDePa-AO05}
{\sc S.~Srinivasan, B.~W. Pogue, S.~Jiang, H.~Dehghani, and K.~D. Paulsen},
  {\em Spectrally constrained chromophore and scattering near-infrared
  tomography provides quantitative and robust reconstruction}, Applied Optics,
  44 (2005), pp.~1858--1869.

\bibitem{StUh-IP09}
{\sc P.~Stefanov and G.~Uhlmann}, {\em Thermoacoustic tomography with variable
  sound speed}, Inverse Problems, 25 (2009).
\newblock 075011.

\bibitem{StUh-IP11}
\leavevmode\vrule height 2pt depth -1.6pt width 23pt, {\em Thermoacoustic
  tomography arising in brain imaging}, Inverse Problems, 27 (2011).
\newblock 045004.

\bibitem{StUh-TAMS12}
\leavevmode\vrule height 2pt depth -1.6pt width 23pt, {\em Recovery of a source
  term or a speed with one measurement and applications}, Transactions AMS,
  (2012).

\bibitem{Steinhauer-Prep09B}
{\sc D.~Steinhauer}, {\em A reconstruction procedure for thermoacoustic
  tomography in the case of limited boundary data}, arXiv:0905.2954,  (2009).

\bibitem{Steinhauer-Prep09A}
\leavevmode\vrule height 2pt depth -1.6pt width 23pt, {\em A uniqueness theorem
  for thermoacoustic tomography in the case of limited boundary data},
  arXiv:0902.2838v2,  (2009).

\bibitem{TaLi-OE10}
{\sc C.~Tao and X.~Liu}, {\em Reconstruction of high quality photoacoustic
  tomography with a limited-view scanning}, Optics Express, 18 (2010),
  pp.~2760--2766.

\bibitem{TrLaZhNoLyBeCo-PPUIS11}
{\sc B.~E. Treeby, J.~G. Laufer, E.~Z. Zhang, F.~C. Norris, M.~F. Lythgoe,
  P.~C. Beard, and B.~T. Cox}, {\em Acoustic attenuation compensation in
  photoacoustic tomography: Application to high-resolution 3d imaging of
  vascular networks in mice}, in Photons Plus Ultrasound: Imaging and Sensing,
  A.~A. Oraevsky and L.~V. Wang, eds., 2011, p.~Y178992–Y978992.

\bibitem{TrZhCo-IP10}
{\sc B.~E. Treeby, E.~Z. Zhang, and B.~T. Cox}, {\em Photoacoustic tomography
  in absorbing acoustic media using time reversal}, Inverse Problems, 26
  (2010).
\newblock 115003.

\bibitem{Vogel-Book02}
{\sc C.~R. Vogel}, {\em {Computational} {Methods} for {Inverse} {Problems}},
  Frontiers in Applied Mathematics, SIAM, Philadelphia, 2002.

\bibitem{WaDaSrJiPoPa-JBO08}
{\sc J.~Wang, S.~C. Davis, S.~Srinivasan, S.~Jiang, B.~W. Pogue, and K.~D.
  Paulsen}, {\em Spectral tomography with diffuse near-infrared light:
  inclusion of broadband frequency domain spectral data}, J. Biomed. Opt., 13
  (2008).
\newblock 041305.

\bibitem{Wang-DM04}
{\sc L.~V. Wang}, {\em Ultrasound-mediated biophotonic imaging: a review of
  acousto-optical tomography and photo-acoustic tomography}, Disease Markers,
  19 (2004), pp.~123--138.

\bibitem{Wang-IEEE08}
\leavevmode\vrule height 2pt depth -1.6pt width 23pt, {\em Tutorial on
  photoacoustic microscopy and computed tomography}, IEEE J. Sel. Topics
  Quantum Electron., 14 (2008), pp.~171--179.

\bibitem{WeVa-Book95}
{\sc A.~J. Welch and M.~J.~C. Van-Gemert}, {\em {Optical-thermal} {Response} of
  {Laser} {Irradiated} {Tissue}}, Plenum Press, New York, 1995.

\bibitem{WuTaLi-JAP11}
{\sc D.~Wu, C.~Tao, and X.~Liu}, {\em Photoacoustic tomography in scattering
  biological tissue by using virtual time reversal mirror}, J. Appl. Phys., 109
  (2011).
\newblock 084702.

\bibitem{XuWa-PRE05}
{\sc M.~Xu and L.~Wang}, {\em Universal back-projection algorithm for
  photoacoustic computed tomography}, Phys. Rev. E, 71 (2005).
\newblock 016706.

\bibitem{XuWa-RSI06}
{\sc M.~Xu and L.~V. Wang}, {\em Photoacoustic imaging in biomedicine}, Rev.
  Sci. Instr., 77 (2006).
\newblock 041101.

\bibitem{XuWaAmKu-MP04}
{\sc Y.~Xu, L.~V. Wang, G.~Ambartsoumian, and P.~Kuchment}, {\em
  Reconstructions in limited view thermoacoustic tomography}, Med. Phys., 31
  (2004).
\newblock 724-733.

\bibitem{YaSuJi-PMB10}
{\sc L.~Yao, Y.~Sun, and H.~Jiang}, {\em Transport-based quantitative
  photoacoustic tomography: simulations and experiments}, Phys. Med. Biol., 55
  (2010), pp.~1917--1934.

\bibitem{YuJi-OL09}
{\sc Z.~Yuan and H.~Jiang}, {\em Simultaneous recovery of tissue physiological
  and acoustic properties and the criteria for wavelength selection in
  multispectral photoacoustic tomography}, Optics Letters, 34 (2009),
  pp.~1714--1716.

\bibitem{YuWaJi-OE07}
{\sc Z.~Yuan, Q.~Wang, and H.~Jiang}, {\em Reconstruction of optical absorption
  coefficient maps of heterogeneous media by photoacoustic tomography coupled
  with diffusion equation based regularized {Newton} method}, Optics Express,
  15 (2007), pp.~18076--18081.

\bibitem{ZaSvAxScArAn-OE09}
{\sc A.~D. Zacharopoulos, P.~Svenmarker, J.~Axelsson, M.~Schweiger, S.~R.
  Arridge, and S.~Andersson-Engels}, {\em A matrix-free algorithm for multiple
  wavelength fluorescence tomography}, Optics Express, 17 (2009),
  pp.~3025--3035.

\bibitem{Zemp-AO10}
{\sc R.~J. Zemp}, {\em Quantitative photoacoustic tomography with multiple
  optical sources}, Applied Optics, 49 (2010), pp.~3566--3572.

\bibitem{ZhSe-OE11}
{\sc B.~Zhu and E.~M. Sevick-Muraca}, {\em Reconstruction of sectional images
  in frequency-domain based photoacoustic imaging}, Optics Express, 19 (2011),
  pp.~23286--23297.

\end{thebibliography}
\bibliographystyle{siam}

\end{document}